\newtheorem{theorem}{Theorem}
\newtheorem{assumption}{Assumption}
\newtheorem{lemma}{Lemma}
\newtheorem{remark}{Remark}
\newtheorem{definition}{Definition}
\newcommand{\N}{\mathcal{N}}
\newcommand{\mc}{\mathcal}
\newcommand{\mrm}{\mathrm}
\newcommand{\bN}{{\mathbb N}}
\newcommand{\R}{{\mathbb R}}
\newcommand{\diag}{{\rm diag}}
\newcommand{\be}{{\bf e}}
\newcommand{\lam}{\lambda}
\newcommand{\bfp}{\mathbf{p}}
\newcommand{\pers}{\mathrm{pers}}
\newcommand{\res}{\mathrm{res}}
\newcommand{\rd}{\mathrm{d}}
\newcommand{\bd}{\mathrm{bd}}
\newcommand{\sgn}{\mathrm{sgn}}
\newcommand{\co}{\mathrm{co}}
\newcommand{\argmin}{\mathop{\rm arg~min}\limits}
\newcommand{\blue}[1]{\textcolor{blue}{#1}}
\begin{document}

%\title{Distributed Coverage Maintenance for Quadcopters via Nonsmooth Control Barrier Functions}
\title{Distributed Coverage Hole Prevention for \\Visual Environmental Monitoring with Quadcopters\\ via Nonsmooth Control Barrier Functions}

\author{Riku Funada$^{1}$, Mar\'ia Santos$^{2}$, Ryuichi Maniwa$^{1}$, Junya Yamauchi$^{3}$,\\
Masayuki Fujita$^{3}$, Mitsuji Sampei$^{1}$, and Magnus Egerstedt$^{4}$

%\begin{tikzpicture}[remember picture, overlay]
%      \node[minimum width=4in,font=\bfseries] at ([yshift=-1cm]current page.north)  {This paper has been accepted for publication in the IEEE Transactions on Robotics.
%Please cite the paper as: R. Funada, M. Santos, R. Maniwa, J. Yamauchi, M. Fujita, M. Sampei, and M. Egerstedt,\newline
%``Distributed Coverage Hole Prevention for Visual Environmental Monitoring with Quadcopters via Nonsmooth Control Barrier Functions," 2023.};
%    \end{tikzpicture}

\thanks{\copyright 2023 IEEE.  Personal use of this material is permitted.  Permission from IEEE must be obtained for all other uses, in any current or future media, including reprinting/republishing this material for advertising or promotional purposes, creating new collective works, for resale or redistribution to servers or lists, or reuse of any copyrighted component of this work in other works.}
\thanks{*This work was supported by JSPS KAKENHI Grant Number 22K14275. The work by Magnus Egerstedt was supported by grant CNS-2233783 from the US National Science Foundation.}
\thanks{$^{1}$R. Funada, R. Maniwa and M. Sampei are with the Department of Systems and Control Engineering, Tokyo Institute of Technology, Tokyo 152-8550, Japan
{\tt\footnotesize \{\href{mailto:funada@sc.e.titech.ac.jp}{funada},\href{mailto:sampei@sc.e.titech.ac.jp}{sampei}\}@sc.e.titech.ac.jp},
\href{mailto:maniwa@sl.sc.e.titech.ac.jp} {\tt\footnotesize {maniwa}@sl.sc.e.titech.ac.jp}}
\thanks{$^2$M. Santos is with the Department of Mechanical and Aerospace Engineering, Princeton University, Princeton, NJ 08544, USA 
\href{mailto:maria.santos@princeton.edu}{\tt\footnotesize {maria.santos}@princeton.edu}}
\thanks{$^{3}$J. Yamauchi and M. Fujita are with the Department of Information Physics and Computing, The University of Tokyo, Tokyo 113-8656, Japan 
{\tt\footnotesize \{\href{mailto:junya_yamauchi@ipc.i.u-tokyo.ac.jp}{junya\_yamauchi},\href{mailto:masayuki_fujita@ipc.i.u-tokyo.ac.jp}{masayuki\_fujita}\}@ipc.i.u-tokyo.ac.jp}}
\thanks{$^{4}$M. Egerstedt is with the Department of Electrical Engineering and Computer Science, University of California Irvine,
Irvine, CA 92697, USA
\href{mailto:magnus@uci.edu}{\tt\footnotesize {magnus}@uci.edu}}
}

\maketitle
\thispagestyle{fancy}
\fancyhead{}
\fancyhead[C]{This paper has been accepted for publication in the IEEE Transactions on Robotics.\\
%Please cite the paper as: 
%R. Funada, M. Santos, R. Maniwa, J. Yamauchi, M. Fujita, M. Sampei, and M. Egerstedt, ``Distributed Coverage Hole Prevention for Visual Environmental Monitoring with Quadcopters via Nonsmooth Control Barrier Functions," IEEE Transactions on Robotics, 2023 (Early Access). 
Citation information: DOI 10.1109/TRO.2023.3347132 \hspace{5mm}
URL: \url{https://ieeexplore.ieee.org/document/10374239}}
\fancyhead[RO]{\footnotesize \thepage}
\IEEEpeerreviewmaketitle

\begin{abstract}
This paper proposes a distributed coverage control strategy for quadcopters equipped with downward-facing cameras that prevents the appearance of unmonitored areas in between the quadcopters' fields of view (FOVs). We derive a necessary and sufficient condition for eliminating any unsurveilled area that may arise in between the FOVs among a trio of quadcopters by utilizing a power diagram, i.e. a weighted Voronoi diagram defined by radii of FOVs. Because this condition can be described as logically combined constraints, we leverage nonsmooth control barrier functions (NCBFs) to prevent the appearance of unmonitored areas among a team's FOV. We then investigate the symmetric properties of the proposed NCBFs to develop a distributed algorithm. The proposed algorithm can support the switching of the NCBFs caused by changes of the quadcopters composing trios. The existence of the control input satisfying NCBF conditions is analyzed by employing the characteristics of the power diagram. The proposed framework is synthesized with a coverage control law that maximizes the monitoring quality while reducing overlaps of FOVs. The proposed method is demonstrated in simulation and experiment.
\end{abstract}

\begin{IEEEkeywords}
Control barrier functions, Multi-robot systems, Optimization and optimal control, Sensor networks
\end{IEEEkeywords}
\section{Introduction}

\begin{figure}
    \centering
    \includegraphics[width=0.75\linewidth]{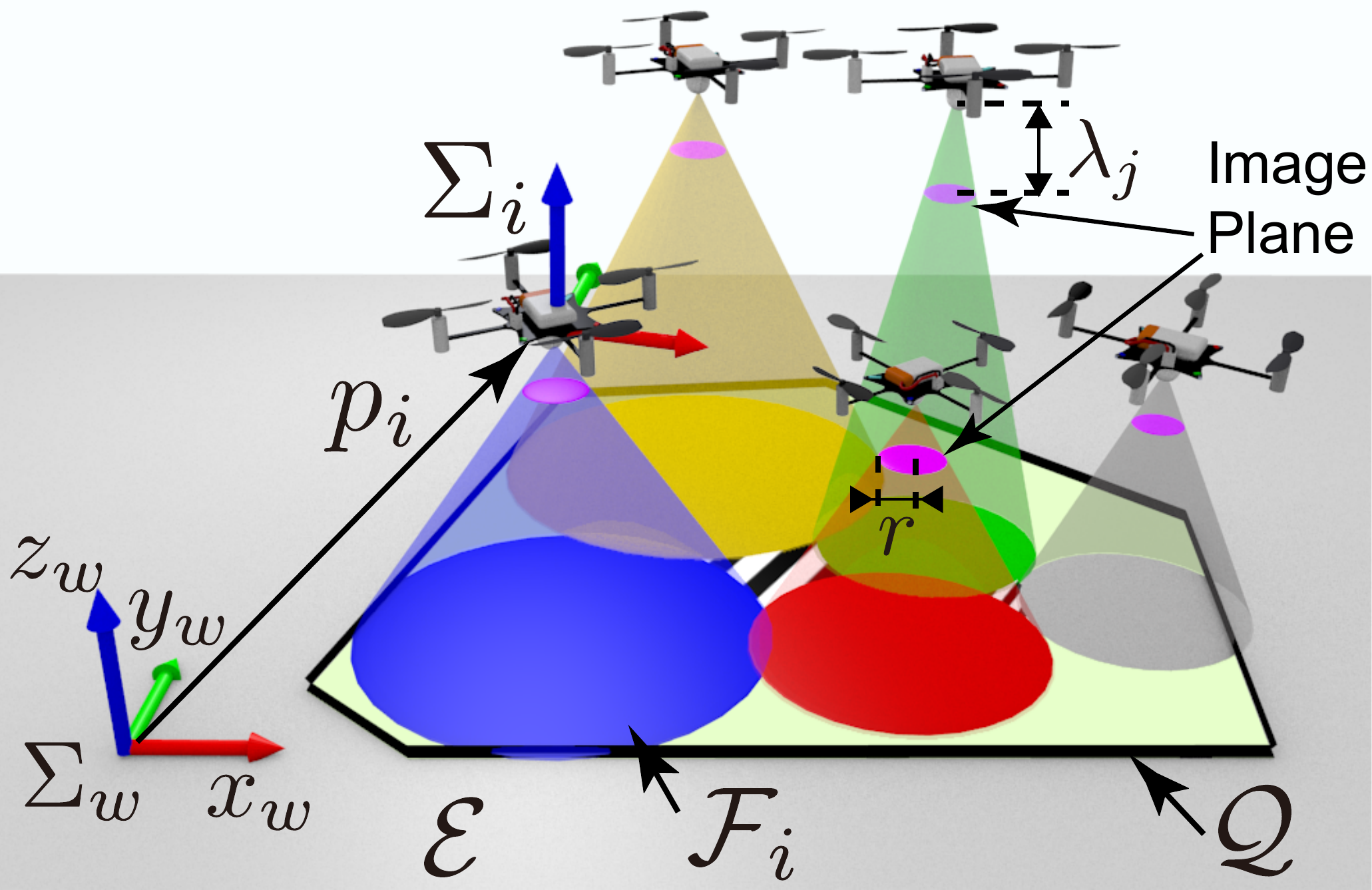}
    \caption{Proposed scenario. The team of quadcopters monitors a mission space $\mc Q$ specified within the environment $\mc E$.
    Each quadcopter mounts a downward-facing camera with a circular field of view (FOV) that is adjustable via its position $p_i$ and focal length $\lambda_i$. 
    The hatched regions in between the FOVs of quadcopters represent unmonitored areas to be prevented.}
    \label{fig:scenario}
\end{figure}

Visual environmental monitoring has been utilized to collect information in a broad range of applications such as urban traffic \cite{Zhou15}, home/facility security \cite{Fleck08}, terrain data \cite{Cole09}, and natural phenomena \cite{Zhou2005,Erdelj17}.
Among the various types of robots utilized in these scenarios, unmanned aerial vehicles with visual sensors afford capturing vast environments from the sky. 
In addition, some unmanned aerial vehicles (e.g., quadcopters) are authorized to fly at relatively lower altitudes \cite{Shakhatreh19}, which makes them suitable for capturing information at a closer distance to a surveilled area, which results in more detailed information. However, flying at a lower altitude typically leads to a smaller field of view (FOV) of the quadcopter, which in turn can make difficult to capture large-scale phenomena in vast environments. 

The restrictions associated with the FOV of a single quadcopter can be addressed by employing a team of quadcopters to cooperatively capture extensive environments. 
For instance, the research in \cite{Meng15} and \cite{Ruiz18} presents a surveillance system that can produce a single high-resolution panoramic video of a large target area in real-time by stitching together multiple videos from quadcopters. 
Furthermore, several empirical studies suggest that wider FOVs result in higher performance for human operators in human-robot interaction \cite{PREWETT10}. Thus, such collaborative monitoring systems could contribute to better situation awareness compared with receiving multiple disjoint videos provided individually by the quadcopters. 
However, the advantages provided by the joint monitoring efforts of a group of drones can be hindered by the potential appearance of unmonitored areas within the target environment, as illustrated in Fig.~\ref{fig:scenario}. 
Even though a quadcopter flies at a relatively lower altitude than conventional aerial vehicles, the altitude of quadcopters in environmental monitoring could exceed $100\,{\rm m}$ in some applications, e.g., \cite{Boon17}. In such cases, the size of the quadcopters' FOVs becomes large, resulting in the appearance of a non-negligible unmonitored area between FOVs.
Unmonitored areas are undesirable for situational awareness and search missions since they can lead to missing important events or failures when stitching videos together. 
Besides, such appearances of unmonitored areas among a team could result in overlooking potential hazards in safety-critical applications requiring real-time information, such as the monitoring of construction sites to prevent accidents \cite{Ham16}.

This paper presents a coverage strategy for a team of quadcopters that prevents the appearance of unmonitored areas in-between FOVs of the quadcopters. We achieve this goal by modifying the nominal input maximizing the monitoring performance minimally invasive way so that the appearance of unmonitored areas is restricted. Specifically, we develop a distributed algorithm that employs nonsmooth control barrier functions (NCBFs), logically combined control barrier functions (CBFs), to prevent the appearance of a hole. Then, as the nominal input utilized with the algorithm, we propose the coverage control law that maximizes the monitoring performance of the team according to the perspective quality of the quadcopters' cameras while expanding the surveyed area of a team by avoiding redundant observation of the multiple quadcopters.

\subsection{Related Work}
Cooperative environmental monitoring has been studied extensively in the context of coverage control, which yields an optimal deployment of mobile sensors with respect to a spatial field in a distributed fashion \cite{Cortes2004,Cassandras2005,Martinez2007}. 
The coverage control problem has been considered in a variety of scenarios, 
including networks of Pan-Tilt-Zoom (PTZ) cameras \cite{Ding2012,Forstenhaeusler2015,Arslan2018,Hatanaka20} as well as teams of ground \cite{Laventall2008,Gusrialdi2008,Dimitra15,Chen21} and aerial robots \cite{Schwager2011,Papatheodorou2017,Bousias19,Renzaglia20}. 

In the context of aerial robots, the work in \cite{Schwager2011} presents a coverage strategy for a heterogeneous robot team where the cameras present various degrees of freedom depending on the robot type, e.g., PTZ cameras, quadcopters, etc. 
The proposed control law captures a 2D environment with a maximal resolution, where a better coverage quality is allocated to an area monitored by multiple cameras as opposed to the same area covered by a single camera.
The work \cite{Papatheodorou2017} proposes a solution that reduces FOV overlaps while guaranteeing the altitude of quadcopters within the predefined range.
A more challenging scenario is considered in \cite{Bousias19}, where each quadcopter is mounted with a PTZ camera, making their FOVs become ellipses on the 2D environment. 
The work \cite{Renzaglia20} presents the unified framework for both exploration and coverage problems in an unknown 3D environment with quadcopters. Both problems are formulated to maximize the portion of observed environments while minimizing the overlap of FOVs. 
Despite the variety of advances presented in these works, the proposed controllers do not address the appearance of unmonitored areas in between FOVs, which may result in overlooking critical events occurring in the environment.

The work in \cite{Mahboubi2014,Mahboubi2017} reduces the unmonitored area in the whole environment, rather than eliminating unobserved regions in-between FOVs, by maximizing the area of the covered region. Both works consider the agents with heterogeneous and fixed sensing regions. The uncovered areas are detected by utilizing a multiplicatively weighted Voronoi diagram having curved boundaries, where the weight of each sensor is equal to its sensing radius. The algorithms guarantee increased area coverage, but the elimination of unmonitored areas is not assured. The rejection of unmonitored regions is also considered in \cite{So-In19}, where the total moving distance of mobile sensors is minimized. Still, this work assumes the sensing radius of all agents is fixed and identical. These methods, which assume a fixed or same measurement range, might not be readily extendable to cooperative monitoring with a team of quadcopters, where the size of FOVs dynamically evolves according to the changes of the zoom level or altitude.
The work \cite{Cortes2006} proposes a solution based on a nonsmooth gradient flow for a disk-covering problem, where each agent can control its sensing radius. Because a disk-covering problem corresponds to completely covering the entire environment with disks of minimum radius, the proposed control law eliminates unobserved areas. The resulting final deployment renders the radius of sensing fields of all agents identical. However, it is difficult to cover the entire environment with high-resolution images in case the whole mission space is large compared with FOVs of quadcopters, or the number of quadcopters is limited. 
For instance, the monitoring area of \cite{Shah20} reaches $2\,\textrm{km}^2$.
In such cases, rather than overextending FOVs for covering a whole area, preventing unmonitored areas around targeted regions under a team's surveillance could yield more detailed information for contributing to the monitoring purposes.
Furthermore, if the environment has non-uniform importance, allowing each quadcopter to decide on its sensing radius could lead to better monitoring performance because a team can adjust the resolution of the image according to the importance of the area.
To achieve such a visual monitoring strategy, the team should maximize the objective function for visual monitoring quality, as in \cite{Schwager2011,Papatheodorou2017,Bousias19,Renzaglia20}. However, in case the reduction or elimination of unmonitored areas is already encoded as the objective of the team as in \cite{Mahboubi2014,Mahboubi2017,So-In19,Cortes2006}, the problem becomes a multi-objective optimization, with the associated challenges of balancing two different requirements.

In order to address the aforementioned issue, this paper prevents the appearance of a hole by means of a constraint-based approach, where the visual monitoring quality is maximized as long as unmonitored areas in-between a team are prevented. More specifically, the behavior of the quadcopters is constrained to prevent unsurveilled areas in-between quadcopters by CBFs. 
CBFs provide the forward invariance property to the admissible set of the robot state space \cite{Egerstedt21,Ames2017,Ames2019_CBF_thapp}.
CBFs have been employed in various applications ranging from bipedal robots \cite{Nguyen16,Ahmadi22} to swarm robots \cite{Wang17_swarm,Funada20_CSM,Srinivasan21,Nishimoto22} and seeing success also in the context of environmental monitoring \cite{Notomista21,Santos2019mrs,Dan21}. 
The work \cite{Notomista21} proposes a control framework that guarantees the robot's battery is never depleted during a monitoring task. Several charging models, such as solar lights and charging stations, are interpreted as a time-varying function defined in the mission space. Then, a time-varying CBF is developed to ensure energy constraints of the robots. 
The work \cite{Santos2019mrs} also utilizes the time-varying CBF to consider time-varying coverage control, where a density function specifying the importance of field changes dynamically. 
The proposed methods do not impose any requirements on the rate of change of the density functions and can be executed in a distributed fashion. 
In \cite{Dan21}, a notion of information reliability is encoded through a time-varying density function as in \cite{Hubel2008}, which decays if a region is not surveilled for a long period. Then, a patrolling motion is achieved by a time-varying CBF together with other safety constraints. 
In \cite{Wu16,Wang17}, an extended version of CBFs are utilized to a quadcopter having nonlinear dynamics.

\subsection{Proposed Approach and Contributions}

In this paper, we propose a control method that prevents unmonitored areas to arise in-between FOVs of quadcopters equipped with downward facing cameras while monitoring a mission area. First, CBFs preventing the appearance of unmonitored areas in between a team is designed by utilizing a power diagram. The power diagram \cite{Aurenhammer1987} is a weighted Voronoi diagram defined by the different radii of FOVs, where the boundaries of the diagram are composed of lines known as the radical axis. 
We leverage this favorable property of the power diagram, compared with the multiplicatively weighted Voronoi partitions in \cite{Mahboubi2014,Mahboubi2017} that generate more complicated Voronoi cells composed of Appollonian circles, to analyze its feasibility and develop distributed algorithm.
Furthermore, we provide the coverage control law that intends to monitor important regions while preventing too much overlap between the fields of view of quadcopters. 
The presented coverage controller is synthesized with the aforementioned CBFs to achieve the maximum visual monitoring quality while preventing the unmonitored areas in-between a team.

The contributions of this paper are as follows.
%
% Contribution
\begin{itemize}
% First Contribution
\item[(a)]  
We develop a distributed algorithm preventing an unmonitored area in between a team of quadcopters based on CBFs. 
For this goal, the appearance of unsurveilled areas is encoded by the vertices of a power diagram. 
Because the derived constraints contain a Boolean logic, we leverage a nonsmooth control barrier function (NCBF) \cite{Glotfelter18} to synthesize the controller that prevents an unmonitored area. 
We then formulate an algorithm that allows the jumps in the value of NCBF caused by the changes in the Delaunay graph induced by the power diagram.
%
% Second Contribution
\item[(b)] 
We verify the feasibility of the proposed CBF-based methods. 
The existence of a control input that satisfies the constraints is closely related to how the CBF evolves together with the system dynamics. For example, in \cite{Notomista18,Nishimoto22}, feasibility was established by showing that the gradient of the CBF does not vanish. In our case, this computation is not feasible and instead, we rely on the special structure of the power diagram to derive a simplified form of the designed CBFs' derivatives. Then, we show that the set yielding the derivatives of CBF zero vectors can be neglected in practice. This result guarantees that the proposed CBF-based controller can generically generate the control input to prevent an unmonitored area in between a team during monitoring mission.
\item[(c)]
The coverage control law in our conference paper \cite{Funada19} is combined with the above CBF-based methods to showcase its utility.
This visual coverage control serves two purposes: (i) improving the performance of the monitoring, considering the spatial sensing quality of cameras as in \cite{Arslan2018}, and (ii) reducing the overlap between FOVs of the quadcopters to cover a large-scale area by a team.
\item[(d)]
The practical usefulness of the proposed monitoring strategy is demonstrated by simulations and experiments. 
\end{itemize}

In our previous work \cite{Funada20}, we proposed the preliminary results of (a), which does not allow a distributed computation without a strong assumption. 
More specifically, when the algorithm calculates the derivatives of CBFs for Quadcopter~$i$, it presumes the other quadcopters are fixed in the environment. 
In this paper, the algorithm is improved to achieve a distributed computation without the assumption by utilizing the symmetric properties of the proposed CBFs. 
We newly analyze the validity of the proposed NCBFs in (b), which was not discussed in our previous conference papers \cite{Funada20,Funada19}.
The coverage control law discussed in (c) is proposed in our previous work \cite{Funada19}, but we only show the utility of the control law with the preliminary algorithm that overly restricts a teams's movement for preventing the appearance of unmonitored area in between quadcopters. 
Finally, this paper provides comprehensive simulation and experimental results demonstrating the effectiveness of the proposed methods.

\section{Preliminaries} \label{sec:pre}

In this section, we introduce several variations of CBFs, which can synthesize logically combined constraints in a system exhibiting graph changes. The methods will be needed to formulate the control algorithms preventing coverage holes in-between agents in Section~\ref{sec:hole_prev}.

Let us consider a control-affine system
\begin{align} \label{eq:control_affine}
    \dot x(t) = f(x(t)) + g(x(t)) u(x(t), t),~x(t_0),~t_0 \in \R, 
\end{align}
where $f: \R^n\to\R^n$ and $g: \R^n\to \R^m$ are locally Lipschitz continuous while $u: \R^n \times \R \to \R^m$ is measurable and locally bounded in both arguments.
In this formulation, $u$ is not necessarily continuous. Therefore, the solution might not exist for the differential equation \eqref{eq:control_affine}.
To yield a system to which solutions exist, a system \eqref{eq:control_affine} can be turned into a differential inclusion via Filippov's operator $K[f+gu]:\R^n \times \R \to 2^{\R^n}$ 
\begin{align} \label{eq:diff_inc}
\dot x(t) \in K[f+gu](x(t),t), x(t_0) = x_0, t_0 \in \R,
\end{align}
with 
\begin{align}
&K[f+gu](x', t') \nonumber\\
&=\!\co\! \left\{ \lim_{i\to \infty} f(x_i)\!+\! g(x_i) u(x_i, t'): x_i \!\to\! x', x_i \!\notin\! S_f, S \right\},
\end{align}
where $S_f$ is a particular zero-Lebesgue-measure set and $S$ is an arbitrary zero-Lebesgue-measure set.
As detailed in \cite{Cortes2008,Glotfelter19}, a Carath\'{e}odory solution, an absolutely continuous function $x:[t_0, t_1] \to \R^n$ such that $\dot x(t) \in K[f+gu](x(t),t)$ almost everywhere on $t\in [t_0, t_1]$, always exists in the above system.

Hereafter, we introduce the CBF-based methods that confine the state of the system \eqref{eq:diff_inc} in the admissible set. 
While the original papers \cite{Glotfelter18,Glotfelter19} assume a more general model than \eqref{eq:diff_inc}, we proceed with \eqref{eq:diff_inc} as it is sufficient for our objective. 
In addition, we sometimes modify the notations in \cite{Glotfelter18,Glotfelter19} to suit our paper.
Note that, in practice, the method introduced in this section does not require the computation of Filippov's operator; instead, more convenient algorithm is formulated.

CBFs have been employed for guaranteeing the forward invariance property of the set $\mc C$, which represents constraints to be satisfied during the execution of the task.
%in which the state $x$ should be confined during the execution of the task. 
We assume that a set $\mc C$ can be defined as the superlevel set of a barrier function $h(x,t): \mc D \subset \R^n \times \R_{\geq t_0} \to \R$, namely $\mc C = \{ (x', t') \in \R^n \times \R_{\geq t_0} \mid h(x', t') \geq 0 \}$. 
Then, the forward invariance of the set $\mc C$ can be accomplished by ensuring
\begin{align} \label{eq:CBF_cond_default}
    \dot h(x(t), t) \geq -\alpha(h(x(t), t)),~a.e.~t \in [t_0, t_1],
\end{align}
for every Carath\'{e}odry solution, 
for some locally Lipschitz extended class-$\mc K$ function $\alpha : \R \to \R$ \cite{Glotfelter19}.

Let us formulate the time-varying nature of \eqref{eq:CBF_cond_default} to better suit the problem we consider, where the constraints are added or subtracted according to the changes in the network graph of a team.
First, the following two definitions are necessitated.
%Switching sequence
\begin{definition} \cite[Definition 1]{Glotfelter19} \label{def:sw_sq}
A sequence $\{\tau^{\rm k}\}_{\rm k=1}^{\infty}$ is a switching sequence for \eqref{eq:diff_inc}
%\eqref{eq:si_dyn} 
if and only if it is strictly increasing, unbounded, and $\tau^1 = t_0$. With respect to $\{\tau^{\rm k}\}_{\mrm k=1}^{\infty}$, let 
\begin{align}
    {\rm K}_{t_1} = \inf \left\{ {\rm K} \in \bN \;\middle|\; [t_0, t_1] \subset \bigcup_{\mrm k=1}^{\mrm K} [\tau^{\mrm k}, \tau^{\mrm k+1} )\right\}.
\end{align}
\end{definition}
\begin{definition}  \cite[Definition 2]{Glotfelter19} \label{def:hyb_fw_inv}
A set $\mc C \subset \R^n \times \R$ is hybrid forward invariant with respect to \eqref{eq:diff_inc} and a switching sequence $\{\tau^{\mrm k}\}_{\mrm k=1}^\infty$ for \eqref{eq:diff_inc} if and only if for every Carath\'{e}odry solution starting from $x_0$ at $t_0$,
\begin{align}
    (x(\tau^{\mrm k}), \tau^{\mrm k}) \in \mc C, \forall {\mrm k} \leq \mrm K_{t_1} \Rightarrow (x(t), t) \in C, \forall t \in [t_0, t_1]. \nonumber
\end{align}
\end{definition}
Definition~\ref{def:hyb_fw_inv} signifies the desirable behavior of the system. 
Namely, if the state is inside of the admissible set $\mc C$ at the initial time and the state does not leave the set at any jump, then the state remains in the set.
In the scenario introduced later on, the admissible set for Quadcopter $i$ switches every time its neighboring agents change.
Therefore, by achieving the hybrid forward invariance of the set prohibiting the unsurveilled area in-between a team, the unmonitored area does not appear if the changes in the graph make no holes instantaneously.
Furthermore, even if a change in the graph results in the appearance of a hole, such a hole can be eliminated by the general property of CBFs that yields the safe set attractive \cite{Glotfelter19,Ames2019_CBF_thapp}. This will be demonstrated in simulation in Section~\ref{ssec:sim_nine}.

%To better resemble
To capture the above requirements, we presume $h$ is not time-varying during each switching interval $t' \in [\tau^{\mrm k}, \tau^{\mrm k+1})$, which can be modeled with $h^{\mrm k} : \mc D^{\mrm k} \subset \R^n \to \R$ as 
\begin{align} \label{eq:HNCBF_NCBF}
    h(x', t') = h^{\mrm k}(x'),~\forall {\mrm k} \in \bN, \forall t' \in [\tau^{\mrm k}, \tau^{\mrm k+1}), \forall x'\in \R^n,
\end{align}
where the superlevel set of $h^{\mrm k}$ is denoted as $\mc C^{\mrm k} = \{ x' \in \R^n \mid h^{\mrm k}(x') \geq 0 \}$.

In case $\mc C$ is a zero-superlevel set of $h$ in \eqref{eq:HNCBF_NCBF}, the properties of $h^{\mrm k}$, especially its time derivative $\dot h^{\mrm k}$ as appeared in \eqref{eq:CBF_cond_default} during $t' \in [\tau^{\mrm k}, \tau^{\mrm k+1})$, need to be investigated to guarantee the hybrid forward invariance of the set $\mc C$.
Furthermore, if there exists nonsmoothness in the barrier function $h^{\mrm k}$, e.g., as a result of logically combined constraints as in this paper, calculating $\dot h^{\mrm k}$ can no longer be possible via the usual chain rule.

The barrier function embracing logically combined requirements is proposed in \cite{Glotfelter18} as nonsmooth control barrier functions (NCBFs). The result of \cite{Glotfelter18} revealed that the Boolean operations $\land$, $\lor$, and $\lnot$ are formulated as 
\begin{equation} \label{eq:Bool_logic}
\begin{split}
    &h^{\mrm k}(x') = \min \{ h_1^{\mrm k}(x'), h_2^{\mrm k}(x') \} := h_1^{\mrm k} \land h_2^{\mrm k} \\
    &h^{\mrm k}(x') = \max \{ h_1^{\mrm k}(x'), h_2^{\mrm k}(x') \} := h_1^{\mrm k} \lor h_2^{\mrm k} \\
    &h^{\mrm k}(x') = -h_1^{\mrm k}(x') := \lnot h_1^{\mrm k}.
\end{split}
\end{equation}
at each $x' \in \mc D$. % with denoting $h(x')$ as a Boolean NCBF (BNCBF).
Let us assume that the NCBF $h^{\mrm k}$ described in \eqref{eq:Bool_logic} has the following favorable properties. Note that the definition from \cite{Glotfelter18} has been modified to suit the terminology of this paper.
\begin{definition}\cite[Definition~2]{Glotfelter18} \label{def:candidate_NCBF}
A locally Lipschitz function $h^{\mrm k}: \mc D^{\mrm k} \subset \R^n \to \R$, where $\mc D^{\mrm k}$ is an open and connected set, is a candidate NCBF if the set $\mc C^{\mrm k}$ is nonempty.
\end{definition}
\begin{definition}\cite[Definition~6]{Glotfelter18} \label{def:sm_comp}
    A candidate NCBF $h^{\mrm k}: \mc D^{\mrm k} \subset \R^n \to \R$ composed by the Boolean operator as in \eqref{eq:Bool_logic} is smoothly composed if each component candidate NCBF $h_i^{\mrm k} : \mc D^{\mrm k} \subset \R^n \to \R$ is continuously differentiable. 
\end{definition}
For ensuring the forward invariance of the set $\mc C^{\mrm k}$ described by a smoothly composed candidate NCBFs, the work \cite{Glotfelter18} introduces the almost-active set and the almost-active gradient. 
%
%
% D Definition
\begin{definition} \cite[Definition~7]{Glotfelter18} \label{def:alm_act_grad}
Given $\epsilon > 0$ and two candidate NCBFs $h_1^{\mrm k}, h_2^{\mrm k}: \mc D^{\mrm k} \subset \R^n \to \R$, the almost-active set of functions for a candidate NCBF given by $h^{\mrm k}=h_1^{\mrm k} \land h_2^{\mrm k}$ or $h^{\mrm k} = h_1^{\mrm k} \lor h_2^{\mrm k}$ is defined at each $x' \in \mc D^{\mrm k}$ as
\begin{align} \label{eq:index_set}
    I_{\epsilon}^{\mrm k}(x') = \left\{ i \mid \| h_i^{\mrm k}(x') - h^{\mrm k}(x') \| \leq \epsilon \right\}.
\end{align}
The almost-active gradient of an NCBF, denoted by $\partial_\epsilon h^{\mrm k}:\mc D^{\mrm k} \subset \R^n \to 2^{\R^n}$, at a point $x' \in \mc D^{\mrm k}$ is
\begin{align}
    \partial_{\epsilon} h^{\mrm k}(x') = \co\bigcup_{i \in I_\epsilon^{\mrm k}(x')} \partial h_i^{\mrm k}(x').
\end{align}
\end{definition}

The requirement \eqref{eq:CBF_cond_default} expressed with the almost-active gradient is integrated as a constraint in a quadratic programming (QP) problem to generate the control input guaranteeing the forward invariance of the set $\mc C^{\mrm k}$.
If the formulated QP can always generate the control input, the designed NCBF is denoted as a valid NCBF as formally defined in the following theorem, where $\langle\cdot,\cdot\rangle$ denotes a set-valued inner product.

\begin{theorem} \cite[Theorem~3]{Glotfelter18} \label{thm:Valid_NCBF}
Let $h^{\mrm k}:\mc D^{\mrm k} \subset \R^n \to \R$ be a smoothly composed candidate NCBF, as in Definition~\ref{def:alm_act_grad}. If there exists $\epsilon > 0$ and a locally Lipschitz extended class-$\mc K$ function $\alpha: \R\to \R$ such that the QP
\begin{subequations}
\begin{align}
    u^{\mrm k} &\in \argmin_{u\in \R^m} u^\top A(x')u + b(x')^\top u  \\
    &~\mathrm{s.t.}~ \left\langle \partial_{\epsilon}h^{\mrm k}(x'), f(x')+g(x')u \right\rangle \geq - \alpha(h^{\mrm k}(x')),
\end{align}
\end{subequations}
with $A: \mc D^{\mrm k} \subset \R^n \to \R^{m\times m}$, $b: \mc D^{\mrm k} \subset \R^n\to \R^m$ continuous, has a solution for every $x'\in \mc D^{\mrm k}$ and $u^{\mrm k}$ is measurable and locally bounded, then $h^{\mrm k}$ is a valid NCBF for \eqref{eq:diff_inc}.
\end{theorem}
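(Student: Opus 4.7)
The plan is to show that the zero-superlevel set $\mathcal{C}^{\mathrm{k}}$ is forward invariant along every Carath\'{e}odory solution of the closed-loop inclusion obtained by substituting the QP solution $u^{\mathrm{k}}$ into \eqref{eq:diff_inc}. The standard route is: (i) show that $t \mapsto h^{\mathrm{k}}(x(t))$ is absolutely continuous and satisfies the scalar differential inequality $\dot h^{\mathrm{k}} \geq -\alpha(h^{\mathrm{k}})$ almost everywhere; (ii) apply a comparison lemma to the scalar ODE $\dot y = -\alpha(y)$, which preserves the sign of $y$ because $\alpha$ is extended class-$\mathcal{K}$, to conclude $h^{\mathrm{k}}(x(t)) \geq 0$ on $[t_0,t_1]$ whenever $h^{\mathrm{k}}(x(t_0)) \geq 0$. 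The content of the theorem is essentially step (i) in the nonsmooth setting.

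For step (i), I would first recall from Clarke's nonsmooth calculus that the min/max of finitely many continuously differentiable functions is locally Lipschitz and regular, with Clarke generalized gradient equal to the convex hull of the gradients of the active components, i.e.\ $\partial h^{\mathrm{k}}(x') = \mathrm{co}\,\bigcup_{i \in I_0^{\mathrm{k}}(x')} \nabla h_i^{\mathrm{k}}(x')$ where $I_0^{\mathrm{k}}$ is the exact-active set obtained by setting $\epsilon=0$ in \eqref{eq:index_set}. Because $I_0^{\mathrm{k}}(x') \subset I_\epsilon^{\mathrm{k}}(x')$ for every $\epsilon > 0$, we have the crucial inclusion $\partial h^{\mathrm{k}}(x') \subset \partial_\epsilon h^{\mathrm{k}}(x')$. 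Consequently, any $u$ that satisfies the QP constraint $\langle \partial_\epsilon h^{\mathrm{k}}(x'),\, f(x')+g(x')u\rangle \geq -\alpha(h^{\mathrm{k}}(x'))$, interpreted as the inequality holding for every element of the set-valued inner product, automatically satisfies the same bound with $\partial h^{\mathrm{k}}$ in place of $\partial_\epsilon h^{\mathrm{k}}$.

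To convert this pointwise inequality into a statement about $\dot h^{\mathrm{k}}(x(t))$, I would invoke the chain rule for regular locally Lipschitz functions composed with absolutely continuous trajectories (see e.g. Clarke's or Cort\'es' nonsmooth calculus results cited earlier as \cite{Cortes2008}): for almost every $t$, the derivative $\tfrac{d}{dt}h^{\mathrm{k}}(x(t))$ exists and equals $\langle \zeta,\dot x(t)\rangle$ for every $\zeta \in \partial h^{\mathrm{k}}(x(t))$. Combining this identity with the inclusion above and the QP constraint evaluated along $\dot x(t) \in K[f+gu^{\mathrm{k}}](x(t),t)$, one obtains $\dot h^{\mathrm{k}}(x(t)) \geq -\alpha(h^{\mathrm{k}}(x(t)))$ a.e., which is exactly the differential inequality needed for step (ii). The measurability and local boundedness of $u^{\mathrm{k}}$, together with local Lipschitzness of $f,g$, ensures existence of Carath\'{e}odory solutions of \eqref{eq:diff_inc} so that everything above is well posed.

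The main obstacle is the rigorous handling of the nonsmooth chain rule together with Filippov regularization: one must verify that the set-valued Lie derivative along $K[f+gu^{\mathrm{k}}]$, rather than along a single vector field, is still bounded below by $-\alpha(h^{\mathrm{k}})$, because the closed-loop system is a differential inclusion, not an ODE. This requires showing that for every element $v \in K[f+gu^{\mathrm{k}}](x',t')$ and every $\zeta \in \partial h^{\mathrm{k}}(x')$, the inequality $\langle \zeta, v\rangle \geq -\alpha(h^{\mathrm{k}}(x'))$ holds; this follows from convexity of $K[\cdot]$, linearity of the inner product, and closedness/convexity of $\partial_\epsilon h^{\mathrm{k}}$, but the bookkeeping is delicate. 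Once this is in place, the comparison argument in step (ii) is routine and concludes that $h^{\mathrm{k}}$ is a valid NCBF in the sense that $\mathcal{C}^{\mathrm{k}}$ is forward invariant under \eqref{eq:diff_inc}.
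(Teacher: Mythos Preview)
The paper does not actually prove this theorem: it is quoted verbatim as a preliminary result from \cite{Glotfelter18} (their Theorem~3), and no argument is given here beyond the statement. So there is no ``paper's own proof'' to compare your proposal against.

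That said, your sketch is essentially the argument one finds in the cited reference: use the inclusion $\partial h^{\mathrm k}(x')\subset\partial_\epsilon h^{\mathrm k}(x')$, pass the QP inequality to the Clarke generalized gradient, invoke the nonsmooth chain rule along Carath\'eodory solutions to obtain $\dot h^{\mathrm k}\geq -\alpha(h^{\mathrm k})$ a.e., and finish with a comparison lemma. Two technical cautions worth flagging. First, regularity (in Clarke's sense) holds for $\max$ of $C^1$ functions but not for $\min$; for nested Boolean compositions as in \eqref{eq:Bool_logic} you therefore cannot rely on regularity uniformly, and the chain-rule step must be argued via the explicit generalized-gradient formula $\partial(\min_i f_i)(x)=\mathrm{co}\{\nabla f_i(x):i\ \text{active}\}$ rather than via regularity. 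Second, the passage from the pointwise QP constraint to the Filippov inclusion is the genuinely nontrivial step: you need that the inequality $\langle\zeta,f(x')+g(x')u^{\mathrm k}(x')\rangle\geq -\alpha(h^{\mathrm k}(x'))$ survives taking convex closures of limits along sequences $x_i\to x'$ avoiding measure-zero sets, which uses continuity of $f,g,\alpha,h^{\mathrm k}$ and upper semicontinuity of $\partial_\epsilon h^{\mathrm k}$ (this is precisely why the $\epsilon$-relaxation is introduced). You correctly identify this as the delicate point; the details are in \cite{Glotfelter18}.
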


\begin{remark} \label{rem:QP_feas}
In practice, both the almost-active gradient and a set-valued inner product do not need to be evaluated in the QP. 
Instead, we can formulate constraints by including the gradients of all functions that constitute the index set $I_{\epsilon}^{\mrm k}(x')$ defined in \eqref{eq:index_set}, eliminating the convex-hull operation as discussed in \cite{Glotfelter19}. 
To see how this works, let us consider the system described as a single integrator, namely $f = {\bf 0}_{n\times 1}$ and $g = {\bf 1}_{m\times 1}$, the same model we assume for quadcopters as will be introduced in \eqref{eq:si_dyn}. 
Then, instead of the QP in Theorem~\ref{thm:Valid_NCBF}, we can employ 
%\red{Then, the QP in Theorem~\ref{thm:Valid_NCBF} can be described as}
\begin{subequations} \label{eq:QP_NCBF_valid}
\begin{align} 
    %u^*(x) &= \argmin_{u_i\in \R^4} (u_i - u_{i, {\rm nom}})^\top W (u_i - u_{i, {\rm nom}}), \nonumber \\
    u^{\mrm k} &= \argmin_{u\in \R^m} u^\top A(x')u + b(x')^\top u  \\
    &~{\rm s.t.}~ \left.\frac{\partial h_i^{\mrm k}}{\partial x}\right|_{x=x'}^\top u \geq - \alpha(h^{\mrm k}(x')),~\forall i\in I_{\epsilon}^{\mrm k}(x'). 
\end{align}
\end{subequations}
Similar to \cite{Glotfelter18}, let us assume that the resultant controller is measurable and locally bounded. 
Then, to guarantee the QP has a solution for every $x' \in \mc D^{\mrm k}$ in the considered problem setting, namely the validness of the NCBF, we can investigate the following sufficient conditions: 1) $\mc C^{\mrm k}$ is not an empty set, 2) $\partial h_i^{\mrm k} / \partial x |_{x=x'} \neq {\bf 0}_{m\times 1}$, $\forall i\in I_{\epsilon}^{\mrm k}(x')$, $\forall x' \in \mc D^{\mrm k}$. 
To consider the second condition, we analyze the gradients of the component CBFs of the designed NCBF in Section~\ref{ssec:Valid_CBF}.
\end{remark}

Having defined the validity of the NCBF $h^{\mrm k}$ necessitated to assure the forward invariance of $\mc C^{\mrm k}$, we are now ready to introduce the requirement for $h$ in \eqref{eq:HNCBF_NCBF}.
The following theorem formulates a valid hybrid NCBF (HNCBF), which is required to keep the safe set $\mc C$ to be hybrid forward invariant.

\begin{theorem} \cite[Theorem~2]{Glotfelter19} \label{thm:paul_HNCBF}
Let $h^{\mrm k}:\R^n \to \R$, $\mrm k\in \bN$, be a collection of valid NCBFs %, in the sense of Definition~, 
for \eqref{eq:control_affine} under the control laws and extended class-$\mc K$ functions, $u^{\mrm k}:\R^n \to \R$, $\alpha^{\mrm k}: \R \to \R, \mrm k \in \bN$; and let $\{\tau^{\mrm k}\}_{\mrm k=1}^\infty$ be a switching sequence for \eqref{eq:control_affine}. Then, the function $h:\R^n \times \R_{\geq t_0} \to \R$ defined in \eqref{eq:HNCBF_NCBF} is a candidate HNBF. Moreover, $h$ is a valid HNCBF for \eqref{eq:control_affine} with the input $u: \R^n \times \R_{\geq t_0} \to \R^m$ defined as
\begin{align}
    u(x', t') = u^{\mrm k}(x'), \forall k \in \bN, \forall t' \in [\tau^{\mrm k}, \tau^{\mrm k+1}), \forall x' \in \R^n.
\end{align}
\end{theorem}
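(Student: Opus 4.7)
The plan is to reduce the hybrid statement to interval-wise applications of Theorem~\ref{thm:Valid_NCBF}, and then chain the resulting forward-invariance conclusions together using the switching-sequence bookkeeping of Definitions~\ref{def:sw_sq} and \ref{def:hyb_fw_inv}. The candidate HNCBF claim I expect to be essentially bookkeeping: each $h^{\mrm k}$ is assumed valid and is therefore in particular a candidate NCBF in the sense of Definition~\ref{def:candidate_NCBF}, so every $\mc C^{\mrm k}$ is nonempty and $h^{\mrm k}$ is locally Lipschitz on an open connected domain. The assembly in \eqref{eq:HNCBF_NCBF} inherits these properties piecewise in time, which is all a candidate HNCBF should require.

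The substantive part is the validity claim. Fix a Carath\'{e}odory solution of \eqref{eq:diff_inc} driven by $u$ starting from $x_0$ at $t_0$, and fix a switching interval $[\tau^{\mrm k}, \tau^{\mrm k+1})$. On this interval, by construction, $h$ reduces to the time-invariant function $h^{\mrm k}$ and $u$ reduces to $u^{\mrm k}$, so the restricted trajectory coincides with a Carath\'{e}odory solution of the original control-affine system under the input $u^{\mrm k}$. Because $h^{\mrm k}$ is a valid NCBF under $u^{\mrm k}$ and $\alpha^{\mrm k}$, the CBF inequality $\langle \partial_{\epsilon} h^{\mrm k}(x(t)),\, f(x(t))+g(x(t))u^{\mrm k}(x(t))\rangle \geq -\alpha^{\mrm k}(h^{\mrm k}(x(t)))$ holds almost everywhere on $[\tau^{\mrm k}, \tau^{\mrm k+1})$, which by the nonsmooth comparison argument of \cite{Glotfelter18} renders $\mc C^{\mrm k}$ forward invariant on this interval from any initial condition inside $\mc C^{\mrm k}$. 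The hypothesis of Definition~\ref{def:hyb_fw_inv} supplies precisely $x(\tau^{\mrm k}) \in \mc C^{\mrm k}$, so we conclude $(x(t),t) \in \mc C$ for every $t \in [\tau^{\mrm k}, \tau^{\mrm k+1})$.

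Stitching these conclusions for $\mrm k = 1, \dots, \mrm K_{t_1}$ covers the entire horizon $[t_0, t_1]$ by Definition~\ref{def:sw_sq}, which is exactly hybrid forward invariance. The main obstacle I anticipate is the measure-theoretic handling of the switching instants: $u$ may be discontinuous at each $\tau^{\mrm k}$, so one must verify that it remains measurable and locally bounded on $[t_0, t_1]$ (this follows because every $u^{\mrm k}$ inherits these properties from Theorem~\ref{thm:Valid_NCBF} and the switching sequence is locally finite), and one must check that the countable, measure-zero set of switching times does not spoil the almost-everywhere CBF inequality on any single interval. A secondary subtlety is that although $h$ is only piecewise smooth in time and can jump in value across switches, $x$ remains absolutely continuous; the Definition~\ref{def:hyb_fw_inv} hypothesis $x(\tau^{\mrm k}) \in \mc C^{\mrm k}$ at each jump is precisely what lets the within-interval invariance argument transfer cleanly from one interval to the next without requiring any control on $\dot h$ across the jumps themselves.
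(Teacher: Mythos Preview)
The paper does not supply its own proof of this theorem: it is quoted verbatim as \cite[Theorem~2]{Glotfelter19} and immediately followed by a sentence of interpretation, with no \texttt{proof} environment. So there is nothing in the paper to compare your argument against.

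That said, your sketch is the natural one and matches what the cited result is doing: reduce to each switching interval, invoke the interval-wise NCBF forward-invariance guarantee (Theorem~\ref{thm:Valid_NCBF} and the comparison lemma behind it), and use the hypothesis of Definition~\ref{def:hyb_fw_inv} to reinitialise at each $\tau^{\mrm k}$. Your remarks on measurability of the piecewise-defined $u$ and on the switching instants forming a measure-zero set are the right technical checks. If anything, you could be slightly more explicit that ``valid NCBF'' already packages the conclusion that $\mc C^{\mrm k}$ is forward invariant under $u^{\mrm k}$ (rather than re-deriving it from the set-valued inequality), since that is what Theorem~\ref{thm:Valid_NCBF} delivers; but this is a matter of emphasis, not a gap.
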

%\blue{The above theorem implies that we need to show the validness of NCBFs, e.g., by the discussion in Remark~\ref{rem:QP_feas}, to exhibit the validity of HCNBF.}
Theorem~\ref{thm:paul_HNCBF} implies that we can render the set $\mc C$ hybrid forward invariance by solving \eqref{eq:QP_NCBF_valid} at each switching sequence~$\mrm k$ with valid NCBFs $h^{\mrm k}, \mrm k \in \bN$ composing $h$.
In Section~\ref{ssec:Valid_CBF}, following the above discussion, we analyze the validity of the proposed NCBF, namely the feasibility of the developed distributed algorithm. Please note that although the proposed algorithm utilizes NCBFs, a similar category of CBFs to \cite{Glotfelter18,Glotfelter19}, the designed NCBF and its objectives are completely different from \cite{Glotfelter18,Glotfelter19}. Hence, analyzing the validity of the designed NCBFs necessitates new theoretical results, where we employ the properties of a weighted Voronoi partition. Furthermore, neither \cite{Glotfelter18} nor \cite{Glotfelter19} does not consider a distributed algorithm.

\section{Problem Statement} \label{sec:state}

\subsection{Quadcopters and Environment}
In this paper, we consider the scenario illustrated in Fig.~1, where $n$ quadcopters $\mc N = \{1,\cdots n \}$ equipped with a downward facing camera are distributed in the 3-D space to monitor a 2-D region $\mc Q$. 
We denote the region $\mc Q$ to be surveilled as the mission space hereafter, and assumed to be a closed and bounded convex subspace of the environment $\mc E \subset \R^3$.
The relative importance of points on $\mc E$ is specified by a density function $\phi: \mc E \to \R_{\geq0}$, where the higher the importance of a point $q\in \mc E$ is, the higher the value of $\phi(q)$, with $\phi(q) = 0,~q \in \mc E \backslash \mc Q$.
The goal of this paper is to monitor the important region while preventing the appearance of unmonitored areas in-between a team of quadcopters, shown as hatched areas in Fig.~\ref{fig:scenario}.

The world coordinate frame $\Sigma_w$ is placed so that its $x_wy_w$-plane is coplanar with the environment $\mc E$, where $\{\be_x,\be_y,\be_z\}$ describes the standard basis of $\Sigma_w$.
Then, the set of coordinates of all points in the environment is defined as $\mc E := \{ q\in \R^3 \mid \be_z^\top q = 0 \}$.
We suppose that all quadcopters stay in the common half space $\{ q \in \R^3 \mid \be_z^\top q > 0 \}$ during the operation.

Let $\Sigma_i$ be the coordinate frame of Quadcopter $i \in \mc N$, where its position vector with respect to $\Sigma_w$ is $p_i = [x_i~y_i~z_i]^\top$ with $z_i \in \R_{>0}$, and its attitude is fixed to be consistent with that of $\Sigma_w$ regardless of the attitude of Quadcopter~$i$.
%We assume each quadcopter mounts a camera with a gimbal that keeps an optical axis of the camera corresponds with the $Z_i$-axis of $\Sigma_i$.
We assume each quadcopter equips with a gimbal that can keep an optical axis of the camera corresponds with the axis of the $z_i$-axis of $\Sigma_i$.
Namely, Quadcopter~$i$ points its camera to the downward direction.
In addition, each quadcopter is assumed to be equipped with localization devices, such as the GNSS, gyro sensors, and IMU, to estimate its own pose accurately, as in \cite{Marantos16}.
Let us express the image projection of the camera with a perspective projection model \cite{Ma2003}.
Then, the coordinate frame $\Sigma_i$ is located above the image plane with a distance the same as the focal length $\lambda_i \in \R_{>0}$.
%Then, the coordinate frame $\Sigma_i$ is located at the distance equal to the focal length $\lambda_i \in \R_{>0}$ above the image plane.
We assume that all quadcopters have a circular image plane with radius $r\in \R_{>0}$.
Then, Quadcopter~$i$ can monitor all points inside of its field of view (FOV) $\mc F_i$ described as
\begin{equation*} %\label{eq:}
\mc F_i = \left\{ q\in \mc E \;\middle|\; \left\| q - [x_i~y_i~0]^\top \right\| \leq r\frac{z_i}{\lam_i} \right\},
%\mc F_i = \left\{ q\in \mc E ~\middle|~ \| q - [x_i, y_i, 0]^T \| \leq \blue{R_i} \right\},
\end{equation*}
where its side view is illustrated in Fig.~\ref{fig:FOV_graph}(a) with $I = [x_i~y_i~0]^\top$ denoting Quadcopter~$i$'s horizontal position projected on the environment $\mc E$.
%denoting the horizontal position of Quadcopter $i$ projected on the environment $\mc E$.
%Hereafter, we utilize the symbol $\bd(\cdot)$ to mean boundary of a set.

% Dynamics or Robot and its sensing region
The state of Quadcopter $i$ is denoted as ${\bf p}_i = [p_i^\top~\lambda_i]^\top$, where the focal length $\lambda_i$ specifies the zoom level of the camera.
%We assume that the dynamics of the quadcopter can be modeled as the single integrator dynamics
We assume that the dynamics of each quadcopter can be modeled as
\begin{equation}\label{eq:si_dyn}
\dot \bfp_i(t) = u_i(t),~ \bfp_i(t_0) = \bfp_{i0},~ t_0\in\R, ~ t\in\R_{>t_0},
\end{equation}
which can be converted to reflect the dynamics of a quadcopter \cite{Mellinger2011}.
Note that higher-order dynamics could represent more aggressive motions, as in \cite{Wu16}. However, good quality images are typically paramount in environmental monitoring, which means quadcopters rarely fly with large accelerations, e.g., to prevent image blurs. Hence, we opt for first-order dynamics.
%\blue{Note that higher-order dynamics could represent more aggressive motions as presented in \cite{Wu16}. However, because quadcopters in environmental monitoring rarely require flying with a large acceleration to obtain good-quality images, e.g., by preventing an image blur, we opt for first-order dynamics.}
Furthermore, this paper does not consider the disturbance, such as wind, assuming a controller implemented in each quadcopter can successfully cancel the effect as in \cite{Xiao17}.

\begin{figure}[t!]%\label{fig:exp_snap}
    \centering
    \subfloat[]
    {\makebox[0.48\hsize][c]{\includegraphics[width=0.46\linewidth]{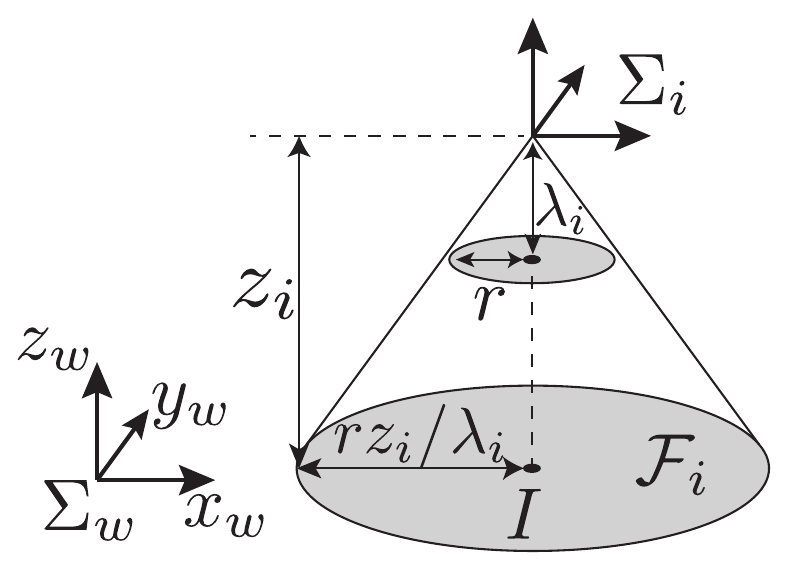}
    \label{fig:test1}}} \quad 
    \subfloat[]
    {\makebox[0.48\hsize][c]{\includegraphics[width=0.46\linewidth]{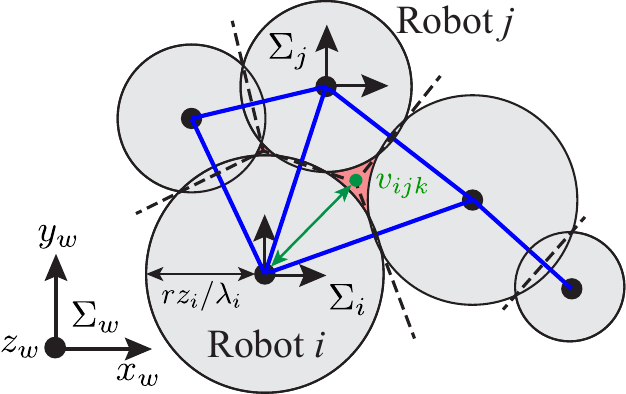}
    \label{fig:test2}}}
    \caption{An illustration of (a) Quadcopter $i$'s FOV $\mc F_i$ and (b) communication graph $\mc G$. In (b), a radical center $v_{ijk}$ is depicted as a green dot.}
    \label{fig:FOV_graph}
\end{figure}

\subsection{Communication Graph and Power Diagram}
% Network, Power diagram 
Every quadcopter is capable of sending its state to the neighboring quadcopters, where the inter-robot communication graph $\mc G(t)$ is modeled by the two undirected graphs $\mc G_{\mc F}(t)$ and $\mc G_{\mc P}(t)$.
First, $\mc G_{\mc F}(t)$ is described as the graph which has an edge between the agents $i$ and $j$ if their FOVs are overlapped, namely $\mc F_i\cap\mc F_j\neq\emptyset$.
%$\mc E_{\mc F} = \{(i,j) \in \mc V \times \mc V \mid \mc F_i\cap\mc F_j\neq\emptyset\}$. (\blue{Cite this? \cite{Montijano2016}})
Second, $\mc G_{\mc P}(t)$ is the graph generated by the power diagram, which is a form of the weighted Voronoi diagram, specified by the weighted distance %characterized by the weighted distance
\begin{align}\label{eq:power_dist}
 d_{\mathcal{P}}(\bfp_i, q) = \left\|[q_x~q_y]^\top-[x_i~y_i]^\top \right\|^2- \left(r\frac{z_i}{\lam_i}\right)^2,
\end{align}
%which we call, following \cite{Aurenhammer1987}, the power distance hereafter.
where $rz_i/ \lam_i$ in the second term signifies the radius of $\mc F_i$ as shown in Fig.~\ref{fig:FOV_graph}\subref{fig:test1}.
Hereafter, we call the distance \eqref{eq:power_dist} the power distance following \cite{Aurenhammer1987}. 
Then, the communication graph is expressed as $\mc G(t) = \mc G_{\mc F}(t) \cap \mc G_{\mc P}(t)$, which preserves the edges of power diagram only between those neighbors whose FOVs overlap, depicted by blue lines in Fig.~\ref{fig:FOV_graph}\subref{fig:test2}. % This operator is called "intersection"
%We denote Quadcopter $i$'s neighbors in graph $\mc G(t)$ as $\mc N_i(t)$.
The neighbors of Quadcopter~$i$ in graph $\mc G(t)$ is denoted as $\mc N_i(t)$.

An example of a power diagram is shown in Fig.~\ref{fig:power_diagram} with the graph $\mc G(t)$. 
The bisector of two agents $i$ and $j$ is composed of a straight line known as the radical axis, along which both circles $\bd(\mc F_i)$ and $\bd(\mc F_j)$, with the symbol $\bd(\cdot)$ to mean boundary of a set, have the same power distance.
Namely, if we denote the radical axis between agent $i$ and $j$ as $L_{ij}$, $d_{\mc P}(\bfp_i, q) = d_{\mc P}(\bfp_j, q),~q \in L_{ij}$ holds.
%In addition, , when the two circles $\mc F_i$ and $\mc F_j$ have intersection points, $L_{ij}$ passes through these intersection points.
In addition, if $\bd(\mc F_i) \cap \bd(\mc F_j) \neq \emptyset$ with $\bfp_i\neq \bfp_j$, then $\bd(\mc F_i) \cap \bd(\mc F_j) \in L_{ij}$ is satisfied.
%Every vertix 
%In addition, when the two circles $\mc F_i$ and $\mc F_j$ have intersection points, the radical axis between $i$ and $j$ passes through two intersection points of their FOVs.
It is known that the radical axes defined by three circles %$\bd(\mc F_i), \bd(\mc F_j), \bd(\mc F_k)$ 
whose centers are not collinear intersect in a common point termed the radical center, %$v_{ijk} = [v_{ijk}^x~v_{ijk}^y]^\top$
depicted with red dots in Fig.~\ref{fig:power_diagram}.
%Let us denote the radical center generated by $\bd(\mc F_i)$, $\bd(\mc F_j), \bd(\mc F_k)$ 
The radical center is generated by three agents and forms a vertex of the weighted Voronoi cell with the weighted distance \eqref{eq:power_dist}.
%The radical center is generated by three agents and 
Let us denote the radical center generated by $\bd(\mc F_i)$, $\bd(\mc F_j)$, and $\bd(\mc F_k)$ as $v_{ijk} = [v_{ijk}^x~v_{ijk}^y~0]^\top\in \R^3$.  %$v_{ijk}$.
Then, from the property of the radical axis, the following condition is satisfied.
\begin{align}\label{eq:power_dist_eq}
 d_{\mathcal{P}}(\bfp_i, v_{ijk}) = d_{\mathcal{P}}(\bfp_j, v_{ijk}) = d_{\mathcal{P}}(\bfp_k, v_{ijk})
\end{align}
%Note that in the special case of $rz_i/\lambda_i = rz_j/\lambda_j$, namely the radius of $\bd(\mc F_i)$

\begin{figure}
    \centering
    \includegraphics[width=0.45\linewidth]{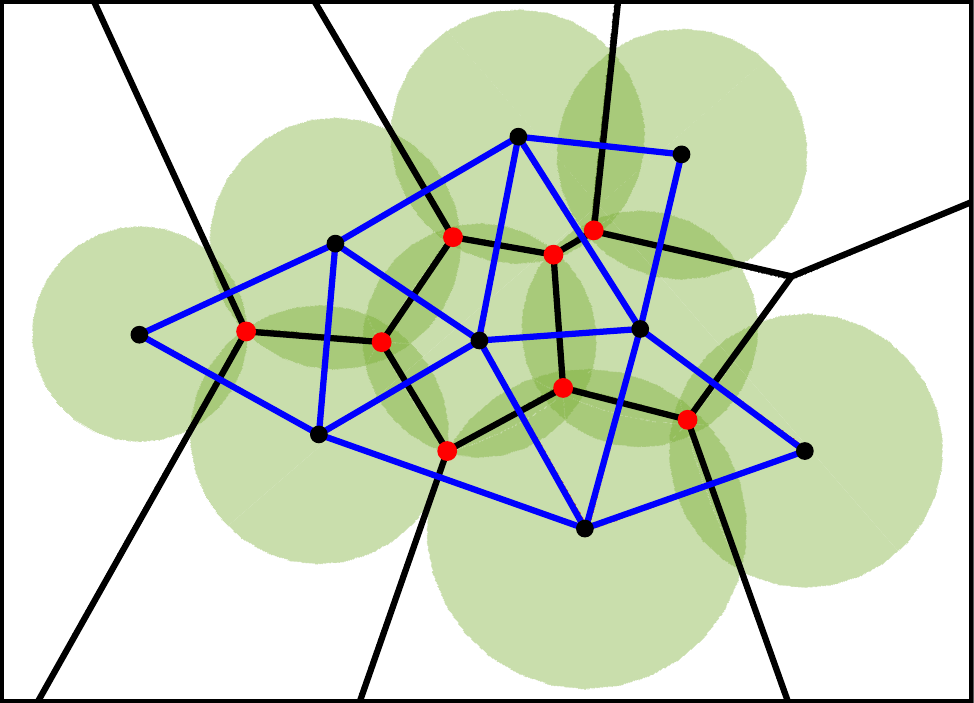}
    \caption{Power diagram generated by nine quadcopters, illustrated by black lines. The vertices of the power diagram shown in red dots, called radical centers, are generated by groups of three quadcopters depicted in black dots. The FOVs of quadcopters and the graph $\mc G(t)$ are depicted with the green circles and the blue lines, respectively. On the right, the boundary of $\cup_{i=1}^n \mc F_i$ has a dent with a radical center, but this is not regarded as a hole, which is formally defined in Definition~\ref{def:hole}.
    }
    \label{fig:power_diagram}
\end{figure}

% Triangulation
Since the graph $\mc G_{\mc P}(t)$ is generated in a similar fashion to how the ordinary Voronoi diagram yields the Delaunay triangulation, the graph $\mc G(t)$ considered throughout this paper also provides triangular subgraphs as observed in Fig.~\ref{fig:power_diagram}.
Let us denote the set of triangular subgraphs containing Quadcopter~$i$ as its vertex as $\mc T_i(t) = \{\Delta_{1},\cdots,\Delta_{m_i} \}$, where $\Delta_{\iota}, \iota\in \mc I_i(t)=\{ 1,\cdots, m_i(t)\}$ stands for $m_i(t)$ triangular subgraphs\footnote{Although we should define a triangular subgraph of Quadcopter~$i$ with $\Delta_{i,\iota}$, we use $\Delta_{\iota}$ for notational simplicity.}. % composing $\mc T_i(t)$.
Note that $\mc T_i(t)$ can be regarded as a subset of the Delaunay triangulation generated by the power diagram.
%Without loss of generality, we make the following assumption for a trio determined by a triangular subgraph $\Delta_{\iota} \in \mc T_i(t)$.
Without loss of generality, we introduce the following assumption for a trio constituting a triangular subgraph $\Delta_{\iota} \in \mc T_i(t)$.
\begin{assumption} \label{asm:non_parallel}
%Consider a triangular subgraph $\{i,j,k\} = \Delta_{\iota} \in \mc T_i(t)$, $t\geq t_0, i\in \mc N$.
No pair in $\{\bd(\mc F_i), \bd(\mc F_j), \bd(\mc F_k)\}$, $\Delta_{\iota} = \{i,j,k\} \in \mc T_i(t)$ %$\{i,j,k\} = \Delta_{\iota} \in \mc T_i(t)$ 
%, $t\geq t_0, \iota\in \mc I_i, i\in \mc N$ 
contains sensing regions that are concentric. In addition, no two axes among $\{L_{ij}, L_{jk}, L_{ki}\}$ are parallel.
\end{assumption}
Under Assumption~\ref{asm:non_parallel}, the existence of the radical axes $\{L_{ij}, L_{jk}, L_{ki}\}$ is assured, and all three axes intersect at a single point. Moreover, $\triangle{IJK}$ does not become a degenerate triangle, where $\triangle{IJK}$ denotes the triangle with vertices $I$, $J$, and $K$.
Note that $\triangle{IJK}$ becomes a degenerate triangle only when the pathological situation, and even if it happens, the proposed method can easily manage such an ill condition with a heuristic method, as shown in the simulation studies in Section~\ref{ssec:sim_comp_trio}.

Notice that the graph $\mc G(t)$ dynamically changes according to the quadcopters' motion. 
We suppose that the set of triangular subgraphs $\mc T_{i}(t)$ does not change over intervals specified by a switching sequence $\{\tau_i^{\mrm k}\}_{\mrm k=1}^\infty$ as introduced in Definition~\ref{def:sw_sq}.

The proposed scenario requires each quadcopter to exchange its state with the neighboring quadcopters. While one could consider employing a state estimation technique to obtain the neighbors' state \cite{Vetrella19}, the zoom level of the mounted camera, included in the state vector, is deemed challenging to estimate externally. Therefore, this paper presumes that each quadcopter is equipped with a communication device to send and receive its state. The communication devices and the challenges involved with outdoor communication, such as guaranteeing reliability by dedicating redundant networks and fail-safety, are detailed in \cite{Hayat16}.

\subsection{Unmonitored Areas to Be Prevented} \label{ssec:hole_def}

Noticing that the power diagram generates convex cells, this paper prevents the formation of unmonitored areas in-between FOVs by including the Voronoi vertices $v_{ijk}$, which exist in-between FOVs of teams, in each FOV, as shown in Fig.~\ref{fig:power_diagram}.
Because each Voronoi vertex is generated by three agents except for degenerate Voronoi diagrams \cite{Okabe2000}, we can separate the problem into triangular subgraphs $\Delta_{\iota} \in \mc T_i(t)$. 
Note that this degeneracy does not cause serious problems because degenerate Voronoi diagrams can be divided into triangular subgraphs.

For the above reasons, we restrict our attention to the unmonitored regions that may appear among a trio determined by triangular subgraphs $\mc T_i(t), \forall i \in \mc N$, illustrated in Fig.~\ref{fig:trio_hole}(a).
This unsurveilled area is denoted as a {\it hole} hereafter and formally defined as follows.
\begin{definition} \label{def:hole}
%Let all pairs in $\Delta_{\iota} = \{i, j, k\}$ have an edge determined by the graph $\mc G$.
Consider a triangular subgraph $\{i,j,k\} = \Delta_{\iota} \in \mc T_i(t)$, $\iota\in \mc I_i(t), t\geq t_0, i\in \mc N$.
A closed set $E \subset \mc Q$ is said to be a {\it hole}, if and only if it satisfies 
the conditions
\begin{align}
\begin{split}
(\bd(E) \cap \bd(\mc Q) = \emptyset) &~\land~
({\rm Int} (E) \cap \mc F_i = \emptyset, ~\forall i \in \mc N)\\ 
~\land~ (E &\subset \triangle{IJK}),
\end{split}
\end{align}
where ${\rm Int} (E)$ denotes the interior of $E$. 
\end{definition}

Since a hole in Definition~\ref{def:hole} depends on a triangular subgraph $\Delta_{\iota}\in\mc T_i(t)$ that dynamically changes according to the switching sequence $\{\tau_i^{\mrm k}\}_{\mrm k=1}^\infty$, the condition for preventing a hole also switches.
%Let us suppose that Quadcopter $i$'s condition for eliminating a hole in $m_i \in T_i$ during a time sequence $\tau_i^{\mrm k}$ is described as $h_{i,m_i}^{\mrm k}$, which we will derive in Section~\ref{sec:hole_prev}.
Let us suppose that the zero superlevel set of a function $h_{i,\Delta_{\iota}}^{\mrm k}(\bfp_{l \in \Delta_{\iota}})$ with $\bfp_{l \in \Delta_{\iota}}:= [\bfp_i^\top \bfp_j^\top \bfp_k^\top]^\top$, $\Delta_{\iota} = \{i, j, k\}$, which we will derive in Section~\ref{sec:hole_prev}, describes the condition for Quadcopter~$i$ to eliminate a hole in $\Delta_{\iota} \in \mc T_i(\tau_i^{\mrm k})$.
By incorporating $h_{i,\Delta_{\iota}}^{\mrm k}(\bfp_{l \in \Delta_{\iota}}), \forall \Delta_{\iota} \in \mc T_i(\tau_i^{\mrm k})$ together, we obtain
\begin{align} \label{eq:h_i_land}
    h_i^{\mrm k}(\bfp_i, \bfp_{l \in {\mc N}_i}) = \bigwedge_{\iota = 1}^{m_i(\tau_i^{\mrm k})} h_{i, \Delta_{\iota}}^{\mrm k},
\end{align}
the zero-superlevel set of which signifies the constraint Quadcopter~$i$ must satisfy to prevent the formation in-between all triangular subgraphs $\mc T_i(\tau_i^{\mrm k})$.
Then, the set
\begin{align} \label{eq:set_CHNBF}
\mc C_i = \{ (\bfp_i,\! t) \in \R^4 \times \R_{\geq t_0} \mid h_i(\bfp_i, \bfp_{l \in {\mc N}_i}, t) \geq 0\},
%\mc C_i \!\!=\!\! \{ (\bfp_i,\! t)\! \in \R^4 \!\times\! \R_{\geq t_0} \!\mid h_{i,m_i}(\bfp_i, \bfp_{l \in {\mc N}_i}, t)\! \geq\! 0\, \forall m_i \!\in \mc T_i(t)\},
\end{align}
with HNCBF
\begin{align} \label{eq:can_CHNBF}
\begin{split}
%&h_{i, m_i}(\bfp_i, \bfp_{l \in {\mc N}_i}, t) = h_{i, m_i}^{\mrm k}(\bfp_i, \bfp_{l \in {\mc N}_i}), \\
&h_i(\bfp_i, \bfp_{l \in {\mc N}_i}, t) = h_i^{\mrm k}(\bfp_i, \bfp_{l \in {\mc N}_i}), \\
&\hspace{1cm} \forall {\mrm k} \in \bN, \forall t \in [\tau_i^{\mrm k}, \tau_i^{\mrm k+1}), \forall \bfp_{i} \in \R^{4},
\end{split}
\end{align}
signifies the prevention of a hole is satisfied for Quadcopter~$i$ during the entire mission.

The goal of this paper is to present a visual coverage framework that prevents the appearance of a hole by guaranteeing the hybrid forward invariance of the set $\mc C_i, \forall i \in \mc N$. 
More specifically, during a monitoring task, we ensure no hole appears among FOVs of quadcopters if the changes of the graph do not make holes instantaneously in-between $\mc T_i(t), \forall i \in \mc N$. 
Note that, in the proposed scenario, there is a possibility that changes of a graph $\mc G(t)$ yield a sudden appearance of a hole. 
For instance, as illustrated in the right part of Fig.~\ref{fig:power_diagram}, $\bd(\bigcup_{i=1}^n \mc F_i)$ can have a dent.
If the right-upper and right-lower quadcopters' FOVs make overlap, a new subtrianglar graph is created. 
Although this graph change could alter the dent into a hole, the robustness of the forward invariance of the set $\mc C_i$ can eliminate this suddenly appeared hole in most cases as will be shown in the simulation.

\section{Prevention of Coverage Holes} \label{sec:hole_prev}

In this section, we first derive the NCBF $h_{i,\Delta_{\iota}}^{\mrm k}(\bfp_{l\in \Delta_{\iota}})$ whose zero superlevel set encodes the prevention of a hole in $\Delta_{\iota}\in \mc T_i(\tau_i^{\mrm k})$ during a time interval $[\tau_i^{\mrm k}, \tau_i^{\mrm k+1})$. 
Then, to prevent a formation of a hole, we develop a distributed algorithm that provides the hybrid forward invariance property of the set $\mc C_i$ expressed as a superlevel set of HNCBF $h_{i}$.
Finally, we show that $h_{i,\Delta_{\iota}}^{\mrm k}$ composing $h_i$ 
is a valid NCBF except for pathological points, hence the proposed algorithm can generically find the control input preventing a hole. 
In each procedure, we utilize the favorable properties of the power diagram incorporated in the designed CBFs.

\subsection{Control Barrier Functions Preventing a Hole} \label{ssec:hole_CBF}

In this subsection, we formalize the condition to be satisfied for preventing a hole during a time interval $[\tau_i^{\mrm k}, \tau_i^{\mrm k+1})$. 
As discussed in Section~\ref{ssec:hole_def}, an appearance of a hole in-between $\Delta_{\iota} = \{i,j,k\} \in \mc T_i(\tau_i^{\mrm k})$ can be prevented by satisfying $v_{ijk} \in \mc F_i$. 
However, as shown in Fig.~\ref{fig:trio_hole}(c), this condition is too conservative because certain movements of quadcopters are restricted, even if such movements do not form a hole.
To alleviate this condition, let us consider the differences between Figs.~\ref{fig:trio_hole}(a) and (c), where a hole only appears in Fig.~\ref{fig:trio_hole}(a) although $v_{ijk}$ is outside of $\mc F_i$ in either case. Then, we realize that, in Fig.~\ref{fig:trio_hole}(c), $v_{ijk}$ is outside of $\triangle{IJK}$ while the deployment in Fig.~\ref{fig:trio_hole}(a) is not. 
This observation leads us to introduce the following condition to prevent a formation of a hole
\begin{align}
\lnot \left( v_{ijk} \in \triangle{IJK} \right) 
\lor \left( v_{ijk} \in \mc F_i \right), \label{eq:iff_condition}
\end{align}
which requires Quadcopter $i$ to satisfy $v_{ijk} \in \mc F_i$ only when $v_{ijk} \in \triangle{IJK}$ holds.
The following theorem in our previous work \cite{Funada20} formalizes the relationship between \eqref{eq:iff_condition} and the existence of a hole.
\begin{theorem} %\cite[Theorem~1]{Funada20}
Let us consider Quadcopters $i, j$, and $k$, where each quadcopter's FOV intersects with that of the other two quadcopters. Then, there is no hole in-between the three quadcopters if and only if the condition \eqref{eq:iff_condition} is satisfied.
\end{theorem}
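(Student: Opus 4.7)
The plan is to analyze the pointwise minimum $D(q) := \min_{l \in \{i,j,k\}} d_{\mc P}(\bfp_l, q)$, whose zero-superlevel set on $\mc Q$ is exactly the uncovered region among the trio, so that showing $D \leq 0$ on $\triangle{IJK}$ is equivalent to the absence of a hole. Since \eqref{eq:power_dist_eq} forces the three power distances to agree at $v_{ijk}$, the clause ``$v_{ijk} \in \mc F_l$'' is independent of $l \in \{i,j,k\}$, so \eqref{eq:iff_condition} is a symmetric statement about the trio. A preliminary observation used throughout is that each edge of $\triangle{IJK}$, say $IJ$, is entirely covered by $\mc F_i \cup \mc F_j$: pairwise FOV intersection forces $\|I-J\| \leq rz_i/\lambda_i + rz_j/\lambda_j$, so every point of segment $IJ$ lies within $rz_i/\lambda_i$ of $I$ or $rz_j/\lambda_j$ of $J$. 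In particular $D \leq 0$ on $\bd(\triangle{IJK})$.

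For the direction ``condition $\Rightarrow$ no hole,'' I would bound $\max_{\triangle{IJK}} D$. Inside each power cell $P_l$, $D$ equals the convex paraboloid $d_{\mc P}(\bfp_l, \cdot)$ with minimum at $[x_l,y_l]^\top$, so its maximum on $P_l \cap \triangle{IJK}$ is attained on the boundary. Portions of that boundary lying in $\bd(\triangle{IJK})$ are nonpositive by the edge-covering observation, while on radical-axis portions $D$ is again a convex quadratic (a paraboloid restricted to a line), whose maximum on $L_{ij} \cap \triangle{IJK}$ occurs at an endpoint---either $v_{ijk}$ (when $v_{ijk} \in \triangle{IJK}$) or a point of $\bd(\triangle{IJK})$. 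Collecting cases, $\max_{\triangle{IJK}} D = \max\{D(v_{ijk}),\, \max_{\bd(\triangle{IJK})} D\}$ when $v_{ijk} \in \triangle{IJK}$, and equals $\max_{\bd(\triangle{IJK})} D$ otherwise; under \eqref{eq:iff_condition} both quantities are nonpositive, so $D \leq 0$ on $\triangle{IJK}$ and no hole exists.

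For the converse I argue contrapositively: if $v_{ijk} \in \triangle{IJK}$ and $v_{ijk} \notin \mc F_i$, then $D(v_{ijk})>0$, and I let $E_0$ be the connected component of $\{q \in \R^2 : D(q)>0\}$ containing $v_{ijk}$. Any continuous path in $E_0$ leaving $\triangle{IJK}$ must cross $\bd(\triangle{IJK})$, where $D \leq 0$---a contradiction---so $E_0 \subset \triangle{IJK}$. Setting $E := \overline{E_0}$ yields a closed set satisfying $E \subset \triangle{IJK} \subset \mc Q$, ${\rm Int}(E) \cap \mc F_l = \emptyset$ for every $l \in \{i,j,k\}$ by construction, and (generically, as $\triangle{IJK}$ lies strictly inside $\mc Q$) $\bd(E) \cap \bd(\mc Q) = \emptyset$; hence $E$ is a hole in the sense of Definition~\ref{def:hole}.

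The main technical obstacle will be the rigorous identification of where $\max_{\triangle{IJK}} D$ is attained: one must rule out cell-interior critical points of $D$ (handled by strict convexity of each paraboloid, whose only critical point is the minimum at $[x_l,y_l]^\top$) and manage degenerate geometries where $v_{ijk}$ falls on $\bd(\triangle{IJK})$ or a cell collapses (handled by Assumption~\ref{asm:non_parallel}, which guarantees that the three radical axes meet at a single point and that $\triangle{IJK}$ is nondegenerate).
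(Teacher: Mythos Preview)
The paper does not give a self-contained proof of this theorem; it simply cites the authors' conference paper \cite{Funada20}. Consequently there is no in-paper argument to compare against, and your proposal stands on its own.

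Your argument is correct and is in fact quite clean. The key device---studying the piecewise-quadratic function $D(q)=\min_l d_{\mc P}(\bfp_l,q)$ and exploiting that on each power cell it coincides with a strictly convex paraboloid---reduces the question to locating the extreme points of the convex polygons $P_l\cap\triangle{IJK}$. Your observation that the pairwise FOV-intersection hypothesis forces $D\le 0$ on every edge of $\triangle{IJK}$ is exactly what makes the only possibly positive vertex be $v_{ijk}$, and the equality~\eqref{eq:power_dist_eq} is what makes the disjunct $v_{ijk}\in\mc F_i$ symmetric in $i,j,k$. The contrapositive direction, building the hole as the closure of the $v_{ijk}$-component of $\{D>0\}$ and trapping it inside $\triangle{IJK}$ via the boundary inequality, is the natural construction.

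Two small points worth tightening when you write this up. First, ``zero-superlevel set'' is not quite the right phrase: you want the strict superlevel set $\{D>0\}$, since $D=0$ corresponds to points on a FOV boundary, which are covered. Second, the clause $\bd(E)\cap\bd(\mc Q)=\emptyset$ in Definition~\ref{def:hole} is dispatched only ``generically''; to be fully rigorous you should either assume $\triangle{IJK}\subset{\rm Int}(\mc Q)$ or note that if $E$ touches $\bd(\mc Q)$ the conclusion of the theorem is vacuous in the intended sense (the region is not ``in between'' the trio). Neither point affects the substance of your argument.
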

\begin{proof}
See Theorem~1 in our conference work \cite{Funada20}.
\end{proof}

\begin{figure}[t!]%\label{fig:exp_snap}
\vspace{-2mm}
    \centering
    \subfloat[]
    {\makebox[0.3\hsize][c]{\includegraphics[width=0.29\linewidth]{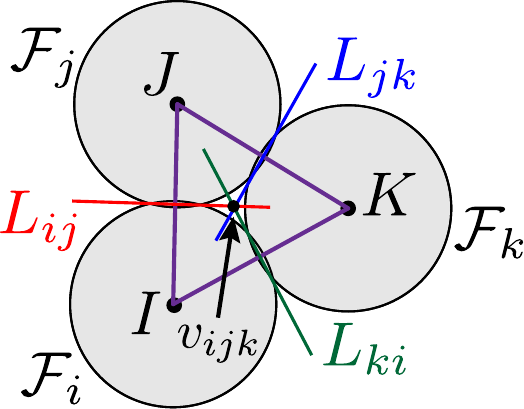}
    \label{fig:trio_1}}} \quad 
    \subfloat[]
    {\makebox[0.3\hsize][c]{\includegraphics[width=0.29\linewidth]{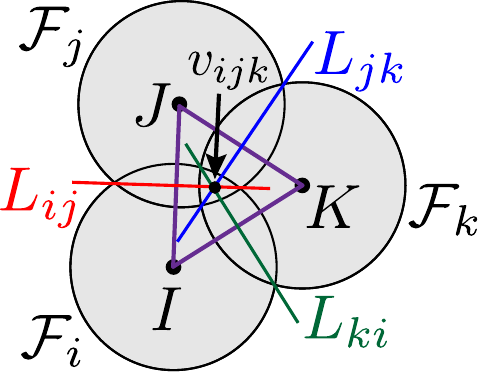}
    \label{fig:trio_2}}} \quad
    \subfloat[]
    {\makebox[0.3\hsize][c]{\includegraphics[width=0.29\linewidth]{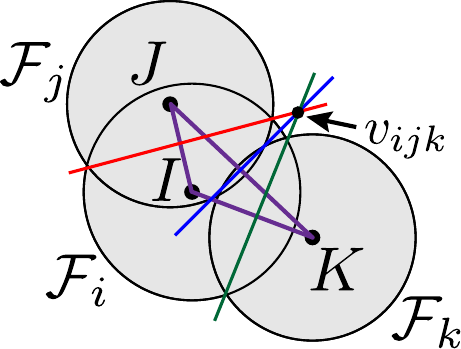}
    \label{fig:trio_3}}}
    \caption{Three cases of FOV configurations with $\triangle{IJK}$ depicted as a purple triangle. (a) illustrates a hole in the sense of Definition~\ref{def:hole}, where $v_{ijk}$ are not contained by any of a trio. Neither (b) nor (c) has a hole, but $v_{ijk}$ lies outside of FOVs in (c). This signifies that if we utilize a constraint $v_{ijk} \in \mc F_i$ only, the configuration (c) is also restricted as well as (a), even if (c) does not have any holes.}
    \label{fig:trio_hole}
\end{figure}

The condition $v_{ijk} \in \triangle{IJK}$ appeared in \eqref{eq:iff_condition} can be described as 
\begin{align}
    (h_{i, \Delta_{\iota}, IJK}^{\mrm k} \!>\! 0) \!\land\! (h_{i, \Delta_{\iota}, JKI}^{\mrm k} \!>\! 0) \!\land\! (h_{i, \Delta_{\iota}, KIJ}^{\mrm k} \!>\! 0), \label{eq:h_intri}
\end{align}
where 
\begin{align} 
h_{i, \Delta_{\iota},IJK}^{\mrm k} &= \frac{ \be_z^\top \left(\overrightarrow{IJ} \times \overrightarrow{Iv_{ijk}}\right)}  
{\be_z^\top \left( \overrightarrow{IJ} \times \overrightarrow{IK} \right)} \label{eq:h_IJK} %\\
%&=\frac{A_{IJv_{ijk}}}{A_{IJK}},  
\end{align}
is a candidate CBF satisfying $h_{i, \Delta_{\iota},IJK}^{\mrm k}>0$ if and only if $v_{ijk}$ and $K$ exist in the same half-plane separated by the line $IJ$.
% Note that denoting the area of the triangle $IJK$ as $A_{IJK}$, then $\| h_{IJK} \| = A_{IJv_{ijk}}/ A_{IJK}$ holds.
Note that the denominator of \eqref{eq:h_IJK} does not become $0$ under Assumption~\ref{asm:non_parallel}.
By definition in \eqref{eq:h_IJK}, 
\begin{align} \label{eq:tri_NCBF_same}
%\begin{split}
h_{i, \Delta_{\iota},IJK}^{\mrm k} \!=\! h_{j, \Delta_{\iota},IJK}^{\mrm k} \!=\! h_{k, \Delta_{\iota},IJK}^{\mrm k}
%\hspace{5mm}\Delta_{i,\iota} = \Delta_{j,\iota} = \Delta_{k,\iota} = \{i,j,k \} \\
%\end{split}
\end{align}
holds for $\Delta_{\iota} = \{i,j,k \} \in \mc T_i(\tau_i^{\mrm k})$, where the same relationship is satisfied for $h_{i, \Delta_{\iota}, JKI}^{\mrm k}$ and $h_{i, \Delta_{\iota}, KIJ}^{\mrm k}$.
%Sinec the 
%with $A_{IJK}$ signifying the area of $\triangle{IJK}$.
The other condition $v_{ijk} \in \mc F_i$ in \eqref{eq:iff_condition} is expressed with another CBF candidate as
\begin{align} 
     h_{i, \Delta_{\iota}, \mc F}^{\mrm k} = \left( r\frac{z_i}{\lam_i} \right)^2 - \left\|[v_{ijk}^x~v_{ijk}^y]^\top-[x_i~y_i]^\top \right\|^2. \label{eq:h_F}
\end{align}

The condition \eqref{eq:iff_condition} preventing a hole in a triangular subgraph $\Delta_{\iota}=\{i,j,k\} \in \mc T_i(\tau_i^{\mrm k})$ can be written as 
\begin{align}
    &\lnot \! \left(  (h_{i, \Delta_{\iota}, IJK}^{\mrm k} \!>\! 0) \land (h_{i, \Delta_{\iota}, JKI}^{\mrm k} \!>\! 0) \land (h_{i, \Delta_{\iota}, KIJ}^{\mrm k} \!>\! 0)  \right) \nonumber \\ 
    &\hspace{5mm}\lor (h_{i, \Delta_{\iota} \mc F}^{\mrm k} \!>\! 0) \label{eq:con_nohole}.
\end{align}
From \eqref{eq:Bool_logic} and \eqref{eq:con_nohole}, the NCBFs of Quadcopter $i$ to be considered for preventing the formation of holes in-between all triangular subgraphs $\Delta_{\iota} \in \mc T_i(t)$ on the $\mrm k$-th interval are described as
\begin{align} \label{eq:h_mi}
    h_{i, \Delta_{\iota}}^{\mrm k}(\bfp_{l \in \Delta_{\iota}}) \!=\! \max_{\ell \in \{1, \cdots, 4 \} }\{h_{i, \Delta_{\iota}, \ell}^{\mrm k}\},
    \forall \Delta_{\iota} \in \mc T_i(\tau_i^{\mrm k}).
    %&\hspace{0.5cm} \ell=1,\cdots, 4, ~m_i \in \mc T_i(\tau_i^{\mrm k}),
\end{align}
%by following \eqref{eq:h_intri}, \eqref{eq:h_F}, and \eqref{eq:NCBF_nohole}.
Here, we define $h_{i, \Delta_{\iota}, \ell}^{\mrm k}$, $\ell \in \{1, \cdots, 4\}$ for each triangular subgraph $\Delta_{\iota} = \{i,j,k \} \in \mc T_i(\tau_i^{\mrm k})$, where they correspond with $-h_{i, \Delta_{\iota},IJK}^{\mrm k}$, $-h_{i, \Delta_{\iota},JKI}^{\mrm k}$, $-h_{i, \Delta_{\iota},KIJ}^{\mrm k}$, and $h_{i, \Delta_{\iota},\mc F}^{\mrm k}$ in order for notational simplicity.

\subsection{Algorithm for Preventing a Hole} \label{ssec:no_hole_alg}

This subsection presents the distributed algorithm that prevents the appearance of a hole in-between a team even under the graph $\mc G(t)$ switches. %by extending the previous results into CHNBF. 
Namely, the set $\mc C_i$ in \eqref{eq:set_CHNBF} defined with the HNCBF \eqref{eq:can_CHNBF} and the NCBF \eqref{eq:h_mi} is rendered hybrid forward invariance by the algorithm. %by control input generated by the algorithm.
To calculate the control input in a distributed manner, we will utilize the symmetric property of the designed CBFs.
%The proposed algorithm utilizes the symmetric property of the designed CBFs to allow each agent to generate its control input in a distributed manner.

% Hybrid forward invariance
The hybrid forward invariance of the set $\mc C_i$ can be guaranteed by satisfying the following constraint during each switching interval $t' \in [\tau_i^{\mrm k}, \tau_i^{\mrm k+1})$
\begin{align} \label{eq:const_HCBF}
    \dot h_{i, \Delta_{\iota}}^{\mrm k}(\bfp_{l\in \Delta_{\iota}}) \!\geq\! - \alpha( h_{i, \Delta_{\iota}}^{\mrm k}(\bfp_{l\in \Delta_{\iota}}) ),
    \forall \Delta_{\iota} \!\in\! \mc T_i(\tau_i^{\mrm k})
\end{align}
for every Carath\'{e}odry solution, 
for some locally Lipschtz extended class-$\mc K$ function $\alpha : \R \to \R$ \cite{Glotfelter19}. 
However, the nonsmoothness exists in \eqref{eq:h_mi} hinders us from obtaining the time derivative $\dot h_{i, \Delta_{\iota}}^{\mrm k}$ via the usual chain rule.

As a remedy for this problem, we reformulate the constraint \eqref{eq:const_HCBF} by incorporating the gradients of all the functions comprising the almost-active set of functions as discussed in Remark~\ref{rem:QP_feas}. 
Then, the appearance of holes can be prevented by satisfying the following constraint on each interval
\begin{align}
 &\frac{\partial h_{i, \Delta_{\iota}, \ell}^{\mrm k}}{\partial \bfp_i}^\top \!u_i \!+\! \frac{\partial h_{i, \Delta_{\iota}, \ell}^{\mrm k}}{\partial \bfp_j}^\top \!u_j \!+\! \frac{\partial h_{i, \Delta_{\iota}, \ell}^{\mrm k}}{\partial \bfp_k}^\top \!u_k \!\geq\! - \alpha(h_{i,\Delta_{\iota}}^{\mrm k}), \nonumber\\
 &\hspace{0.3cm}\forall \ell \in I_{i, \Delta_{\iota}, \epsilon}^{\mrm k}(\bfp_{l\in \Delta_{\iota}}),~\forall \Delta_{\iota} \in \mc T_i(\tau_i^{\mrm k}). \label{eq:CBF_cent}
\end{align}
with the almost-active set of functions
\begin{align}
\begin{split}
    &I_{i, \Delta_{\iota}, \epsilon}^{\mrm k}(\bfp_{l\in \Delta_{\iota}}) = \\
    &\hspace{5mm} \left\{ \ell \;\middle|\;  \left\| h_{i, \Delta_{\iota}, \ell}^{\mrm k}(\bfp_{l\in \Delta_{\iota}}) - h_{i,\Delta_{\iota}}^{\mrm k}(\bfp_{l\in \Delta_{\iota}}) \right\| < \epsilon \right\}.
\end{split}
\end{align}
%\begin{align} \label{eq:CBF_cent}
%    \frac{\partial h_{ijk}}{\partial \bfp_i}^\top u_i + \frac{\partial h_{ijk}}{\partial \bfp_j}^\top u_j + \frac{\partial h_{ijk}}{\partial \bfp_k}^\top u_k \geq - \alpha().
%\end{align}
However, the input $u_i$ satisfying \eqref{eq:CBF_cent} cannot be calculated in a distributed manner because it requires input information of other agents $l \in \Delta_{\iota} \backslash \{i\}$. 
Furthermore, in order to evaluate the second term of the left side of \eqref{eq:CBF_cent}, it requires the information of both $h_{i, \Delta_{\iota}, \ell}^{\mrm k}$ and $u_j$; namely, neither Quadcopter $i$ nor $j$ can calculate it alone. 
The same issue occurs in the third term. The following theorem proves that condition (28) can be divided into three conditions, where each of them can be evaluated in a distributed manner by Quadcopters $i$, $j$, and $k$, respectively.
%In the following theorem, we propose the constraint which is implementable in a distributed fashion.
%The following theorem proposes the constraint which is implementable in a distributed fashion.

%
\begin{theorem} \label{th:CBF_dist_hybrid}
Suppose that holes in the sense of Definition~\ref{def:hole} do not appear in-between agents at the initial time $\tau_i^{\mrm k}$ of an interval $[\tau_i^{\mrm k}, \tau_i^{\mrm k+1})$, $\mrm k \in \{1,\cdots \infty\}$. Then, any measurable and locally bounded controllers $u_i$, $i\in \mc N$ satisfying 
\begin{align} \label{eq:CBF_dist}
\begin{split}
    &\frac{\partial h_{i, \Delta_{\iota}, \ell}^{\mrm k}(\bfp_{l\in \Delta_{\iota}})}{\partial \bfp_i}^\top u_i \geq 
    - \frac{1}{3} \alpha\left( h_{i, \Delta_{\iota}}^{\mrm k}(\bfp_{l\in \Delta_{\iota}})\right), \\
    &\hspace{0.3cm}\forall \ell \in I_{i, \Delta_{\iota}, \epsilon}^{\mrm k}(\bfp_{l\in \Delta_{\iota}}),~\forall \Delta_{\iota} \in \mc T_i(\tau_i^{\mrm k})
\end{split}
\end{align}
will render the safe set $\mc C_i$ hybrid forward invariant. 
\end{theorem}
\begin{proof}
%First, we prove that if we satisfy the condition \eqref{eq:CBF_cent}
Let us consider the three agents $\{i, j, k\}=\Delta_{\iota}$. % belonging to $\Delta_{i,\iota} = \Delta_{j,\iota} = \Delta_{k,\iota} = \{i,j,k \}$.
We show the condition \eqref{eq:CBF_cent} can be distributed in the form of the equation \eqref{eq:CBF_dist} from the symmetric property of the designed NCBF $h_{i,\Delta_{\iota}}^{\mrm k}$. First, from the equation \eqref{eq:power_dist} and \eqref{eq:h_F}, $h_{i,\Delta_{\iota},\mc F}^{\mrm k}$ can be expressed by the power distance as 
\begin{align} \label{eq:h_F2_dp}
    h_{i,\Delta_{\iota},\mc F}^{\mrm k} = -d_{\mc P}(\bfp_i, v_{ijk}).
\end{align}
From the property of the power distance \eqref{eq:power_dist_eq}, 
\begin{align}
%\begin{split}
    h_{i, \Delta_{\iota}, \mc F}^{\mrm k} = h_{j, \Delta_{\iota},\mc F}^{\mrm k} = h_{k, \Delta_{\iota},\mc F}^{\mrm k} 
    %&\hspace{5mm} \Delta_{\iota} = \Delta_{j,\iota} = \Delta_{k,\iota} = \{i,j,k \}
%\end{split}
    %&\Delta_{\iota} = \Delta_{j,\iota} = \Delta_{k,\iota} = \{i,j,k \} \\
    %&\hspace{5mm} \Rightarrow h_{i, \Delta_{\iota}, \mc F} \!=\! h_{j, \Delta_{j,\iota},\mc F} \!=\! h_{k, \Delta_{k,\iota},\mc F}%, \, \forall \Delta_{\iota} \!=\! \{i,j,k\} \!\in\! \mc T_i
\end{align}
holds.
Second, from \eqref{eq:tri_NCBF_same}, the NCBFs evaluating the condition $v_{ijk}\notin \triangle{IJK}$ takes the same value among $\{i,j,k \}$.
%Second, from \eqref{eq:tri_NCBF_same}, all the agent $\{i,j,k\} \in \mc T_i$ utilize the common NCBF $\max\{-h_{IJK}, -h_{JKI}, -h_{KIJ}\}$ to evaluate the condition $v_{ijk}\notin \triangle{IJK}$ from the definition \eqref{eq:h_intri} and \eqref{eq:h_IJK}.
Hence, the synthesized NCBF \eqref{eq:h_mi} has the following symmetric property
\begin{align} \label{eq:h_eq_trio}
%\begin{split}
    h_{i, \Delta_{\iota}}^{\mrm k} = h_{j, \Delta_{\iota}}^{\mrm k} = h_{k, \Delta_{\iota}}^{\mrm k}%,~%~ \forall \{i,j,k\} \in\mc T_i, \\
    %\Delta_{\iota} \!=\! \Delta_{j,\iota} \!=\! \Delta_{k,\iota} \!=\! \{i,j,k \}
%\end{split}
\end{align}
Considering the fact that the derivatives of $h_{i, \Delta_{\iota}}^{\mrm k}$, $h_{j, \Delta_{\iota}}^{\mrm k}$, and $h_{k, \Delta_{\iota}}^{\mrm k}$ take the same value among $\{i,j,k\} \in\mc T_i$ from \eqref{eq:h_eq_trio}, the condition \eqref{eq:CBF_cent} can be decomposed into %three agents composing $\Delta_{i,\iota} = \Delta_{j,\iota} = \Delta_{k,\iota} = \{i,j,k \}$
\begin{subequations} %\label{eq:CBF_dist}
\begin{align}
     &\frac{\partial h_{i, \Delta_{\iota},\ell}^{\mrm k}(\bfp_{l\in \Delta_{\iota}})}{\partial \bfp_i}^\top u_i \geq -\frac{1}{3} \alpha\left( h_{i, \Delta_{\iota}}^{\mrm k}(\bfp_{l\in \Delta_{\iota}})\right), \\
     &\frac{\partial h_{j, \Delta_{\iota},\ell}^{\mrm k}(\bfp_{l\in \Delta_{\iota}})}{\partial \bfp_j}^\top u_j \geq -\frac{1}{3} \alpha\left( h_{j, \Delta_{\iota}}^{\mrm k}(\bfp_{l\in \Delta_{\iota}})\right), \label{eq:dist_cond_j}\\
     &\frac{\partial h_{k, \Delta_{\iota},\ell}^{\mrm k}(\bfp_{l\in \Delta_{\iota}})}{\partial \bfp_k}^\top u_k \geq -\frac{1}{3} \alpha\left( h_{k, \Delta_{\iota}}^{\mrm k}(\bfp_{l\in \Delta_{\iota}})\right). \label{eq:dist_cond_k}
\end{align}
\end{subequations}
This means that if each quadcopter $i\in\Delta_\iota$ satisfies the condition \eqref{eq:CBF_dist} $\forall \Delta_{\iota} \in \mc T_i(\tau_i^{\mrm k})$, %, $\forall i\in \mc N$ 
then the condition \eqref{eq:CBF_cent} is also satisfied. 
Notice that the inequalities \eqref{eq:dist_cond_j} and \eqref{eq:dist_cond_k} no longer contain either $h_{i, \Delta_{\iota}}^{\mrm k}$ or the gradients of its components, hence Quadcopters $j$ and $k$ can evaluate \eqref{eq:dist_cond_j} and \eqref{eq:dist_cond_k} alone, respectively.
This completes the proof.
\end{proof}

%\bfp_{i\in m_i}

\begin{algorithm}[t] 
\caption{Distributed Coverage Hole Prevention}  \label{alg:cov_main}
\begin{algorithmic} 
\STATE{INPUT:} Nominal controller: $u_{i, {\rm nom}}$
\STATE{\hspace{0.5cm}} The state of the neighboring agents of $i$: $\mc \bfp_{l \in {\mc N}_i}$
\STATE{\hspace{0.5cm}} The set of triangular subgraphs containing $i$: $\mc T_i(\tau_i^{\mrm k})$
\STATE{OUTPUT:} Controller preventing holes: $u_i^*$
%\STATE $h_{i, m_i}^{\mrm k}(\bfp_i, \bfp_j, \bfp_k) \gets \max_\ell\{h_{m_i, \ell}^{\mrm k} \},$ 
%\STATE \hspace{3cm} $\ell=1, \cdots, 4, ~\forall m_i \in \mc T_i(\tau_i^{\mrm k})$ 
%\STATE $h_i^{\mrm k}(\bfp_i, \bfp_{j \in {\mc N}_i}) \gets \min_{m_i} \{ h_{i,m_i}^{\mrm k}(\bfp_i, \bfp_j, \bfp_k) \}$
%\STATE $I_{\epsilon, m_i}^{\mrm k} \gets \emptyset$, $I_{\epsilon, \ell}^{\mrm k} \gets \emptyset$
%\STATE $I_{\epsilon, \ell}^{\mrm k} \gets \emptyset$
\FOR{$\iota=1: |\mc T_i(\tau_i^{\mrm k})|$} 
\STATE $h_{i, \Delta_{\iota}}^{\mrm k}(\bfp_{l \in \Delta_{\iota}}) \gets \max_\ell\{h_{i, \Delta_{\iota}, \ell}^{\mrm k} \}$, $\ell \in \{1, \cdots, 4\}$ 
%\STATE \hspace{3cm} $\ell \in \{1, \cdots, 4\}$
\STATE $I_{i, \Delta_{\iota}, \epsilon}^{\mrm k} \gets \emptyset$
%\IF{$\| h_{i, m_i}^{\mrm k} - h_i^{\mrm k} \| \leq \epsilon$}
%\STATE $I_{\epsilon, m_i}^{\mrm k}(\bfp_i, \bfp_{j \in {\mc N}_i}) \cup \{m_i\}$
\FOR{$\ell=1:4$} 
\IF{$\| h_{i, \Delta_{\iota}, \ell}^{\mrm k} - h_{i, \Delta_{\iota}}^{\mrm k} \| \leq \epsilon$}
%\STATE $I_{\epsilon, \ell, m_i}^{\mrm k}(\bfp_i, \bfp_{j \in {\mc N}_i}) \cup \{\ell\}$
\STATE $I_{i, \Delta_{\iota}, \epsilon}^{\mrm k}(\bfp_{l\in \Delta_{\iota}}) \cup \{\ell\}$
\ENDIF
\ENDFOR
%\ENDIF
\ENDFOR
\STATE $u_i^* \gets \argmin_{u_i} (u_i - u_{i, {\rm nom}})^\top W (u_i - u_{i, {\rm nom}})$
\STATE  ${\rm s.t.}~ \frac{\partial h_{i, \Delta_{\iota},\ell}^{\mrm k}(\bfp_{l\in \Delta_{\iota}})}{\partial \bfp_i}^\top u_i \geq -\frac{1}{3} \alpha( h_{i,\Delta_{\iota}}^{\mrm k}(\bfp_{l\in \Delta_{\iota}}))$
\STATE  \hspace{0.9cm} $\forall \ell \in I_{i, \Delta_{\iota}, \epsilon}^{\mrm k}(\bfp_{l\in \Delta_{\iota}}),~\forall \Delta_{\iota} \in \mc T_i(\tau_i^{\mrm k})$
%\STATE  ${\rm s.t.}~ \frac{\partial h_{m_i,\ell}^{\mrm k}(\bfp_i, \bfp_{j\in \mc N_i})}{\partial \bfp_i}^\top u_i \geq -\gamma h_{i, m_i}^{\mrm k}(\bfp_i, \bfp_{j\in \mc N_i})^3$
%\STATE  \hspace{0.7cm} $\forall \ell \in I_{\epsilon, \ell, m_i}^{\mrm k}(\bfp_i, \bfp_{j \in {\mc N}_i})$,~$\forall m_i \in \mc T_i(\tau_i^{\mrm k})$
\end{algorithmic} 
\end{algorithm}

Theorem~\ref{th:CBF_dist_hybrid} signifies that each quadcopter can calculate the NCBF condition \eqref{eq:CBF_dist} in a distributed manner to render the admissible set $\mc C_i$ hybrid forward invariant. More specifically, if the state trajectory does not leave the admissible set $\mc C_i$ at the initial time $\tau_i^{\mrm k}$ of the switching interval $t' \in [\tau_i^{\mrm k}, \tau_i^{\mrm k+1})$, the formation of holes during each switching interval $t' \in [\tau_i^{\mrm k}, \tau_i^{\mrm k+1})$ can be prevented by a control input satisfying the condition \eqref{eq:CBF_dist}. Therefore, given a nominal control input $u_{i,{\rm nom}}$ for maximizing the monitoring performance, which will be presented in Section~\ref{sec:cov_cont}, the QP modifying $u_{i,{\rm nom}}$ minimally invasive way to prevent the appearance of holes during each switching interval $t' \in [\tau_i^{\mrm k}, \tau_i^{\mrm k+1})$ can be expressed as
\begin{equation} \label{eq:QP_nohole}
\begin{split}
& u_i^* = \argmin_{u_i \in \R^4} (u_i - u_{i, {\rm nom}})^\top W (u_i - u_{i, {\rm nom}}) \\
& {\rm s.t.}~ \frac{\partial h_{i, \Delta_{\iota},\ell}^{\mrm k}(\bfp_{l\in \Delta_{\iota}})}{\partial \bfp_i}^\top u_i \geq -\frac{1}{3} \alpha( h_{i,\Delta_{\iota}}^{\mrm k}(\bfp_{l\in \Delta_{\iota}})), \\
& \hspace{0.3cm}\forall \ell \in I_{i, \Delta_{\iota}, \epsilon}^{\mrm k}(\bfp_{l\in \Delta_{\iota}}),~\forall \Delta_{\iota} \in \mc T_i(\tau_i^{\mrm k}),
\end{split}
\end{equation}
%with $\| u_i - u_{i, {\rm nom}} \|_W^2 = (u_i - u_{i, {\rm nom}})^T W (u_i - u_{i, {\rm nom}})$ and 
where $W=\diag(1,1,1,w_\lam)$ is the weighting matrix that adjusts the unit differences between $p_i$ and $\lambda_i$.
%introduced to adjust the unit differences between $p_i$ and $\lambda_i$.
The whole procedure is presented in Algorithm~\ref{alg:cov_main}.

%\blue{The following remarks discuss how to address the possible uncertainties }

\begin{remark} \label{rem:pos_unc}
    The proposed algorithm assumes that each quadcopter can accurately estimate its position and attitude. Although the accuracy of the attitude estimation of the quadcopters generally has sufficient performance, e.g., around a few degrees in \cite{Hoffmann10}, the position estimation could contain an error depending on the estimation method, such as GNSS~\cite{Balamurugan16,Bryce07}. If an upper bound of the estimation error is known a priori, one possible remedy to prevent holes even with estimation uncertainties could be to formulate a guaranteed FOV, the area guaranteed to be monitored by a quadcopter even with localization uncertainties, as presented in \cite{Bousias19}. Because the guaranteed FOV constitutes the circular FOV, we can apply the proposed NCBFs to the power diagram based on the guaranteed FOV.
\end{remark}

\begin{remark} \label{rem:com_unc}
    If imperfect communication causes an error between the communicated and the actual state, one could employ the same approach as Remark~\ref{rem:pos_unc} to deal with the error. If a communication edge is completely disconnected, a trio containing the disconnected edge faces difficulty calculating the NCBF. In such cases, a quadcopter could estimate the other quadcopters' state while fixing the state that is difficult to estimate externally, e.g., a zoom level, to a predefined value.
\end{remark}

\begin{remark}
    The proposed method considers the scenario where all quadcopters have a circular image plane. Considering other shapes of image planes would be worthwhile. 
    %But, alternative image shapes can pose challenges in regard to the formulation of Voronoi partitions amenable to the NCBFs formulation. 
    However, other image shapes can pose challenges in formulating Voronoi partitions amenable to the NCBFs formulation.
    For instance, if we consider an elliptic image plane, a Voronoi partition generally becomes a complicated shape \cite{Emiris07}. Still, as conducted in \cite{Gusrialdi2008}, by synchronizing the yaw-angle of the quadcopters, the Voronoi cell can be formulated as a convex polygon to which our proposed method can be extended.
\end{remark}

\subsection{Validity of the Proposed CBFs} \label{ssec:Valid_CBF}

As discussed in Section~\ref{sec:pre}, to show the validness of the HNCBF, we need to exhibit the validity of the proposed NCBFs.
Since the individual gradients of the component functions in \eqref{eq:con_nohole} at each point in time is smooth under Assumption~\ref{asm:non_parallel}, the designed NCBF is a smoothly composed candidate NCBF consisted by a Boolean operator in \eqref{eq:Bool_logic}.
Therefore, as a sufficient condition to guarantee the validity of the proposed NCBF, we investigate whether $\partial h_{i, \Delta_{\iota},\ell}^{\mrm k}(\bfp_{l\in \Delta_{\iota}})/\partial \bfp_i \neq {\bf 0}_{4\times 1}$ holds for each CBF \eqref{eq:h_intri}, \eqref{eq:h_IJK}, and \eqref{eq:h_F} composing NCBF as discussed in Remark~\ref{rem:QP_feas}.

In our previous works \cite{Funada19, Funada20}, complexities appeared in the derivative of the radical center $v_{ijk}$ with respect to $\bfp_i$ hinder us from analyzing the condition $\partial h_{i, \Delta_{\iota},\ell}^{\mrm k}(\bfp_{l\in \Delta_{\iota}})/\partial \bfp_i \neq {\bf 0}_{4\times 1}$ %$L_g h(x)\neq 0_{4\times 1}$ 
and elucidating the properties of each CBF's derivative.
To overcome this difficulty, we introduce a new coordinate frame $\Sigma_d$ on the environment $\mc E$, without loss of generality, that intends to simplify $\partial v_{ijk}/\partial \bfp_i$.
Since we intend to derive $\partial h_{i, \Delta_{\iota},\ell}^{\mrm k}(\bfp_{l\in \Delta_{\iota}})/\partial \bfp_i$, the circles $\bd(\mc F_j)$ and $\bd(\mc F_k)$, namely $\bfp_j$ and $\bfp_k$, are considered to have some constant values in this subsection. 
This corresponds with the fact that $\bfp_j$ and $\bfp_k$ are held constant when we calculate the partial derivative $\partial h_{i, \Delta_{\iota},\ell}^{\mrm k}(\bfp_{l\in \Delta_{\iota}})/\partial \bfp_i$.

The coordinate frame $\Sigma_d$ on the environment $\mc E$ %for calculating the derivatives of CBFs with respect to $\bf p_i$ 
is illustrated in Fig.~\ref{fig:coordinate}.
The introduced coordinate frame is arranged so that its $x_d$-axis and $y_d$-axis correspond with the line $JK$ and the radical axis $L_{jk}$, respectively. 
In addition, we define the position vector of a point $q$ with respect to $\Sigma_d$ as $q^d$, e.g., %$\bfp_i$ and $\bfp_i^d$ 
$v_{ijk} = [v_{ijk}^x~ v_{ijk}^y~0]^\top$ are denoted as $v_{ijk}^d = [v_{ijk}^{x,d}~ v_{ijk}^{y,d}~0]^\top$ after it is transformed from $\Sigma_w$ to the coordinate frame $\Sigma_d$. 
Note that the value of partial derivatives calculated in $\Sigma_d$ differs from those derived in $\Sigma_w$. 
However, as both $\Sigma_w$ and $\Sigma_d$ are fixed in the environment, the analysis in $\Sigma_d$ is enough to inspect whether $\partial h_{i, \Delta_{\iota},\ell}^{\mrm k}/\partial \bfp_i \neq {\bf 0}_{4\times1}$ holds and derives the condition $\partial h_{i, \Delta_{\iota},\ell}^{\mrm k}/\partial \bfp_i = {\bf 0}_{4\times 1}$ otherwise.

Considering the fact that the radical center $v_{ijk}$ is always on the radical axis $L_{jk}$, the following condition is satisfied on the frame $\Sigma_d$.
\begin{align}
    &v_{ijk}^{x,d} = 0, \label{eq:sig_d_vx} \\
    &\frac{\partial v_{ijk}^{x,d}}{\partial \bfp_i} = {\bf 0}_{4\times 1}. \label{eq:sig_d_vy}
\end{align}
Futhermore, $v_{ijk}^{y,d}$ coincides with the $y$-intercept of the radical axis $L_{ij}$, and derived as
\begin{align} \label{eq:radical_cent_y}
    v_{ijk}^{y,d} = \frac{\left({x_i^d}^2 + {y_i^d}^2 - {R_i^d}^2\right) - \left({x_j^d}^2 - {R_j^d}^2\right)}{2y_i^d},
\end{align}
with $R_i:= rz_i/\lambda_i$ denoting the radius of $\bd(\mc F_i)$. Note that $y_i^d\neq 0$ from Assumption~\ref{asm:non_parallel}.

\begin{figure}[t!]
    \centering% 0.55
    \includegraphics[width=0.55\linewidth]{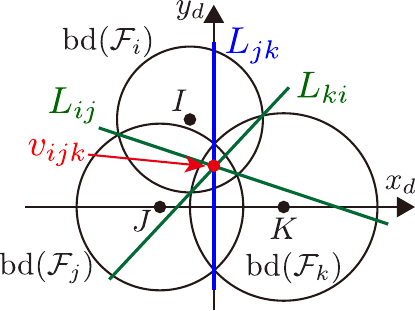}
    \caption{The coordinate frame $\Sigma_d$ introduced for calculation of partial derivatives of designed CBFs.}
    \label{fig:coordinate}
\end{figure}

\begin{figure}[t!]%\label{fig:exp_snap}
    \centering
    \subfloat[]% 0.44
    {\makebox[0.44\hsize][c]{\includegraphics[width=0.44\linewidth]{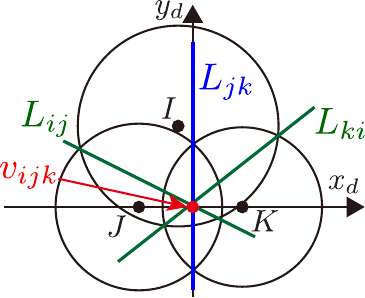}
    \label{fig:diff_h_F}}} \quad 
    \subfloat[]
    {\makebox[0.44\hsize][c]{\includegraphics[width=0.44\linewidth]{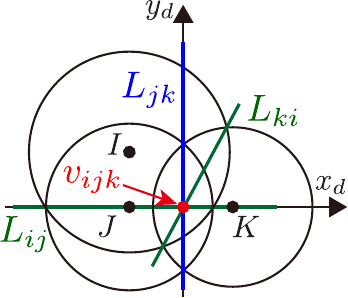}
    \label{fig:diff_h_IJK}}} \quad
    \caption{The condition that makes (a) $\partial h_{i, \Delta_{\iota}, \mc F}^{\mrm k}/\partial \bfp_i = {\bf 0}_{4\times 1}$ and (b) $\partial h_{i, \Delta_{\iota}, IJK}^{\mrm k}/\partial \bfp_i = {\bf 0}_{4 \times 1}$, respectively. (a) $\partial h_{i, \Delta_{\iota}, \mc F}^{\mrm k}/\partial \bfp_i = {\bf 0}_{4\times 1}$ holds if and only if $v_{ijk}$ lies on the line $JK$. (b) $\partial h_{i, \Delta_{\iota}, IJK}^{\mrm k}/\partial \bfp_i = {\bf 0}_{4 \times 1}$ holds if and only if the radical axis $L_{ij}$ is identical to the line $JK$.}
    \label{fig:diff_zero}
\end{figure}

We are now ready to analyze the validness of each CBF. 
The following lemma investigates $\partial h_{i, \Delta_{\iota}, \mc F}^{\mrm k}/\partial \bfp_i$.
\begin{lemma} \label{thm:h_f}
Suppose that Assumption \ref{asm:non_parallel}, $z_i>0$, and $\lam_i > 0$ hold. Then, except for $\bfp_i$ which makes $v_{ijk}$ lie on the line $JK$, the function $h_{i, \Delta_{\iota}, \mc F}^{\mrm k}$ in \eqref{eq:h_F} is a valid CBF when $u_i\in\R^4$.
\end{lemma}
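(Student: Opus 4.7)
The plan is to invoke the sufficient condition from Remark~\ref{rem:QP_feas}: non-emptiness of the zero superlevel set is immediate, since any configuration in which $v_{ijk}$ lies inside $\mc F_i$ yields $h_{i,\Delta_\iota,\mc F}^{\mrm k} \geq 0$, so it suffices to show $\partial h_{i,\Delta_\iota,\mc F}^{\mrm k}/\partial \bfp_i \neq \mathbf{0}_{4\times 1}$ outside the exceptional set described in the statement. I would carry out the computation in the auxiliary frame $\Sigma_d$ of Fig.~\ref{fig:coordinate}, which is engineered so that $v_{ijk}$ collapses to the single scalar $v_{ijk}^{y,d}$ via \eqref{eq:sig_d_vx}--\eqref{eq:sig_d_vy}, while \eqref{eq:radical_cent_y} gives a closed form for that scalar; since the change of frame is orthogonal on $(x_i,y_i)$ and the identity on $(z_i,\lambda_i)$, vanishing of the gradient is preserved between $\Sigma_w$ and $\Sigma_d$, so the analysis in $\Sigma_d$ decides the question.

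In $\Sigma_d$ the candidate reads $h_{i,\Delta_\iota,\mc F}^{\mrm k} = R_i^2 - (x_i^d)^2 - (y_i^d - v_{ijk}^{y,d})^2$ with $R_i = r z_i/\lambda_i$. I would differentiate the defining identity $2 y_i^d v_{ijk}^{y,d} = (x_i^d)^2 + (y_i^d)^2 - R_i^2 - (x_j^d)^2 + R_j^2$ implicitly with respect to each of $x_i^d$, $y_i^d$, $z_i$, and $\lambda_i$, holding $\bfp_j$ and $\bfp_k$ fixed as the partial derivative prescribes. Assumption~\ref{asm:non_parallel} guarantees $y_i^d \neq 0$, so the implicit differentiation produces short closed forms for $\partial v_{ijk}^{y,d}/\partial \bfp_i^d$. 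Substituting these into $\partial h_{i,\Delta_\iota,\mc F}^{\mrm k}/\partial \bfp_i^d$ and repeatedly applying the algebraic identity $1 - (y_i^d - v_{ijk}^{y,d})/y_i^d = v_{ijk}^{y,d}/y_i^d$, I expect the gradient to factor cleanly as
\[
\frac{\partial h_{i,\Delta_\iota,\mc F}^{\mrm k}}{\partial \bfp_i^d} = \frac{v_{ijk}^{y,d}}{y_i^d}\begin{bmatrix} -2 x_i^d \\ -2(y_i^d - v_{ijk}^{y,d}) \\ 2 r R_i/\lambda_i \\ -2 R_i^2/\lambda_i \end{bmatrix}.
\]

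The conclusion is then immediate: the bracketed companion vector cannot vanish because its third entry equals $2 r^2 z_i / \lambda_i^2 > 0$ under the hypotheses $r>0$, $z_i>0$, $\lambda_i>0$, so the gradient is zero precisely when $v_{ijk}^{y,d} = 0$, i.e.\ when $v_{ijk}$ lies on the $x_d$-axis, which is the line $JK$ by construction of $\Sigma_d$. This is the only locus that must be excluded, matching the hypothesis of the lemma. The main obstacle is not a single estimate but the careful bookkeeping of four implicit derivatives; the payoff of working in $\Sigma_d$ is precisely that the common factor $v_{ijk}^{y,d}/y_i^d$ surfaces in every entry, reducing the validity check to inspecting a single scalar.
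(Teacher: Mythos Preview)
Your proposal is correct and follows essentially the same route as the paper: work in the auxiliary frame $\Sigma_d$, differentiate \eqref{eq:radical_cent_y} implicitly to obtain $\partial v_{ijk}^{y,d}/\partial \bfp_i$, substitute into the gradient of $h_{i,\Delta_\iota,\mc F}^{\mrm k}$, and observe that every component carries the common factor $v_{ijk}^{y,d}/y_i^d$ while the $z_i$-entry of the companion vector is strictly positive. Your explicit factorization is in fact a slightly cleaner presentation of the paper's equations \eqref{eqs:def_hiF_result}; the paper leaves the four components separate, but one can verify (using \eqref{eq:radical_cent_y} to rewrite the $y_i$-entry) that they coincide with your column vector scaled by $v_{ijk}^{y,d}/y_i^d$.
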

\begin{proof}
See Appendix~\ref{ap:hiF}.
\end{proof}

From \eqref{eq:power_dist_eq} and \eqref{eq:h_F2_dp}, the condition derived in Lemma~\ref{thm:h_f} must make the derivatives of %$d_{\mc P}(\bfp_i, v_{ijk})$, 
$d_{\mc P}(\bfp_j, v_{ijk})$ and $d_{\mc P}(\bfp_k, v_{ijk})$ zero vectors too. Lemma~\ref{thm:h_f} proves such a situation, in which the transitions of $v_{ijk}$ on $L_{jk}$ do not influence the value of $d_{\mc P}(\bfp_j, v_{ijk})$ and $d_{\mc P}(\bfp_k, v_{ijk})$, only happens when both the line $Jv_{ijk}$ and $Kv_{ijk}$ are orthogonal to the radical axis $L_{jk}$ as shown in Fig.~\ref{fig:diff_zero}\subref{fig:diff_h_F}. 
Nevertheless, the condition $v_{ijk}^{d,y} = 0$ is generally non-stationary points, hence the state of Quadcopter~$i$ does not keep satisfying such an ill condition.
%\blue{In practice, the state of Quadcopter $i$ does not keep satisfying the condition $v_{ijk}^{d,y} = 0$ as this condition is generically non-stationary points.} 
Hence, in practice, the condition induced from $h_{i, \Delta_{\iota}, \mc F}^{\mrm k}$ can generically yield a valid controller during a monitoring mission.

We next derive the following lemma that indicates $h_{i, \Delta_{\iota}, JKI}^{\mrm k}$ is a valid CBF.
\begin{lemma} \label{thm:h_JKI}
Suppose that Assumption~\ref{asm:non_parallel}, $z_i > 0$, and $\lam_i > 0$ hold. Then, $h_{i, \Delta_{\iota}, JKI}^{\mrm k}$ is a valid CBF when $u_i \in \R^4$.
\end{lemma}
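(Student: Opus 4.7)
The strategy is to mirror the proof pattern of Lemma~\ref{thm:h_f}: invoke Remark~\ref{rem:QP_feas} and reduce the validity of $h_{i,\Delta_{\iota},JKI}^{\mrm k}$ to showing that its gradient with respect to $\bfp_i$ is nonzero. Since the CBF is smooth under Assumption~\ref{asm:non_parallel} (the denominator in the definition of $h_{i,\Delta_{\iota},IJK}^{\mrm k}$ and its cyclic variants is nonzero) and since $\mc C^{\mrm k}$ is nonempty for typical configurations, the only nontrivial task is the gradient check.

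The plan is to work in the coordinate frame $\Sigma_d$ introduced before Lemma~\ref{thm:h_f}, where $J$ and $K$ lie on the $x_d$-axis and $v_{ijk}$ lies on the $y_d$-axis, so that $J^d = [x_J^d,0,0]^\top$, $K^d = [x_K^d,0,0]^\top$, $v_{ijk}^d = [0, v_{ijk}^{y,d}, 0]^\top$, and $I^d = [x_i^d,y_i^d,0]^\top$. Substituting these vectors into the cyclic counterpart of \eqref{eq:h_IJK},
\begin{equation*}
h_{i,\Delta_{\iota},JKI}^{\mrm k} \;=\; \frac{\be_z^\top \bigl(\overrightarrow{JK} \times \overrightarrow{Jv_{ijk}}\bigr)}{\be_z^\top \bigl(\overrightarrow{JK} \times \overrightarrow{JI}\bigr)} \;=\; \frac{(x_K^d - x_J^d)\, v_{ijk}^{y,d}}{(x_K^d - x_J^d)\, y_i^d} \;=\; \frac{v_{ijk}^{y,d}}{y_i^d},
\end{equation*}
which, using \eqref{eq:radical_cent_y}, collapses to
\begin{equation*}
h_{i,\Delta_{\iota},JKI}^{\mrm k} \;=\; \frac{{x_i^d}^2 + {y_i^d}^2 - R_i^2 - \bigl({x_j^d}^2 - R_j^d{}^2\bigr)}{2{y_i^d}^2},
\end{equation*}
with $R_i = r z_i/\lam_i$, and $x_j^d$, $R_j^d$ constant because $\bfp_j$ and $\bfp_k$ are held fixed in the partial derivative. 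This closed form is much simpler than the original CBF expressed in $\Sigma_w$ and is the key payoff of switching frames.

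Next I would directly differentiate with respect to $\bfp_i^d = [x_i^d, y_i^d, z_i, \lam_i]^\top$ and inspect each component of $\partial h_{i,\Delta_{\iota},JKI}^{\mrm k}/\partial \bfp_i$. In particular, using $\partial R_i^2 / \partial z_i = 2 r^2 z_i / \lam_i^2$, one obtains
\begin{equation*}
\frac{\partial h_{i,\Delta_{\iota},JKI}^{\mrm k}}{\partial z_i} \;=\; -\,\frac{r^2 z_i}{\lam_i^2\, {y_i^d}^2},
\end{equation*}
which is strictly negative whenever $z_i > 0$ and $\lam_i > 0$ (note $y_i^d \neq 0$ by Assumption~\ref{asm:non_parallel}, which ensures the three FOV centers are not collinear). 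Therefore $\partial h_{i,\Delta_{\iota},JKI}^{\mrm k}/\partial \bfp_i \neq \mathbf{0}_{4\times 1}$ unconditionally on the admissible domain, so for any extended class-$\mc K$ function $\alpha$ the scalar linear constraint in \eqref{eq:QP_NCBF_valid} is trivially feasible in $u_i \in \R^4$. Since the partial derivatives are continuous in $\bfp_i$, the controller obtained from the QP is locally bounded and measurable, which combined with Remark~\ref{rem:QP_feas} and the smooth composition of the component CBFs yields validity of $h_{i,\Delta_{\iota},JKI}^{\mrm k}$.

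The main obstacle is purely bookkeeping: one must carefully verify that the change of coordinates from $\Sigma_w$ to $\Sigma_d$ does not alter the sign structure needed for validity (it does not, because both frames are fixed in $\mc E$ and the gradient being zero or not is invariant under a fixed rigid planar motion acting on the horizontal components of $\bfp_i$), and that the restriction of the partial derivative to the first two components $(x_i,y_i)$ faithfully transports back. Because the $z_i$-derivative computed above is frame-independent, no pathological set appears for this CBF, which contrasts with Lemma~\ref{thm:h_f} where a zero measure set of configurations arose. This uniform nonvanishing is what makes $h_{i,\Delta_{\iota},JKI}^{\mrm k}$ (and by symmetry $h_{i,\Delta_{\iota},KIJ}^{\mrm k}$) a genuinely valid CBF on the whole admissible domain.
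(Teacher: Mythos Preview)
Your proof is correct and follows essentially the same route as the paper: work in $\Sigma_d$, reduce $h_{i,\Delta_{\iota},JKI}^{\mrm k}$ to $v_{ijk}^{y,d}/y_i^d$, substitute \eqref{eq:radical_cent_y}, and observe that the $z_i$-derivative $-r^2 z_i/(\lam_i^2 {y_i^d}^2)$ never vanishes under the hypotheses. The paper additionally writes out all four partial derivatives, but your shortcut of checking only the frame-independent $z_i$-component is sufficient and slightly cleaner.

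One caveat: your closing parenthetical ``(and by symmetry $h_{i,\Delta_{\iota},KIJ}^{\mrm k}$)'' is not right. The function $h_{i,\Delta_{\iota},JKI}^{\mrm k}$ is special precisely because the line $JK$ is the $x_d$-axis in $\Sigma_d$, which is what collapses the expression to $v_{ijk}^{y,d}/y_i^d$; neither $h_{i,\Delta_{\iota},IJK}^{\mrm k}$ nor $h_{i,\Delta_{\iota},KIJ}^{\mrm k}$ enjoys this simplification, and both in fact admit pathological configurations (see Lemmas~\ref{thm:h_IJK} and~\ref{thm:h_KIJ}). This remark does not affect the proof of the present lemma, but you should drop it.
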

\begin{proof}
See Appendix~\ref{ap:h_JKI}\blue{.}
\end{proof}
Finally, we investigate $h_{i, \Delta_{\iota}, IJK}^{\mrm k}$ and $h_{i, \Delta_{\iota}, KIJ}^{\mrm k}$.
\begin{lemma} \label{thm:h_IJK}
Suppose that Assumption~\ref{asm:non_parallel} and $z_i>0$ hold. Then, except for $\bfp_i$ which makes the radical axis $L_{ij}$ lie on the line $JK$, the function $h_{i, \Delta_{\iota}, IJK}^{\mrm k}$ is a valid CBF when $u_i \in \R^4$.
\end{lemma}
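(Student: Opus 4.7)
My plan is to mirror the proof template used in Lemmas~\ref{thm:h_f} and \ref{thm:h_JKI}: recast $h_{i, \Delta_{\iota}, IJK}^{\mrm k}$ in the coordinate frame $\Sigma_d$, compute the four partial derivatives with respect to $x_i^d$, $y_i^d$, $z_i$, and $\lam_i$, and characterize precisely when they vanish simultaneously. Because $\bfp_j$ and $\bfp_k$ are held fixed when differentiating with respect to $\bfp_i$, the quantities $J_x^d$ and $K_x^d$ are constants, while $v_{ijk}$ reduces via \eqref{eq:sig_d_vx} and \eqref{eq:radical_cent_y} to a scalar $v_{ijk}^{y,d}$. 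Expanding the cross products in \eqref{eq:h_IJK} in $\Sigma_d$, where $J = (J_x^d,0,0)$ and $K = (K_x^d,0,0)$, then yields the closed form
\[
h_{i, \Delta_{\iota}, IJK}^{\mrm k} = \frac{(J_x^d - x_i^d)\, v_{ijk}^{y,d} - J_x^d\, y_i^d}{y_i^d (K_x^d - J_x^d)},
\]
whose denominator is nonzero under Assumption~\ref{asm:non_parallel}.

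The key observation is that $z_i$ and $\lam_i$ enter the expression only through $R_i = rz_i/\lam_i$, which appears solely inside $v_{ijk}^{y,d}$. A direct computation shows that $\partial h_{i, \Delta_{\iota}, IJK}^{\mrm k}/\partial z_i$ and $\partial h_{i, \Delta_{\iota}, IJK}^{\mrm k}/\partial \lam_i$ each factor as a nonzero multiple of $(J_x^d - x_i^d)\,R_i$. Since $R_i > 0$ under the hypotheses $z_i > 0$ and $\lam_i > 0$, both partials vanish if and only if $J_x^d = x_i^d$. Imposing this condition, the partial $\partial h_{i, \Delta_{\iota}, IJK}^{\mrm k}/\partial x_i^d$ collapses to $-v_{ijk}^{y,d}/(y_i^d(K_x^d - J_x^d))$, which forces $v_{ijk}^{y,d} = 0$. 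A short substitution then confirms that $\partial h_{i, \Delta_{\iota}, IJK}^{\mrm k}/\partial y_i^d$ is automatically zero under these two constraints, so that the full gradient vanishes precisely when $J_x^d = x_i^d$ and $v_{ijk}^{y,d} = 0$ simultaneously.

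Both conditions admit a clean geometric reading: $J_x^d = x_i^d$ forces the segment $IJ$ to be parallel to the $y_d$-axis, i.e., to $L_{jk}$, so the radical axis $L_{ij}$, being perpendicular to $IJ$, is parallel to the $x_d$-axis, that is, to line $JK$; and $v_{ijk}^{y,d} = 0$ places $v_{ijk}$ on line $JK$. Since $v_{ijk}$ also lies on $L_{ij}$, the axis $L_{ij}$ must coincide with line $JK$. Outside this pathological locus, $\partial h_{i, \Delta_{\iota}, IJK}^{\mrm k}/\partial \bfp_i \neq \mathbf{0}_{4\times 1}$, and Remark~\ref{rem:QP_feas} then delivers validity of the CBF over $u_i \in \R^4$. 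The main obstacle I anticipate is the bookkeeping for the $\partial/\partial y_i^d$ term: both the numerator and denominator depend on $y_i^d$, so it requires care to verify that $\partial h_{i, \Delta_{\iota}, IJK}^{\mrm k}/\partial y_i^d$ is forced to zero by the other two conditions rather than producing a separate, independent constraint that would shrink the excluded set further.
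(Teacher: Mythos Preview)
Your proposal is correct and follows essentially the same route as the paper: work in $\Sigma_d$, obtain a closed form for $h_{i,\Delta_\iota,IJK}^{\mrm k}$, compute the four partials, and show they vanish simultaneously iff $x_i^d = x_j^d$ and $v_{ijk}^{y,d}=0$, which is then interpreted as $L_{ij}$ coinciding with line $JK$. The only cosmetic difference is that the paper derives the closed form via an area-ratio argument using the $x_d$-intercept $X$ of line $Iv_{ijk}$ and substitutes \eqref{eq:radical_cent_y} immediately, whereas you expand the defining cross products directly and keep $v_{ijk}^{y,d}$ symbolic until the end; the resulting expressions and vanishing conditions are identical.
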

\begin{proof}
See Appendix~\ref{ap:h_IJK}\blue{.}
\end{proof}

Since $h_{i, \Delta_{\iota}, IJK}^{\mrm k}$ and $h_{i, \Delta_{\iota}, KIJ}^{\mrm k}$ are symmetric with respect to $y_d$-axis, the following lemma immediately holds from Lemma~\ref{thm:h_IJK}.
\begin{lemma} \label{thm:h_KIJ}
Suppose that Assumption~\ref{asm:non_parallel} and $z_i>0$ hold. Then, except for $\bfp_i$ which makes the radical axis $L_{ki}$ lie on the line $JK$, the function $h_{i, \Delta_{\iota}, KIJ}^{\mrm k}$ is a valid CBF when $u_i \in \R^4$.
\end{lemma}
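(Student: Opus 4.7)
The plan is to deduce Lemma~\ref{thm:h_KIJ} directly from Lemma~\ref{thm:h_IJK} via the $j \leftrightarrow k$ relabeling symmetry, as the authors suggest. The first step is to make this symmetry precise: I would verify the algebraic identity $h_{i, \Delta_{\iota}, KIJ}^{\mrm k} = h_{i, \Delta_{\iota}, IKJ}^{\mrm k}$, where the right-hand side denotes the expression obtained from the template \eqref{eq:h_IJK} by swapping the labels $J$ and $K$. Starting from the definition of $h_{i, \Delta_{\iota}, KIJ}^{\mrm k}$ and applying the vector identities $\overrightarrow{KI} = -\overrightarrow{IK}$, $\overrightarrow{Kv_{ijk}} = \overrightarrow{Iv_{ijk}} - \overrightarrow{IK}$, and $\overrightarrow{KJ} = \overrightarrow{IJ} - \overrightarrow{IK}$, together with $\overrightarrow{IK}\times\overrightarrow{IK}={\bf 0}$, both the numerator and denominator pick up the same sign flip, and the ratio reduces to $\be_z^\top(\overrightarrow{IK}\times\overrightarrow{Iv_{ijk}})/\be_z^\top(\overrightarrow{IK}\times\overrightarrow{IJ})$, which is precisely $h_{i, \Delta_{\iota}, IKJ}^{\mrm k}$.

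Second, I would invoke Lemma~\ref{thm:h_IJK} applied to the triple $(i,k,j)$ in place of $(i,j,k)$ — i.e., with Quadcopters $j$ and $k$ swapped throughout the setup (the coordinate frame $\Sigma_d$ would then be attached to $L_{kj} = L_{jk}$ and to the line $KJ = JK$). Lemma~\ref{thm:h_IJK} concludes that $\partial h_{i, \Delta_{\iota}, IKJ}^{\mrm k}/\partial \bfp_i$ vanishes precisely when the radical axis between $I$ and the agent playing the ``$J$-role'' — namely Quadcopter $k$ — lies on the line connecting the two non-ego projections. The radical axis in question is $L_{ik} = L_{ki}$, and the line is invariant under the relabeling since line $JK$ and line $KJ$ coincide. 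Combining this with the identity from step one yields the claimed characterization of $\partial h_{i, \Delta_{\iota}, KIJ}^{\mrm k}/\partial \bfp_i \neq {\bf 0}_{4\times 1}$, and by the sufficient conditions recalled in Remark~\ref{rem:QP_feas} together with the smooth-composition structure of Section~\ref{sec:pre}, $h_{i, \Delta_{\iota}, KIJ}^{\mrm k}$ is a valid CBF.

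The only mild obstacle is the sign and base-point bookkeeping in step one: $h_{KIJ}$ is naturally written with $K$ as the base point of the position vectors while Lemma~\ref{thm:h_IJK} is phrased with $I$ as the base point, so some care is needed to verify that the two sign flips cancel. Once the algebraic identification is established, the reduction is essentially automatic and no further geometric analysis is required.
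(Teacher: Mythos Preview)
Your proposal is correct and follows essentially the same symmetry route as the paper: the authors simply remark that $h_{i,\Delta_\iota,IJK}^{\mrm k}$ and $h_{i,\Delta_\iota,KIJ}^{\mrm k}$ are symmetric with respect to the $y_d$-axis and invoke Lemma~\ref{thm:h_IJK} without further detail. Your explicit verification of the identity $h_{i,\Delta_\iota,KIJ}^{\mrm k}=h_{i,\Delta_\iota,IKJ}^{\mrm k}$ and the subsequent relabeling $j\leftrightarrow k$ in Lemma~\ref{thm:h_IJK} is a clean way to make that one-line symmetry argument precise, and the sign bookkeeping you flag as the ``mild obstacle'' indeed cancels exactly as you describe.
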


Fig.~\ref{fig:diff_zero}\subref{fig:diff_h_IJK} illustrates the condition, where $\partial h_{i, \Delta_{\iota}, IJK}^{\mrm k}/ \bfp_i = {\bf 0}_{4\times 1}$ holds. 
Lemma~\ref{thm:h_IJK} and \ref{thm:h_KIJ} revealed that the derivatives of the proposed CBFs do not become zero vectors except for limited conditions. 
Similar to the discussion about $h_{i, \Delta_{\iota}, \mc F}^{\mrm k}$, Quadcopter $i$ does not keep satisfying this condition. %in practice.
Therefore, the controller can render an input that satisfies the constraint incurred by $h_{i, \Delta_{\iota}, IJK}^{\mrm k}$ and $h_{i, \Delta_{\iota}, KIJ}^{\mrm k}$ in practice.

From the discussion in this subsection, it is ensured that, except for a few pathological conditions, Algorithm~\ref{alg:cov_main} employing the QP \eqref{eq:QP_nohole} can find the control input preventing a hole in-between neighboring agents determined by triangular subgraphs $\mc T_i(t)$. 
This results in the following theorem about the validity of the proposed NCBF.
\begin{theorem} \label{thm:Valid_HNCBF}
Suppose that the Assumption~\ref{asm:non_parallel}, $z_i > 0$, $\lam_i>0$, and $\mc C_i^{\mrm k} \neq \emptyset$ hold. Then, with $u_i\in \R^4$, $h_{i, \Delta_{\iota}}^{\mrm k}$ in \eqref{eq:h_mi} is a valid NCBF except for $\bfp_{l\in \Delta_{\iota}}$ satisfying either of the following conditions.
\begin{itemize}
    \item $(v_{ijk} \in JK) ~\land~ 
    (\exists \ell \in I_{i, \Delta_{\iota}, \epsilon}^{\mrm k}~{\rm s.t.}~h_{i, \Delta_{\iota},\ell}^{\mrm k} = h_{i, \Delta_{\iota},\mc F}^{\mrm k})$
    \item $(L_{ij} = JK) ~\land~ 
    (\exists \ell \in I_{i, \Delta_{\iota}, \epsilon}^{\mrm k}~{\rm s.t.}~h_{i, \Delta_{\iota},\ell}^{\mrm k} = h_{i, \Delta_{\iota},IJK}^{\mrm k})$
    \item $(L_{ki} = JK) ~\land~ 
    (\exists \ell \in I_{i, \Delta_{\iota}, \epsilon}^{\mrm k}~{\rm s.t.}~h_{i, \Delta_{\iota},\ell}^{\mrm k} = h_{i, \Delta_{\iota},KIJ}^{\mrm k})$
\end{itemize}
\end{theorem}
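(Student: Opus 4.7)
The plan is to build directly on Remark~\ref{rem:QP_feas}, which reduces the validity of $h_{i,\Delta_\iota}^{\mrm k}$ (under single-integrator dynamics \eqref{eq:si_dyn}) to two sufficient conditions: nonemptiness of $\mc C_i^{\mrm k}$ and nonvanishing of the gradients $\partial h_{i,\Delta_\iota,\ell}^{\mrm k}/\partial \bfp_i$ for every $\ell$ in the almost-active set $I_{\epsilon,\Delta_\iota}^{\mrm k}$. Nonemptiness is given by hypothesis, so the argument reduces to a gradient-nonvanishing check for each active component, combined with verifying that the candidate NCBF is smoothly composed in the sense of Definition~\ref{def:sm_comp}.

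First I would verify the smooth-composition property. The component functions $h_{i,\Delta_\iota,IJK}^{\mrm k}$, $h_{i,\Delta_\iota,JKI}^{\mrm k}$, $h_{i,\Delta_\iota,KIJ}^{\mrm k}$ in \eqref{eq:h_IJK} and $h_{i,\Delta_\iota,\mc F}^{\mrm k}$ in \eqref{eq:h_F} are rational functions of $\bfp_i,\bfp_j,\bfp_k$ whose denominators (the triangle area factor in \eqref{eq:h_IJK} and $y_i^d$ appearing via $v_{ijk}^{y,d}$) are nonzero under Assumption~\ref{asm:non_parallel}. Hence each component is continuously differentiable and the maximum in \eqref{eq:h_mi} yields a smoothly composed candidate NCBF in the sense of Definition~\ref{def:sm_comp}; since $\mc C_i^{\mrm k}\neq\emptyset$, it is also a candidate NCBF in the sense of Definition~\ref{def:candidate_NCBF}.

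Next, I would simply assemble the gradient-nonvanishing conclusion from the four component lemmas already established. Lemma~\ref{thm:h_f} characterizes exactly when $\partial h_{i,\Delta_\iota,\mc F}^{\mrm k}/\partial \bfp_i$ vanishes, namely when $v_{ijk}\in JK$; Lemma~\ref{thm:h_JKI} shows the gradient of $h_{i,\Delta_\iota,JKI}^{\mrm k}$ never vanishes; and Lemmas~\ref{thm:h_IJK} and \ref{thm:h_KIJ} characterize vanishing of $\partial h_{i,\Delta_\iota,IJK}^{\mrm k}/\partial \bfp_i$ and $\partial h_{i,\Delta_\iota,KIJ}^{\mrm k}/\partial \bfp_i$ as $L_{ij}=JK$ and $L_{ki}=JK$, respectively. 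An exceptional configuration $\bfp_{l\in\Delta_\iota}$ therefore violates the gradient-nonvanishing condition of Remark~\ref{rem:QP_feas} only when the corresponding degenerate component is actually \emph{active}, i.e. belongs to $I_{\epsilon,\Delta_\iota}^{\mrm k}$. This is precisely the disjunction of the three cases listed in the theorem statement, which combines the geometric singularity (e.g. $v_{ijk}\in JK$) with membership of the culprit component in the almost-active set.

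The main obstacle, and the only subtle point, is the bookkeeping that links the ``except for'' clauses of Lemmas~\ref{thm:h_f}--\ref{thm:h_KIJ} to the \emph{conjunctive} form of the exceptional set in the statement: a degenerate geometry for an inactive component does not threaten QP feasibility, since only the gradients of functions indexed by $I_{\epsilon,\Delta_\iota}^{\mrm k}$ enter the constraint \eqref{eq:QP_NCBF_valid}. Once this observation is made, the converse direction of the theorem — that outside the listed configurations the QP in Algorithm~\ref{alg:cov_main} is feasible — follows immediately by taking, for each $\ell\in I_{\epsilon,\Delta_\iota}^{\mrm k}$, any $u_i$ aligned with the nonzero gradient $\partial h_{i,\Delta_\iota,\ell}^{\mrm k}/\partial \bfp_i$ scaled sufficiently large, thereby fulfilling the sufficient conditions of Remark~\ref{rem:QP_feas}. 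Hence $h_{i,\Delta_\iota}^{\mrm k}$ is a valid NCBF on the complement of the stated exceptional set, completing the argument.
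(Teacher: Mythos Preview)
Your proposal is correct and follows essentially the same route as the paper: invoke Lemmas~\ref{thm:h_f}--\ref{thm:h_KIJ} to certify that each component gradient $\partial h_{i,\Delta_\iota,\ell}^{\mrm k}/\partial \bfp_i$ is nonzero outside the listed geometric degeneracies, pair each degeneracy with the activity condition $\ell\in I_{\epsilon,\Delta_\iota}^{\mrm k}$, and conclude validity through the sufficient conditions of Remark~\ref{rem:QP_feas}. Your added verification of the smooth-composition property and the explicit bookkeeping about why only active components matter are welcome elaborations, but the underlying argument is the same as the paper's.
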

\begin{proof}
From Lemma\,\ref{thm:h_f}-\ref{thm:h_KIJ}, $\partial h_{i, \Delta_{\iota},\ell}^{\mrm k}(\bfp_{l\in \Delta_{\iota}})/\partial \bfp_i \neq {\bf 0}_{4\times 1}$ holds for all agents composing $\Delta_{\iota}$, in the domain $\mc D_i^{\mrm k}$ except for $\bfp_{l\in \Delta_{\iota}}$ satisfying either of the above three conditions. Namely, if neither of the three conditions holds, there exists an input $u_i\in \R^4$ satisfying constraints in both forms \eqref{eq:CBF_cent} and \eqref{eq:CBF_dist}.
\end{proof}
Theorem~\ref{thm:Valid_HNCBF} guarantees that the QP \eqref{eq:QP_nohole} in each switching interval ${\mrm k}\in \bN$ yields a control input preventing the appearance of a hole in most parts of its domain. Hence, from Theorem~\ref{thm:paul_HNCBF}, the proposed distributed algorithm can render the set $\mc C_i$ hybrid forward invariant except for the pathological conditions stated in Theorem~\ref{thm:Valid_HNCBF}. 
Note that Theorem~\ref{thm:Valid_HNCBF} presumes $\mc C_i^{\mrm k}\neq \emptyset$ holds. This corresponds with the fact that Quadcopter~$i$ always has a trivial solution to eliminate a hole, where Quadcopter~$i$ expands its FOV to cover other Quadcopters FOVs $\mc F_l,~\forall l\in \mc N_i$ entirely.
In summary, the proposed QP \eqref{eq:QP_nohole} has a feasible solution for preventing holes, except for a few limited conditions that do not cause severe issues in practice.

\section{Coverage Control} \label{sec:cov_cont}

Having designed the control method that prevents the appearance of a hole in-between a team, we present the nominal control input $u_{i, {\rm nom}}$ in \eqref{eq:QP_nohole} that maximizes the monitoring quality. 
The proposed control input improves a coverage cost that simultaneously serves two objectives: (a) quantifying the performance of all the cameras, including the appropriate zoom level and distance from the ground and (b) reducing the overlaps of FOVs among a team.
%reducing the overlap between FOVs of quadcopters.

%The quality of surveillance of a point $q\in \mc Q$ can be quantified with the model proposed in \cite{Arslan2018},
The monitoring performance of a point $q\in \mc Q$ can be quantified with the model proposed in \cite{Arslan2018},
\begin{equation}\label{eq:sensing_function}
f(\bfp_i, q) = f_{\pers}(\bfp_i, q)f_{\res}(\bfp_i, q),
\end{equation}
with $f_{\pers}(\bfp_i, q)$ characterizing the perspective quality
\begin{equation*}\label{eq:f_pers}
%f_{\pers}(\bfp_i, q) := \frac{\lambda_i^2+r^2}{r^2}\left(\frac{z_i}{\|p_i-q\|}-\frac{\lambda_i}{\lambda_i^2+r^2}\right)
f_{\pers}(\bfp_i, q) := \frac{\sqrt{\lam_i^2 + r^2}}{\sqrt{\lam_i^2 + r^2}-\lam_i} \left( \frac{z_i}{\| q-p_i \|} - \frac{\lambda_i}{\sqrt{\lam_i^2 + r^2}} \right),
\end{equation*}
and $f_{\res}(\bfp_i, q)$ the loss of resolution
\begin{equation*}\label{eq:f_res}
f_{\res}(\bfp_i, q):= \left( \frac{\lambda_i}{\sqrt{\lambda_i^2+r^2}} \right)^\kappa \text{exp}\left(-\frac{(\|q-p_i\|-M)^2}{2\sigma^2} \right).
\end{equation*}
Note that the parameters $\kappa, \sigma > 0$ signify the spatial resolution property of a camera, and $M>0$ models the desired distance to capture an environment.
Note that $f_\textrm{res}$ renders a higher value as the focal length $\lambda_i$, namely the zoom level, increases, though the size of FOV becomes smaller.
If a point $q\in \mc Q$ is not in $\mc F_i$, then we set
\begin{equation*}\label{eq:sensing_performance_zero}
f(\bfp_i, q) = 0, \quad q\notin \mc F_i.
\end{equation*}

The quality of coverage achieved by the team of quadcopters is characterized as
\begin{equation}\label{eq:cost_coverage_quality}
 \mc H_{\mc M}(\bfp) = \int_{\mc Q} \max_{i\in \N}f(\bfp_i, q) \phi(q) \rd q,
\end{equation}
with $\bfp = [\bfp_1^\top,\cdots, \bfp_n^\top]^\top$ and a density function $\phi(q)$, which achieves a higher value when the mission space $\mc Q$ is well monitored.
By introducing the region allocated to Quadcopter~$i$ according to the conic Voronoi diagram \cite{Arslan2018}
\begin{equation}\label{eq:Voronoi_partition}
    \mc V_i(\bfp) \!=\! \{q\in\mc Q\cap \mc F_i\mid f(\bfp_i, q)\!\geq\!f(\bfp_j, q), j\in\mc N \backslash \{ i\} \},
\end{equation}
the cost in \eqref{eq:cost_coverage_quality} can be rewritten as
\begin{equation}\label{eq:cost_coverage_quality_Voronoi}
    \mc H_{\mc M}(\bfp) = \sum_{i\in\mc N}\int_{\mc V_i(\bfp)} f(\bfp_i, q) \phi(q) \rd q.
\end{equation}

However, the coverage cost \eqref{eq:cost_coverage_quality_Voronoi} does not penalize the overlaps of FOVs, which might lead several quadcopters to monitor a same region.
These overlaps of FOVs can deteriorate the monitoring performance since a team could potentially monitor a wider area by prompting those quadcopters to cover different areas.
Let us define the region monitored by Quadcopter~$i$ where another quadcopter in a team has a better sensing quality as
\begin{equation*}\label{eq:Voronoi_partition_overlap}
    \bar{\mc V}_i(\bfp) = \{q\in\mc Q \cap \mc F_i\mid f(\bfp_i, q)< f(\bfp_j, q), j\in\mc N \backslash \{ i\}\}.
\end{equation*}
Then, the performance loss induced by the overlaps of FOVs is evaluated as
\begin{equation}\label{eq:cost_overlap}
    \mc H_{\mc O}(\bfp) = \sum_{i\in\mc N}\int_{\bar{\mc V}_i(\bfp)} f(\bfp_i, q) \phi(q) \rd q.
\end{equation}

% 目的関数の導入
As we are interested in minimizing the overlap of FOVs evaluated in \eqref{eq:cost_overlap} while maximizing the coverage quality in \eqref{eq:cost_coverage_quality_Voronoi}, the overall objective can be encoded as 
\begin{equation}\label{eq:locational_cost}
 \mc H(\bfp) = \mc H_{\mc M}(\bfp) - w \mc H_{\mc O}(\bfp),
\end{equation}
with the weight $w>0$.

The nominal control input $u_{i, \rm nom}$ to maximize the objective function \eqref{eq:locational_cost} can be obtained in a distributed manner by following a gradient-ascent method as 
\begin{align}\label{eq:pi_gradient_H}
    u_{i, {\rm nom}} =  \frac{\partial\mc H(\bfp)}{\partial \bfp_i}, \quad i\in\mc N, %\blue{\Gamma}
\end{align}
%\blue{where $\Gamma = \diag(\gamma, \gamma, \gamma, \gamma_\lambda)$ is the gain for the coverage control input. Note that $\Gamma$ also adjusts the unit differences between $p_i$ and $\lambda_i$.}
In our conference work \cite{Funada19}, the gradient of $\mc H$ is derived as %\red{(detailed explanation written in ICRA19 might be required...)}
\begin{equation}\label{eq:gradient_H}
 \frac{\partial\mc H(\bfp)}{\partial \bfp_i} = \int_{\mc V_i}\frac{\partial f(\bfp_i, q)}{\partial\bfp_i}^\top\phi(q)dq - w \int_{\bar{\mc V}_i}\frac{\partial f(\bfp_i, q)}{\partial\bfp_i}^\top\phi(q)\rd q.
\end{equation}
The following theorem in our conference paper \cite{Funada19} guarantees the convergence of the proposed coverage input $u_{i, \rm nom}$.
%\red{As written in \cite{Schwager2011}, we could say the convergence to the local optimal solution}
\begin{theorem} %\cite[Theorem~1]{Funada19}
Let Quadcopter $i$, with state $\bfp_i=[p_i^\top~ \lambda_i]^\top$, evolve according to the control law $\dot\bfp_i=u$, with
\begin{equation}\label{eq:control_input}
 u = \int_{\mc V_i}\frac{\partial f(\bfp_i, q)}{\partial\bfp_i}^\top\phi(q)\rd q - \int_{\bar{\mc V}_i}\frac{\partial f(\bfp_i, q)}{\partial\bfp_i}^\top\phi(q)\rd q,
\end{equation}
then, as $t\to\infty$, the team of quadcopters will converge to a critical point of the locational cost in \eqref{eq:locational_cost}.
\end{theorem}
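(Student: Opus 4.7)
The plan is to apply a Lyapunov-type argument combined with LaSalle's invariance principle, using $\mc H(\bfp)$ as the scalar function to be maximized along closed-loop trajectories. First I would observe that the control law \eqref{eq:control_input} is, up to the weighting coefficient, identical to the gradient $\partial \mc H/\partial \bfp_i$ in \eqref{eq:gradient_H}, so that the closed-loop dynamics can be written as $\dot\bfp_i = (\partial \mc H/\partial \bfp_i)^\top$ for each $i \in \mc N$. This is the key structural fact on which the rest of the argument hinges; it turns the coupled multi-agent system into a gradient-ascent flow on the joint state $\bfp$.

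With this identification in hand, the second step is to evaluate the time derivative of $\mc H$ along trajectories:
\begin{equation*}
\dot{\mc H}(\bfp(t)) \;=\; \sum_{i\in\mc N} \frac{\partial \mc H}{\partial \bfp_i}\,\dot\bfp_i \;=\; \sum_{i\in\mc N} \left\| \frac{\partial \mc H}{\partial \bfp_i} \right\|^2 \;\geq\; 0.
\end{equation*}
For this identity to be valid, one must handle the fact that $\mc V_i(\bfp)$ and $\bar{\mc V}_i(\bfp)$ depend on $\bfp$, so differentiating \eqref{eq:cost_coverage_quality_Voronoi} and \eqref{eq:cost_overlap} invokes a Leibniz/Reynolds-type rule that produces boundary-flux terms. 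The standard coverage-control argument, as in \cite{Cortes2004}, is that on the common boundary of $\mc V_i$ and $\mc V_j$ the integrand satisfies $f(\bfp_i,q)=f(\bfp_j,q)$ by definition of \eqref{eq:Voronoi_partition}, so the contributions from neighboring cells cancel pairwise, and on $\partial \mc F_i$ the integrand vanishes because $f=0$ outside the FOV. I would invoke the derivation of \eqref{eq:gradient_H} from \cite{Funada19} to formalize these cancellations in the present conic setting.

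Third, since $\mc Q$ is compact, $\phi$ is bounded, and $f(\bfp_i,q)$ is uniformly bounded on $\mc F_i$, the cost $\mc H$ is bounded above. Combined with $\dot{\mc H}\geq 0$, this implies $\mc H(\bfp(t))$ is monotonically non-decreasing and bounded, hence converges; therefore $\dot{\mc H}\to 0$. LaSalle's invariance principle then implies that the trajectory approaches the largest invariant set contained in $\{\bfp : \dot{\mc H}(\bfp)=0\}$. By the displayed formula, this set coincides with $\{\bfp : \partial \mc H/\partial \bfp_i = 0,\ \forall i \in \mc N\}$, i.e.\ the critical points of $\mc H$, which completes the argument.

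The main obstacle is the rigorous justification of the $\dot{\mc H}$ computation, because the conic Voronoi partition \eqref{eq:Voronoi_partition} is more delicate than the Euclidean one and can undergo topological changes when FOV overlaps appear or disappear as $\bfp$ evolves. I expect the boundary cancellation to go through generically by continuity of $f$ across cell boundaries, with degenerate configurations (ties in sensing quality, tangent FOV circles, vanishing cells) forming a measure-zero set that can be absorbed via a nonsmooth/Filippov extension of LaSalle. Apart from this technical point, the remainder of the proof is the routine monotone-plus-bounded convergence plus invariance-principle pattern.
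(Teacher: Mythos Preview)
The paper does not reproduce a proof here; it simply defers to Theorem~1 of the conference paper \cite{Funada19}. Your Lyapunov/LaSalle argument---identify $u$ with $\partial\mc H/\partial\bfp_i$, show $\dot{\mc H}=\sum_i\|\partial\mc H/\partial\bfp_i\|^2\geq 0$ via the boundary-cancellation computation, then invoke boundedness of $\mc H$ and the invariance principle---is exactly the standard route in the coverage-control literature (\cite{Cortes2004,Schwager2011,Arslan2018}) and is almost certainly what \cite{Funada19} does, so your plan matches the intended proof.

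One caution on the step you pass over with ``up to the weighting coefficient'': as printed, \eqref{eq:control_input} omits the factor $w$ that appears in \eqref{eq:gradient_H}, so literally $u\neq(\partial\mc H/\partial\bfp_i)^\top$ when $w\neq 1$, and then $\dot{\mc H}=\sum_i\langle a_i-w b_i,\,a_i-b_i\rangle$ with $a_i=\int_{\mc V_i}(\cdot)$, $b_i=\int_{\bar{\mc V}_i}(\cdot)$ is not sign-definite. This is almost surely a typo in the theorem statement (compare \eqref{eq:pi_gradient_H}, where $u_{i,\mathrm{nom}}$ is set equal to the full gradient including $w$); your argument is valid once the missing $w$ is restored, but you should state that explicitly rather than absorb it into a parenthetical.
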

\begin{proof}
See Theorem~1 in our conference work \cite{Funada19}.
\end{proof}

\begin{remark} \label{rem:computation}
    The computational complexity of the proposed control strategy can be calculated by evaluating the three processes involved: the computation of the power diagram, the QP optimization problem \eqref{eq:QP_nohole}, and the coverage control input \eqref{eq:control_input}. First, the computational complexity of finding the Voronoi cell of Quadcopter~$i$ in a distributed manner is known as $\mc O( M_i\log(M_i) )$, where $M_i$ stands for the number of neighbors Quadcopter~$i$ has to consider. In the worst-case scenario, where Quadcopter~$i$'s FOV is overlapped with all others' FOVs, the computational complexity is $\mc O( (n-1)\log(n-1) )$. Second, the QP \eqref{eq:QP_nohole} can be solved by the interior point method, the computational complexity of which is $\mc O (N^{3.5})$, where $N$ stands for the dimension of the optimization variable. Since the optimization variable of the QP~\eqref{eq:QP_nohole} is $\bfp_i~\in~\R^4$, ($N=4$), its computational burden is low. Finally, the coverage control input is calculated by discretizing the integral in~\eqref{eq:control_input}, similar to \cite{Schwager2011}. Its computational complexity %of this coverage control law 
    depends on how many grids we utilize for a discretized approximation. In the current implementation, the computation time of~\eqref{eq:control_input} is around several milliseconds. % with a reasonably accurate approximation. 
    More details on the computational complexity of calculating the Voronoi cell, the QP with CBF, and the coverage control law can be found in \cite{He2018}, \cite{Wang17_swarm}, and \cite{Schwager2011}, respectively.
\end{remark}
\section{Simulations and Experiments}

\subsection{Comparison between NCBF $h_{i, \Delta_{\iota}}^{\mrm k}$ and CBF $h_{i, \Delta_{\iota}, \mc F}^{\mrm k}$} \label{ssec:sim_comp_trio}
We first exemplify how the proposed NCBF $h_{i, \Delta_{\iota}}^{\mrm k}$ prevents a hole while reducing too conservative behavior that appears in the case if we utilize $h_{i, \Delta_{\iota}, \mc F}^{\mrm k}$ alone, as discussed in Section~\ref{ssec:hole_CBF}. 
Minimizing how much the nominal controller, such as the coverage controller \eqref{eq:pi_gradient_H}, is modified is crucial for optimizing the coverage performance because too conservative constraints hinder the controller from maximizing the coverage cost, resulting in a significant deterioration of the monitoring performance. 
%\blue{Minimizing the modification from the nominal controller, such as the coverage controller \eqref{eq:pi_gradient_H}, is crucial for optimizing the coverage performance because too conservative constraints hinder the controller from maximizing the coverage cost, resulting in a significant deterioration of the monitoring performance.}
Let us consider a team of three quadcopters composing $\Delta_{\iota} = \{i, j, k \}$, where the green Quadcopter~$i$ moves upwards while the other two quadcopters stay in the initial positions, as shown in Fig.~\ref{fig:CBF_NCBF_snap}. 
The power diagram is depicted in blue, with $\triangle{IJK}$ in red.
Note that the intersection of three boundaries corresponds with the radical center $v_{ijk}$.

Fig.~\ref{fig:CBF_NCBF_snap}\subref{fig:sim_trio_Initial} depicts the initial configuration of the three quadcopters. 
Figs.~\ref{fig:CBF_NCBF_snap}\subref{fig:sim_trio_NCBF1} and \subref{fig:sim_trio_NCBF2} show the simulation results with the proposed NCBF $h_{i, \Delta_{\iota}}^{\mrm k}$. 
Because $h_{i, \Delta_{\iota}}^{\mrm k}$ allows the radical center $v_{ijk}$ to leave $\mc F_i$ when $v_{ijk} \notin \triangle{IJK}$ as shown in Fig.~\ref{fig:CBF_NCBF_snap}\subref{fig:sim_trio_NCBF1}, Quadcopter $i$ successfully passes through in-between two agents without being disturbed by $h_{i, \Delta_{\iota}}^{\mrm k}$. 
In contrast, since $h_{i, \Delta_{\iota}, \mc F}^{\mrm k}$ does not allow $v_{ijk}$ to leave $\mc F_i$, $h_{i, \Delta_{\iota}, \mc F}^{\mrm k}$ overly modifies the nominal input for Quadcopter~$i$ even if the nominal input does not produce a hole. 
This overly modified input makes $\mc F_i$ smaller so that Quadcopter $i$ can keep moving upwards while $v_{ijk}$ is inside of $\mc F_i$, though such modification is unnecessary to prevent a hole, as shown in Figs.~\ref{fig:CBF_NCBF_snap}\subref{fig:sim_trio_CBF1}-\subref{fig:sim_trio_CBF3}.
%This overly modified input makes $\mc F_i$ smaller so that Quadcopter $i$ can keep moving upwards while the radical center $v_{ijk}$ is inside of $\mc F_i$.

%More specifically, the CBF makes the FOV smaller so that Quadcopter can keep moving upwards while the radical center is inside of ABC. 

%The top row of Fig.~\ref{fig:CBF_NCBF_snap} shows the simulation results with $h_{i, \mc F}$ only.
%Since $h_{i, \mc F}$ does not allow the radical center $v_{ijk}$ to leave $\mc F_i$, Quadcopter $i$ cannot smoothly pass through in-between two other agents to reach the upper region. 

The evolution of $h_{i, \Delta_{\iota}}^{\mrm k}$ and $h_{i, \Delta_{\iota}, \mc F}^{\mrm k}$ in each simulation is shown in Fig.~\ref{fig:CBF_NCBF_comp}. $h_{i, \Delta_{\iota}}^{\mrm k}$ in the proposed method switches its value from $h_{i, \Delta_{\iota}, \mc F}^{\mrm k}$ to $-h_{i, \Delta_{\iota}, JKI}^{\mrm k}$ around 0.1s and back at 3.1s, accompanied by the nonsmooth transitions in Fig.~\ref{fig:CBF_NCBF_comp}. 
In contrast, if we utilize only $h_{i, \Delta_{\iota}, \mc F}^{\mrm k}$, its value continues to keep a small value until $5.2$\,s, resulting in too conservative modification of the nominal input.

Note that $h_{i, \Delta_{\iota}}^{\mrm k}$ in the proposed algorithm takes a significantly large value from $1$\,s to $2.2$\,s. This peak is associated with the configuration of a team that makes $\triangle{IJK}$ a degenerate triangle, namely, the unsatisfaction of Assumption~\ref{asm:non_parallel}, which leads the denominator of \eqref{eq:h_IJK} to a small value. 
Nevertheless, since the area of $\triangle{IJK}$ is close to zero and the radical center $v_{ijk}$ exists at infinity under such a degeneracy, the condition $v_{ijk} \notin \triangle{IJK}$ is satisfied. 
Therefore, we can regard that the team is still in the safe set defined as the zero-superlevel set of \eqref{eq:h_mi}. 
Still, a too large value in the NCBF % and its gradient 
could cause a numerical error when we solve the QP. 
To circumvent this problem in the implementation, if one of $\{-h_{i,\Delta_{\iota},IJK}, -h_{i,\Delta_{\iota},JKI}, -h_{i,\Delta_{\iota},KIJ} \}$ is listed in $I_{i, \Delta_{\iota}, \epsilon}$ and takes a significantly large value, we allow the QP not to evaluate the corresponding constraint because the team is deemed not to break the constraint in the next few steps.

\begin{figure}[t!]%\label{fig:exp_snap}
\vspace{-2mm}
    \centering
    \subfloat[Initial configuration]
    {\makebox[0.31\hsize][c]{\includegraphics[width=0.3\linewidth]{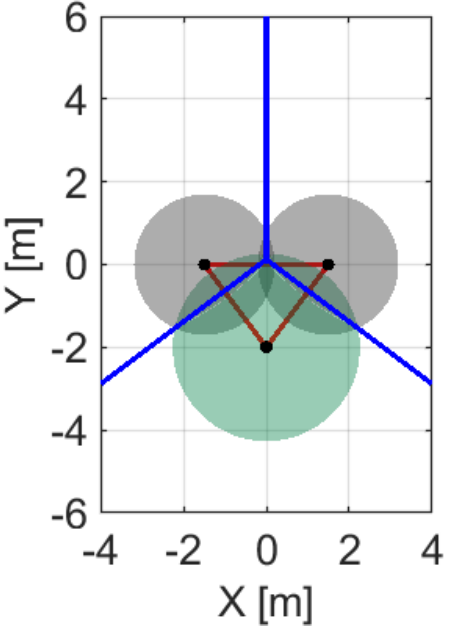}
    \label{fig:sim_trio_Initial}}} \,
    \subfloat[$t=1$\,s, $h_{i, \Delta_{\iota}}^{\mrm k}$]
    {\makebox[0.31\hsize][c]{\includegraphics[width=0.3\linewidth]{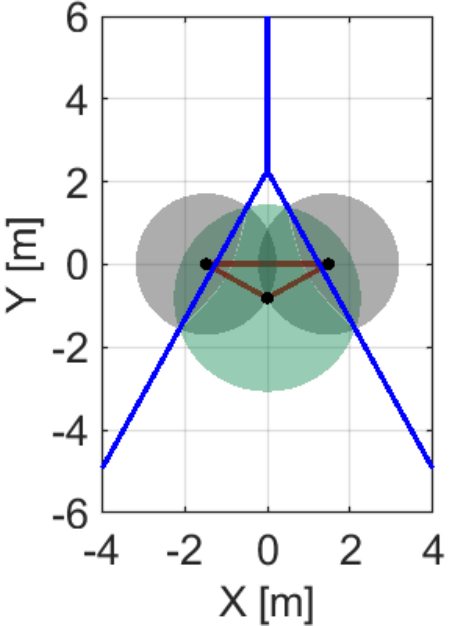}
    \label{fig:sim_trio_NCBF1}}} \,
    \subfloat[$t= 5$\,s, $h_{i, \Delta_{\iota}}^{\mrm k}$]
    {\makebox[0.31\hsize][c]{\includegraphics[width=0.3\linewidth]{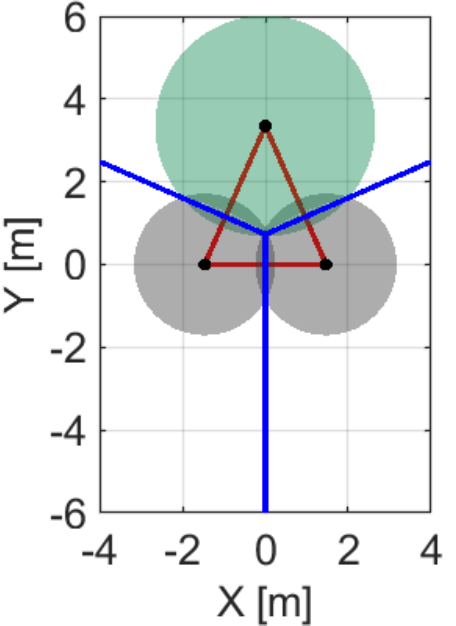}
    \label{fig:sim_trio_NCBF2}}} \\
    \subfloat[$t=2$\,s, $h_{i, \Delta_{\iota}, \mc F}^{\mrm k}$]
    {\makebox[0.31\hsize][c]{\includegraphics[width=0.3\linewidth]{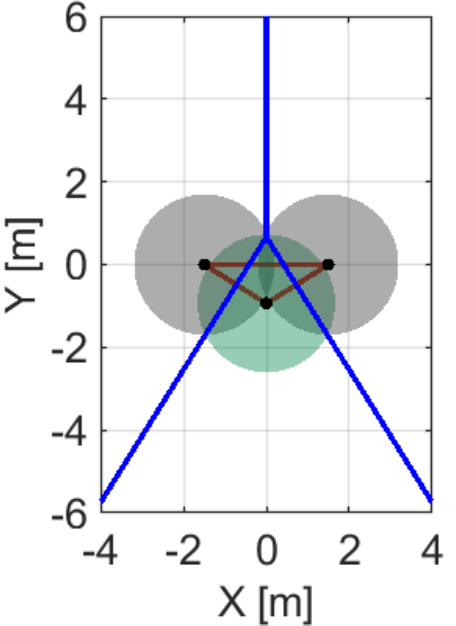}
    \label{fig:sim_trio_CBF1}}} \,
    \subfloat[$t=4$\,s, $h_{i, \Delta_{\iota}, \mc F}^{\mrm k}$]
    {\makebox[0.31\hsize][c]{\includegraphics[width=0.3\linewidth]{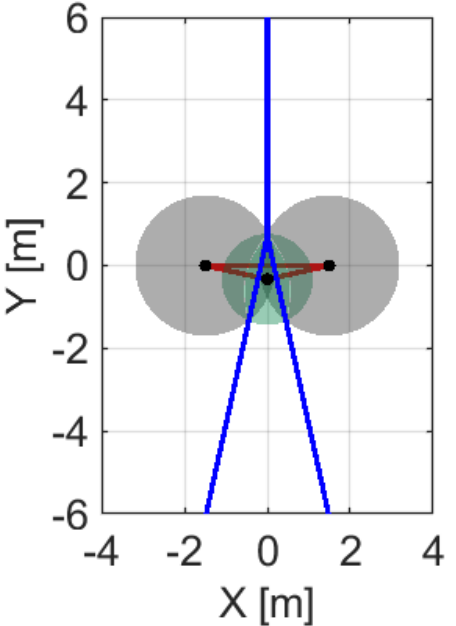}
    \label{fig:sim_trio_CBF2}}} \,
    \subfloat[$t=7$\,s, $h_{i, \Delta_{\iota}, \mc F}^{\mrm k}$]
    {\makebox[0.31\hsize][c]{\includegraphics[width=0.3\linewidth]{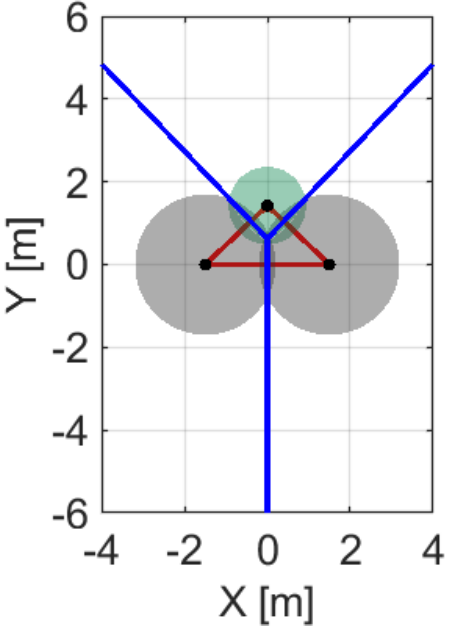}
    \label{fig:sim_trio_CBF3}}}
    \caption{Snapshots of the simulations to compare the proposed algorithm (upper row) and the method utilizing only $h_{i, \Delta_{\iota}, \mc F}^{\mrm k}, \forall i \in \mc N$ (lower row). The proposed algorithm successfully prevents the appearance of a hole while eliminating the conservative behavior shown in the lower row.}
    \label{fig:CBF_NCBF_snap}
\end{figure}

\begin{figure}
    \centering
    \includegraphics[trim = 0cm 0.1cm 0cm 0.4cm, clip=true, width=0.75\linewidth]{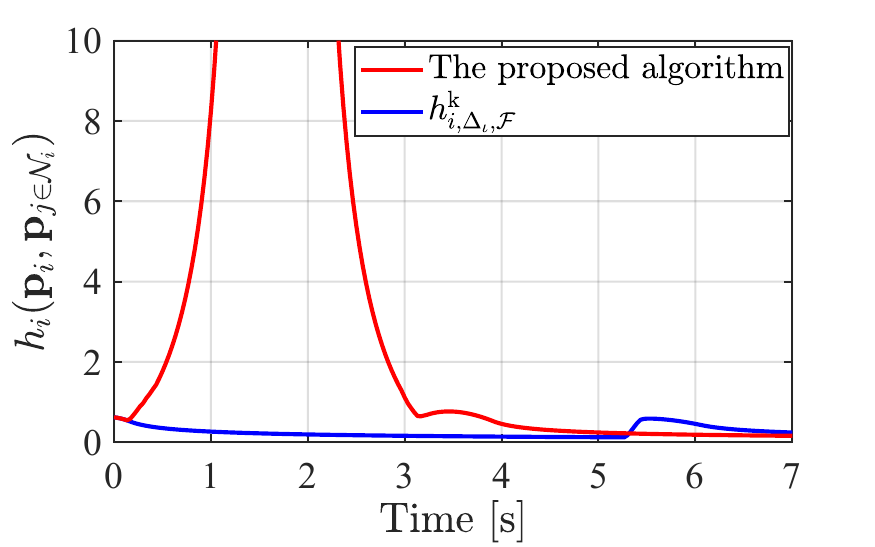}
    \caption{Evolution of $h_{i, \Delta_{\iota}}^{\mrm k}$ and $h_{i, \Delta_{\iota}, \mc F}^{\mrm k}$. There exist nonsmooth points in the transition of $h_{i, \Delta_{\iota}}^{\mrm k}$ in the proposed method at $t=0.1$\,s and $t=3.1$\,s, respectively. These points exhibit the switching from $h_{i, \Delta_{\iota}, \mc F}^{\mrm k}$ to $-h_{i, \Delta_{\iota}, JKI}^{\mrm k}$ and back induced from the max operator in the proposed NCBF $h_{i, \Delta_{\iota}}^{\mrm k}$.}
    \label{fig:CBF_NCBF_comp}
\end{figure}

\subsection{Comparison to the previous work \cite{Funada20}}

% Simulationの動機づけ，parameterに関しての説明

Let us next compare the proposed method with the authors' previous work \cite{Funada20} to demonstrate Algorithm~1 can prevent the appearance of holes even in scenarios where the one in \cite{Funada20} cannot. Fig.~\ref{fig:comp_feas}\subref{fig:initial_comp} shows the initial configuration of the five quadcopters composing two trios. The green Quadcopter~$i$ has the nominal input decreasing its altitude with 3\,m/s, while the other four quadcopters stay in the initial positions. The power diagram is depicted in blue, with each trio shown in red. The parameters are set as $\omega_{\lam} = 3.0\times 10^6$ and $\epsilon = 0.02$, with the extended class-$\mc K$ function $\alpha(h) = h^3$.

% Simulationの結果
The final configuration of the team with the proposed algorithm and the method in \cite{Funada20} are shown in Fig.~\ref{fig:comp_feas}\subref{fig:fin_comp_prop} and Fig.~\ref{fig:comp_feas}\subref{fig:fin_comp_ICRA20}, respectively. 
First, the proposed method successfully prevents the appearance of the holes by modifying the nominal input. The prevention of the holes can also be confirmed in Fig.~\ref{fig:comp_feas_NCBF}, where each trio's NCBF, shown in red and blue lines, keeps positive values. On the other hand, the method in \cite{Funada20} suffers from the oscillation of each NCBF, and both of the trios' NCBF cannot remain positive, as depicted in violet and orange lines. After 3.1\,s, the left trio is disconnected, as shown in Fig.~\ref{fig:comp_feas}\subref{fig:fin_comp_ICRA20}, and only the right trio's NCBF is evaluated.
The simulation study demonstrates the proposed algorithm successfully prevents the appearance of holes even in scenarios where the method in \cite{Funada20} cannot.

\begin{figure}[t!]%\label{fig:exp_snap}
\vspace{-2mm}
    \centering
    \subfloat[Initial configuration]
    {\makebox[0.31\hsize][c]{\includegraphics[trim = 0.1cm 0cm 1.1cm 0.6cm, clip=true, width=0.31\linewidth]{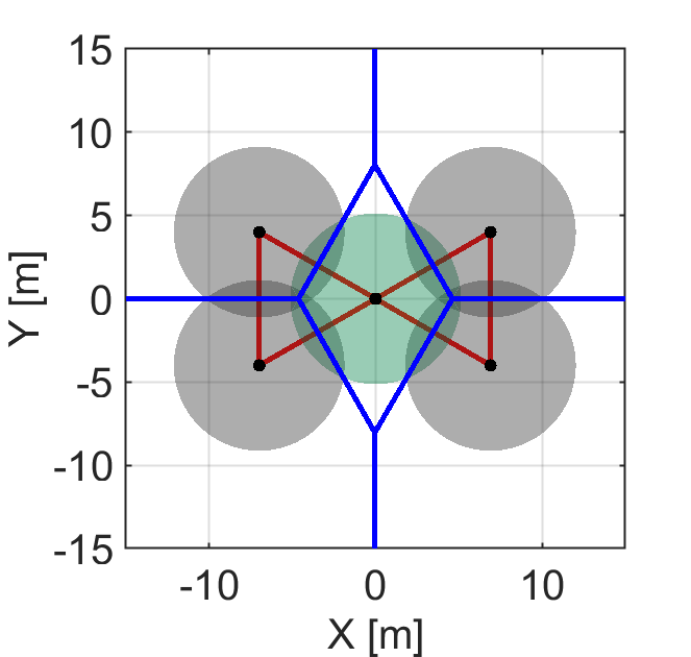}
    \label{fig:initial_comp}}} \,
    \subfloat[$t=4$\,s with the proposed method]
    {\makebox[0.31\hsize][c]{\includegraphics[trim = 0.1cm 0cm 1.1cm 0.6cm, clip=true, width=0.31\linewidth]{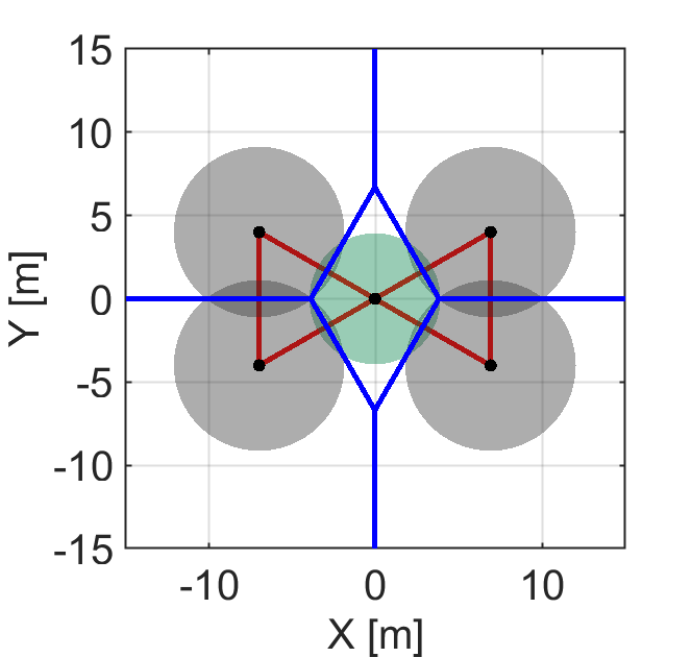}
    \label{fig:fin_comp_prop}}} \,
    \subfloat[$t=4$\,s with the method in \cite{Funada20}.]
    {\makebox[0.31\hsize][c]{\includegraphics[trim = 0.1cm 0cm 1.1cm 0.6cm, clip=true, width=0.31\linewidth]{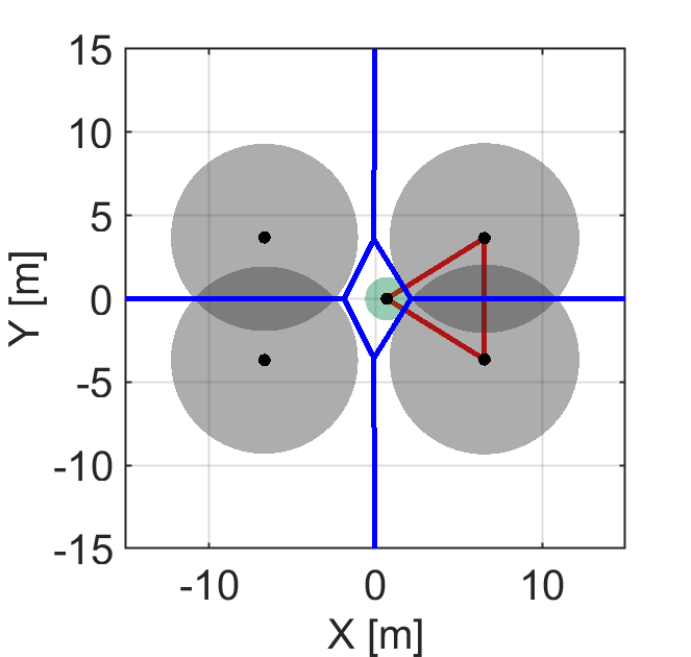}
    \label{fig:fin_comp_ICRA20}}}
    \caption{Snapshots of the simulations to compare the proposed algorithm and the method in \cite{Funada20}.}
    \label{fig:comp_feas}
\end{figure}

\begin{figure}[t!]
    \centering
    \includegraphics[width=0.6\linewidth]{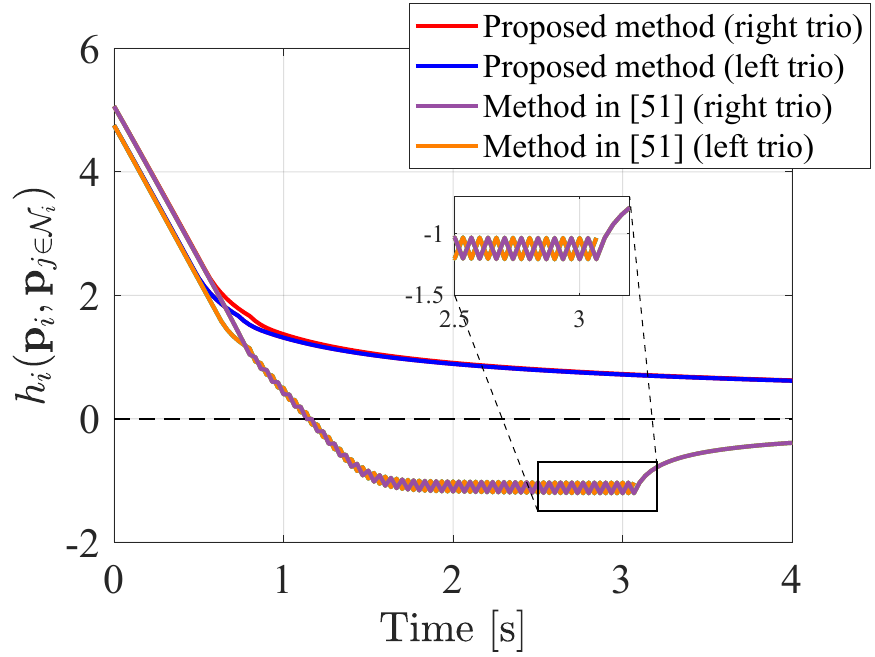}
    \caption{Evolutions of the NCBFs.}
    \label{fig:comp_feas_NCBF}
\end{figure}

% この結果が得られる理由
The main reason the method in \cite{Funada20} violates the constraint is that Quadcopter~$i$ with the method in \cite{Funada20} only evaluates the minimum NCBF among all the trio it belongs to and the NCBFs in the range of $\epsilon$ from the minimum NCBF.
%with the method in \cite{Funada20} only evaluates the NCBFs, which take the minimum value among all the trios it belongs to and the one in the range of $\epsilon$ from the minimum. 
%The main reason for the difference in the results is that Quadcopter~$i$ with the method in \cite{Funada20} only evaluates the NCBFs, which take the minimum NCBF among all the trios Quadcopter~$i$ belongs and the one in the range of $\epsilon$ from the minimum. 
Therefore, if the pre-determined $\epsilon$ is smaller than the quadcopters' speed, the NCBF could be violated by showing chattering, as shown in Fig.~\ref{fig:comp_feas_NCBF}. 
%Note that $\epsilon = 0.02$ utilized in the above simulation study is reasonable value considering the paper first presented NCBF [48] utilizes much smaller value, $\epsilon = 0.007$. 
Algorithm~\ref{alg:cov_main} solves this issue by means of Theorem~4. Theorem~4 utilizes the symmetric property of the proposed NCBF to develop a constraint that each quadcopter can evaluate locally and shows that each quadcopter has to evaluate all the NCBFs of the trio that each quadcopter belongs, as shown in \eqref{eq:QP_nohole}. Because of this theoretical guarantee and improvements in Algorithm~\ref{alg:cov_main}, the proposed method successfully prevents holes.

\subsection{Simulation with Nine Quadcopters} \label{ssec:sim_nine}

In the following simulation, the proposed algorithm is implemented with the coverage control law to the team of nine quadcopters. We then compare the algorithm with the nominal input in \eqref{eq:pi_gradient_H}. The size of the mission space is $60\times 60\,{\rm m}^2$, where the density function is depicted as the contour maps in Fig.~\ref{fig:sim9}.
The parameters are set as $\kappa = 4, \sigma = 3, M = 11$, $w=0.4$, $w_\lam = 3.0\times 10^6$, $\epsilon = 0.2$  %\blue{$\gamma = 4.8\times 10^{-4}$, $\gamma_{\lambda} = 6\times 10^{-11}$} 
with the extended class-$\mc K$ function $\alpha(h) = h^3$. 
%The contour maps in Fig.~\ref{fig:sim9} show the density function having three peaks.
%The density function is shown as the contour maps in Fig.~\ref{fig:sim9}, which has three peaks 

We then run the proposed algorithm from the initial configuration in Fig.~\ref{fig:sim9}\subref{fig:sim9_k1}, where the red and blue lines show the triangular subgraphs $\mc T_i, \forall i \in \mc N$ in the graph $\mc G$ and the power diagram, respectively. 
The snapshots of the simulation are depicted in Fig.~\ref{fig:sim9}. 
In the transition from Fig.~\ref{fig:sim9}\subref{fig:sim9_k1} to Fig.~\ref{fig:sim9}\subref{fig:sim9_k2100}, we can confirm several triangular subgraphs in the team show switching, allowing them to move smoothly to the important region. 
At 4500\,step, a team completely covers the peak of the density function in the lower part and continuously moves to cover two other peaks on the top part, as shown in Fig.~\ref{fig:sim9}\subref{fig:sim9_k4500}. 
At the final configuration in Fig.~\ref{fig:sim9}\subref{fig:sim9_k10000}, a team successfully covers three peaks with no hole.

The evolution of HNCBFs is shown in Fig.~\ref{fig:sim9_HNCBF}, where all of them take the positive value except for the 5642\,step. 
%As explained at the end of Section~\ref{ssec:hole_def}, this is caused by the sudden appearance of a hole produced from the generation of a new triangular subgraph on the left upper quadcopters. 
As discussed at the end of Section~\ref{ssec:hole_def}, this sudden appearance of a hole is caused when a team closes a dent in the upper left part of Fig.~\ref{fig:sim9}\subref{fig:sim9_k4500}.
In other words, this instantaneous hole is caused by the inherent nature of the hybrid forward invariance \cite{Glotfelter19} that does not address whether a system remains within the admissible set at the time switching occurs. Nevertheless, because of the robustness of the proposed algorithm, which stems from a general characteristic of CBFs yielding the admissible set attractive \cite{Glotfelter19,Ames2019_CBF_thapp}, the proposed algorithm successfully eliminates this suddenly generated hole. Note that this formation of a hole does not violate the hybrid forward invariance of the set in Definition~\ref{def:hyb_fw_inv}, as the concept presumes the state does not leave the safe set at each switching time $\tau_k,~\forall k \in {\mathbb N}$.

Fig.~\ref{fig:sim9_cov} shows the snapshots of the simulation results with the nominal input \eqref{eq:pi_gradient_H}, where we set the same initial configuration to Fig.~\ref{fig:sim9}\subref{fig:sim9_k1}. 
As shown in Fig.~\ref{fig:sim9_cov}\subref{fig:sim9_cov_k500}, from an early stage of the simulation, a team exhibits the unsurveilled area in-between quadcopters, which is developed from a hole created in-between a trio. 
In contrast, since a team does not need to consider an HNCBF constraint, a team moves a bit faster than the proposed method. 
As a result, quadcopters reach the configuration shown in Fig.~\ref{fig:sim9_cov}\subref{fig:sim9_cov_k1800} at $1800$\,step, a similar configuration achieved by the proposed method at $2100$\,step. 
In return for this slightly fast transition, the unsurveilled area keeps appearing in a team, which might overlook the region even if the density function at the area takes a large value, roughly 80\% of the highest peak of the density function, as shown in Fig.~\ref{fig:sim9_cov}\subref{fig:sim9_cov_k6500}.
This hole keeps remaining until the end of the simulation shown in Fig.~\ref{fig:sim9_cov}\subref{fig:sim9_cov_k10000}. 
Fig.~\ref{fig:sim9_obj} shows the evolution of the coverage cost in~\eqref{eq:pi_gradient_H}, where the attained coverage cost of the proposed method is lower than that of the coverage controller because the proposed method modifies the nominal input to prevent the appearance of holes. However, as shown in Fig.~\ref{fig:sim9}\subref{fig:sim9_k10000}, the proposed method successfully covers all three peaks of the density function without overlooking the important area in-between the team. This monitoring strategy prevents the overlooking of significant or safety-critical information in the monitoring mission.

\begin{figure*}[t!]%\label{fig:exp_snap}
    \centering
    \subfloat[Initial configuration]
    {\makebox[0.24\hsize][c]{\includegraphics[trim = 0cm 0cm 0cm 0.2cm, clip=true, width=0.24\linewidth]{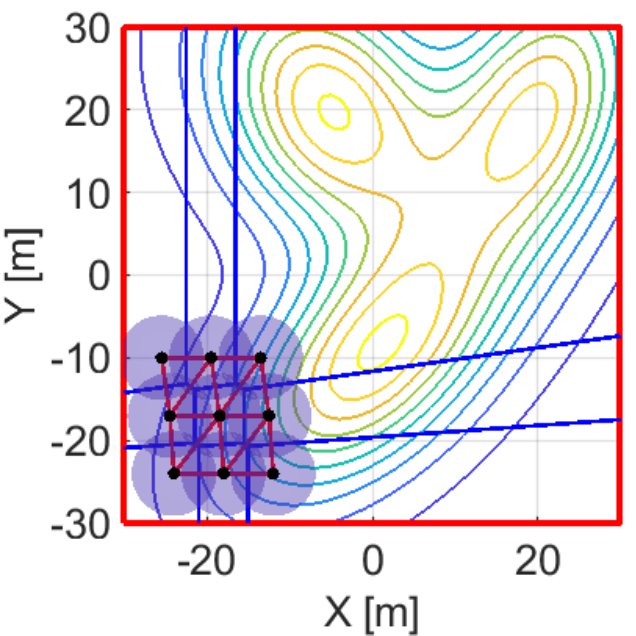}
    \label{fig:sim9_k1}}} \,
    %
    %\subfloat[]
    %{\makebox[0.48\hsize][c]{\includegraphics[width=0.48\linewidth]{figs/sim9/HNCBF_900.pdf}
    %\label{fig:sim9_k900}}} \\
    %
    \subfloat[$2100$ step]
    {\makebox[0.24\hsize][c]{\includegraphics[trim = 0cm 0cm 0cm 0.2cm, clip=true, width=0.24\linewidth]{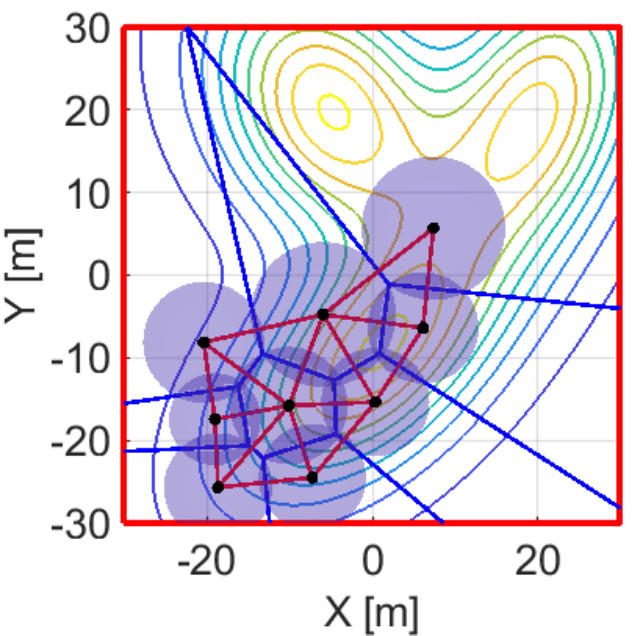}
    \label{fig:sim9_k2100}}} \,
    \subfloat[$4500$ step]
    {\makebox[0.24\hsize][c]{\includegraphics[trim = 0cm 0cm 0cm 0.2cm, clip=true, width=0.24\linewidth]{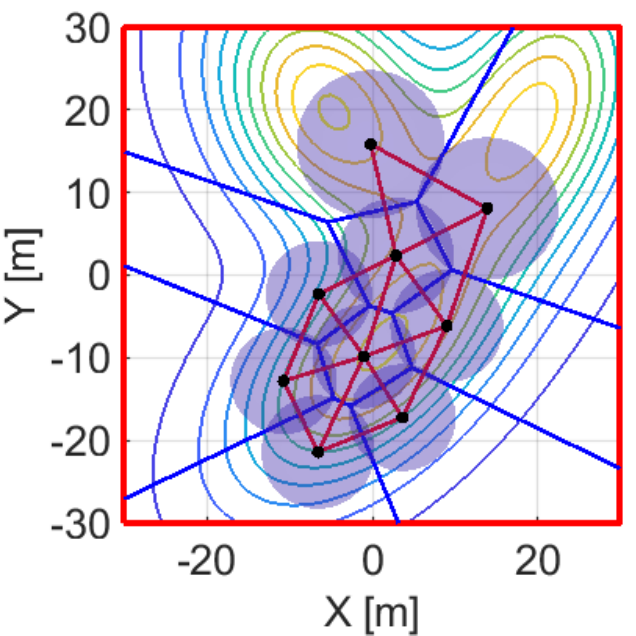}
    \label{fig:sim9_k4500}}} \,
    \subfloat[$10000$ step]
    {\makebox[0.24\hsize][c]{\includegraphics[trim = 0cm 0cm 0cm 0.2cm, clip=true, width=0.24\linewidth]{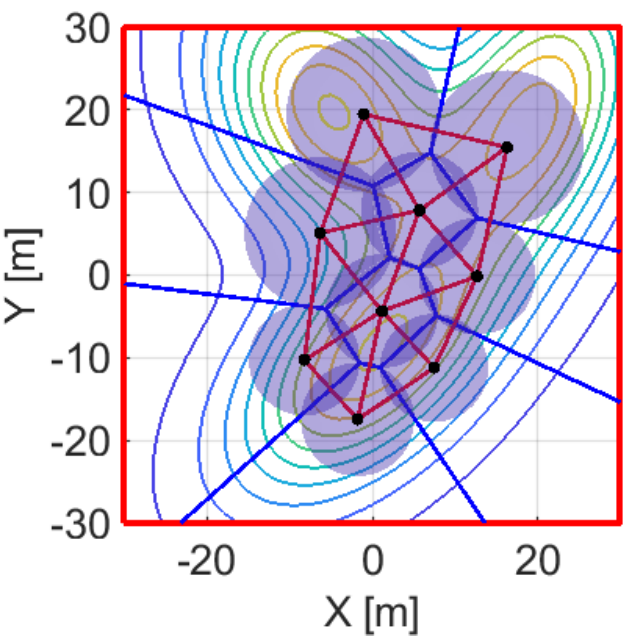}
    \label{fig:sim9_k10000}}}
    \caption{Snapshots of the simulation with the proposed algorithm, where the triangular subgraphs $\mc T_i, i\in \mc N$ in the graph $\mc G$ and the power diagram are depicted in red and blue lines, respectively.}
    \label{fig:sim9} 
\end{figure*}

\begin{figure*}[t!]%\label{fig:exp_snap}
    \centering
    \subfloat[$500$ step]
    {\makebox[0.24\hsize][c]{\includegraphics[trim = 0cm 0cm 0cm 0.2cm, clip=true, width=0.24\linewidth]{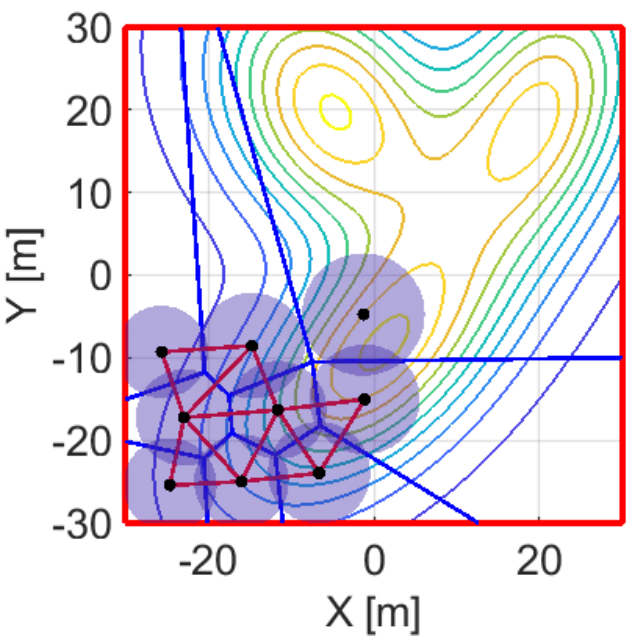}
    \label{fig:sim9_cov_k500}}} \,
    %
    %\subfloat[]
    %{\makebox[0.48\hsize][c]{\includegraphics[width=0.48\linewidth]{figs/sim9/HNCBF_900.pdf}
    %\label{fig:sim9_k900}}} \\
    %
    \subfloat[$1800$ step]
    {\makebox[0.24\hsize][c]{\includegraphics[trim = 0cm 0cm 0cm 0.2cm, clip=true, width=0.24\linewidth]{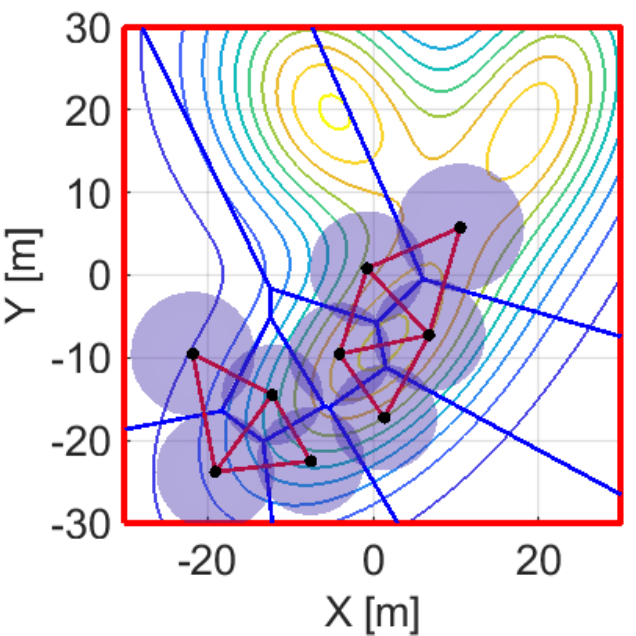}
    \label{fig:sim9_cov_k1800}}} \,
    \subfloat[$6500$ step]
    {\makebox[0.24\hsize][c]{\includegraphics[trim = 0cm 0cm 0cm 0.2cm, clip=true, width=0.24\linewidth]{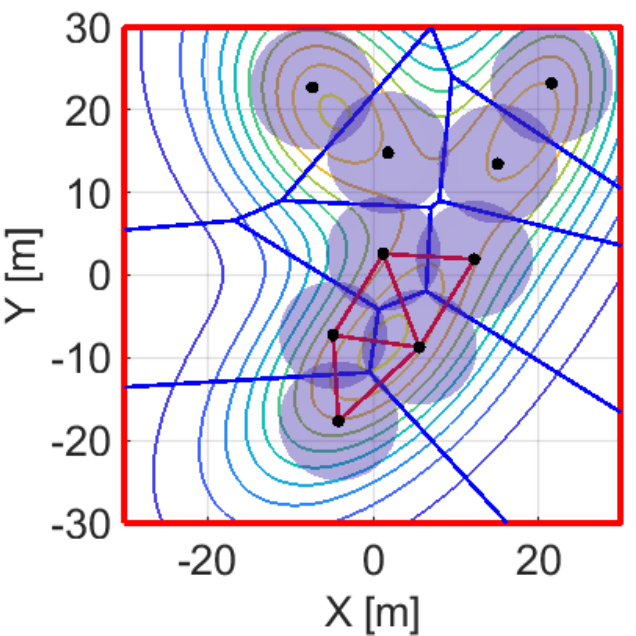}
    \label{fig:sim9_cov_k6500}}} \,
    \subfloat[$10000$ step]
    {\makebox[0.24\hsize][c]{\includegraphics[trim = 0cm 0cm 0cm 0.2cm, clip=true, width=0.24\linewidth]{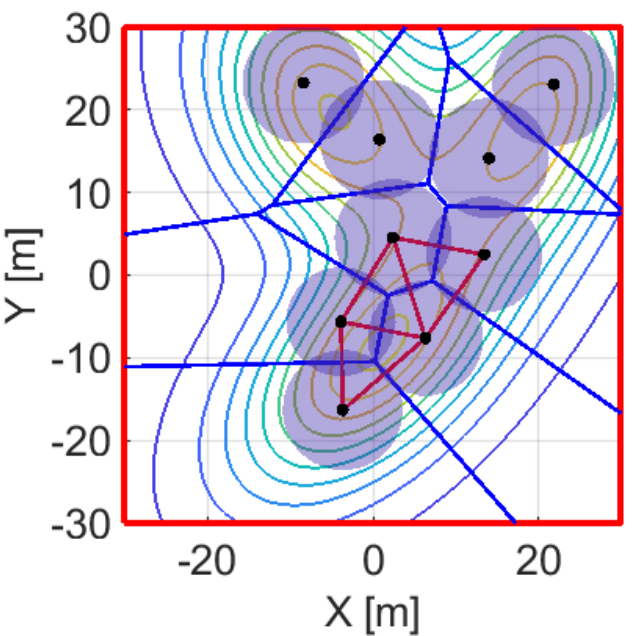}
    \label{fig:sim9_cov_k10000}}}
    \caption{Snapshots of the simulation with the coverage control $u_{i, \rm{nom}}, i\in \mc N$.}
    \label{fig:sim9_cov} 
\end{figure*}

\begin{figure}[t!]
    \centering
    \includegraphics[trim = 0cm 0.1cm 0cm 0.5cm, clip=true, width=0.7\linewidth]{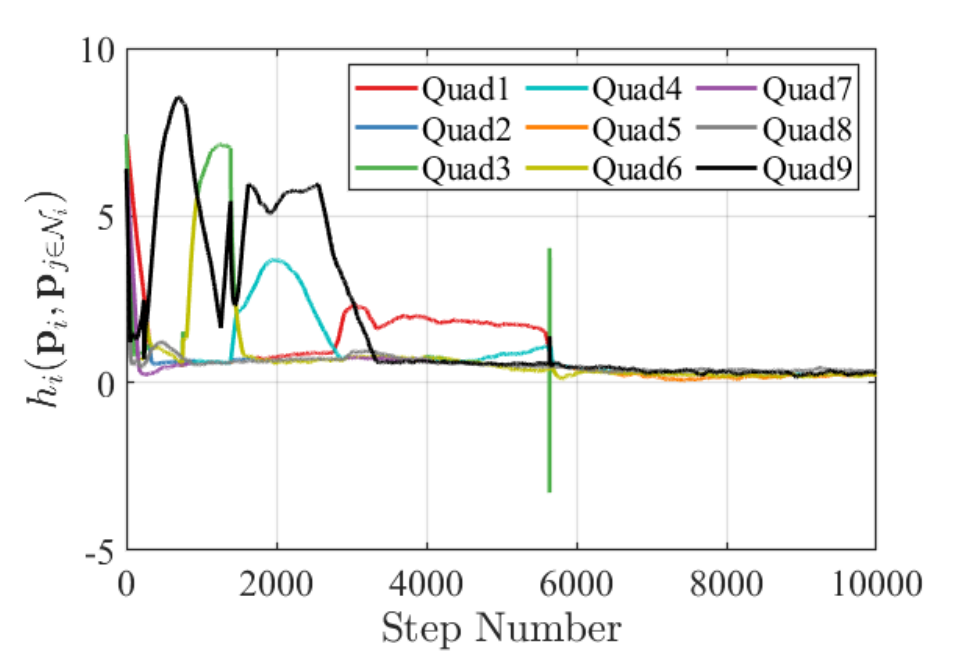}
    \caption{Evolution of the HNCBFs. Note that, to make it easier to understand visually, we only plot the minimum NCBF among triangular subgraphs to which each quadcopter belongs, namely $\min_{\Delta_\iota \in \mc T_i}h_{i,\Delta_\iota}, \forall i \in \mc N$.} %Algorithm~\ref{alg:cov_main} keeps all HNCBFs in the positive value except for step DEF, at which a newly emerged triangular subgraph creates a hole, as mentioned at the end of the section. Note that this instantaneously appeared hole does not violate the hybrid forward invariance of the set in the sense of DEF.
%    }
    \label{fig:sim9_HNCBF}
\end{figure}

\begin{figure}[t!]
    \centering
    \includegraphics[trim = 0cm 0.1cm 0cm 0.1cm, clip=true, width=0.68\linewidth]{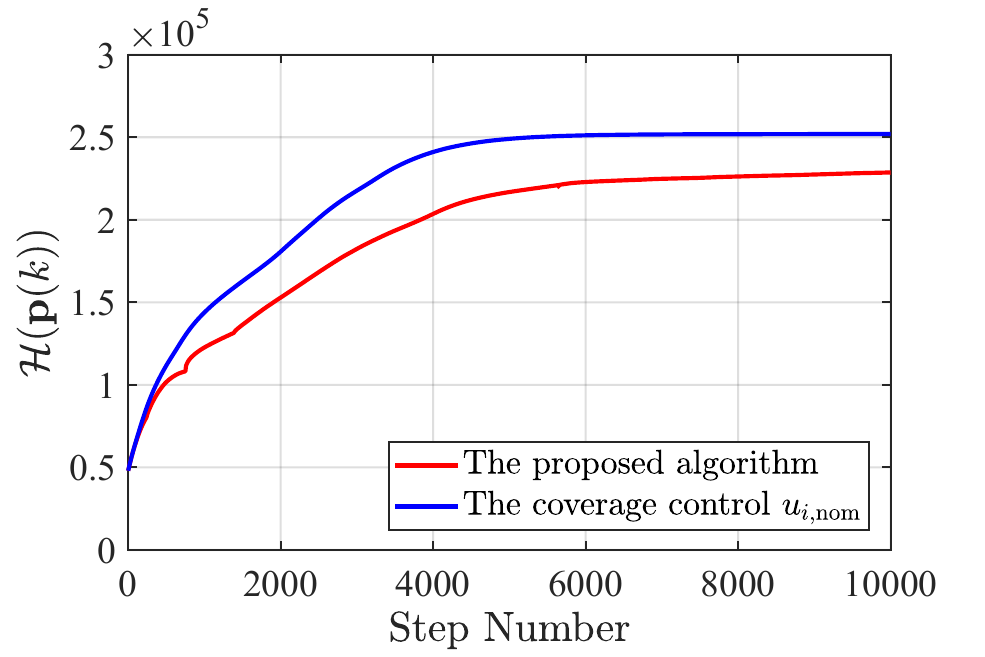}
    \caption{Evolution of the coverage cost in \eqref{eq:locational_cost}. As the price to pay for preventing a hole, the proposed algorithm takes a slightly smaller value than the coverage control $u_{i, {\rm nom}}$.}
    \label{fig:sim9_obj}
\end{figure}

\subsection{Experimental Results with Five Quadcopters}

This section demonstrates the proposed algorithm through experiments while comparing it with the results of the coverage control $u_{i, {\rm nom}},~\forall i \in \mc N$ in \eqref{eq:pi_gradient_H}.
%In this section, we demonstrate the proposed algorithm through experiments with comparing the results of the nominal input $u_{\rm nom}$. 
The size of the experimental field $\mc Q$ is $3\times 4\,{\rm m}^2$. 
We implement the control law to a team of quadcopters composed of five Bitcraze Crazyflie 2.1 ($n=5$), where their FOVs are virtually set. % so that $r=$ holds. 
Each parameter is set as $\kappa = 4$, $\sigma = 1$, $M = 0.7$, $w=0.2$, $w_\lam = 1.0\times 10^6$, $\epsilon = 0.2$ with the extended class-$\mc K$ function $\alpha(h) = 20h^3$.

The schematic of the testbed is illustrated in Fig.~\ref{fig:exp_sys}, which consists of two desktop computers, a motion capture system (OptiTrack), and quadcopters. The motion capture system is composed of eleven PrimeX13 and the Windows PC installed with the tracking system Motive. 
This motion capture system captures the position of drones every $10$\,ms and sends it to the Linux PC (Ubuntu 20.04). 
In the Linux PC, the Crazyswarm platform \cite{Preiss17} allows us to receive the position information from the motion capture system. 
%\blue{Although the control architecture is implementable in a distributed manner, the current experimental system implements the proposed method in a Linux computer because of hardware limitations.}
This information is sent to MATLAB to calculate the control input. 
Although the control architecture is implementable in a distributed manner, the current experimental system implements the proposed method in a Linux computer because of hardware limitations.
Then, the calculated velocity input is sent to the drones through the Crazyswarm which manages a communication between the Linux PC and drones via the Crazyradio.
The computation time of velocity input for five drones in MATLAB is roughly $20$\,ms.
Note that the control of attitude is conducted by the off-the-shelf controller implemented in the Crazyflies~2.1 so that it follows a reference velocity with ensuring a sufficient update rate of the attitude control.
%\blue{Consequently, the update rate of the control input is around 30\,ms, in which the motion capture system occupies 10\,ms. This sampling time is regarded as a time delay and could cause oscillation. To avoid such unfavorable behavior, we set a smaller gain for the coverage control input than in the simulation, $\gamma = 1.5\times 10^{-4}$ and $\gamma_{\lambda} = 2\times 10^{-11}$.}

We run the proposed algorithm from the initial configuration in Fig.~\ref{fig:exp_HNCBF_snap}\subref{fig:exp_HNCBF1}, 
where Figs.~\ref{fig:exp_HNCBF_snap}\subref{fig:exp_HNCBF2}-\subref{fig:exp_HNCBF5} show the snapshots of subsequent results. 
At $t=3$\,s, two quadcopters near the peak of the density function move fast to the important area, as shown in Fig.~\ref{fig:exp_HNCBF_snap}\subref{fig:exp_HNCBF2}. 
Although this transition makes the relative distance between each quadcopter larger, the appearance of a hole is prevented by expanding their FOVs. 
Then, at $t=20$\,s in Fig.~\ref{fig:exp_HNCBF_snap}\subref{fig:exp_HNCBF3}, a team reaches the important area followed by the configuration at $t=60$\,s in Fig.~\ref{fig:exp_HNCBF_snap}\subref{fig:exp_HNCBF4}, which allows a drone to capture the peak of the density at the center of FOV.
Toward the final configuration shown in Fig.~\ref{fig:exp_HNCBF_snap}\subref{fig:exp_HNCBF5}, two quadcopters on the bottom are going to combine their FOVs, resulting in a new edge in the graph.

\begin{figure}[t!]
    \centering
    \includegraphics[width=0.75\linewidth]{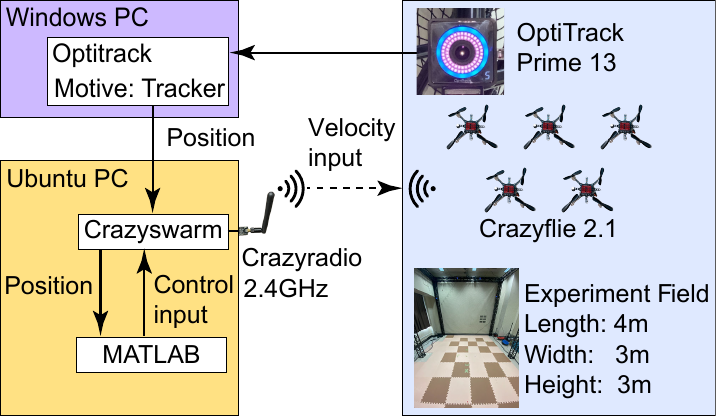}
    \caption{The schematic description of the experimental system.}
    \label{fig:exp_sys}
\end{figure}

\begin{figure}[t!]%\label{fig:exp_snap}
    \centering
    \subfloat[Initial configuration]
    {\makebox[0.98\hsize][c]{\includegraphics[width=0.98\linewidth]{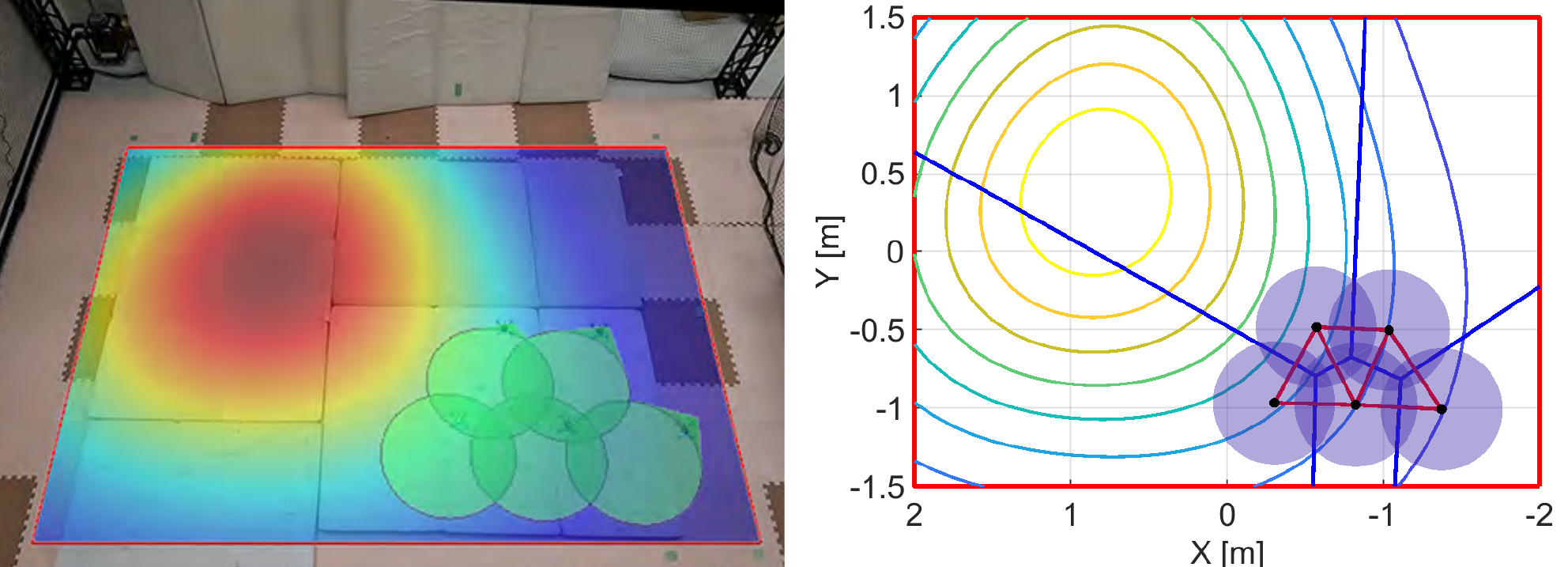}
    \label{fig:exp_HNCBF1}}} \\
    \subfloat[$t= 3$\,s with the proposed algorithm]
    {\makebox[0.98\hsize][c]{\includegraphics[width=0.98\linewidth]{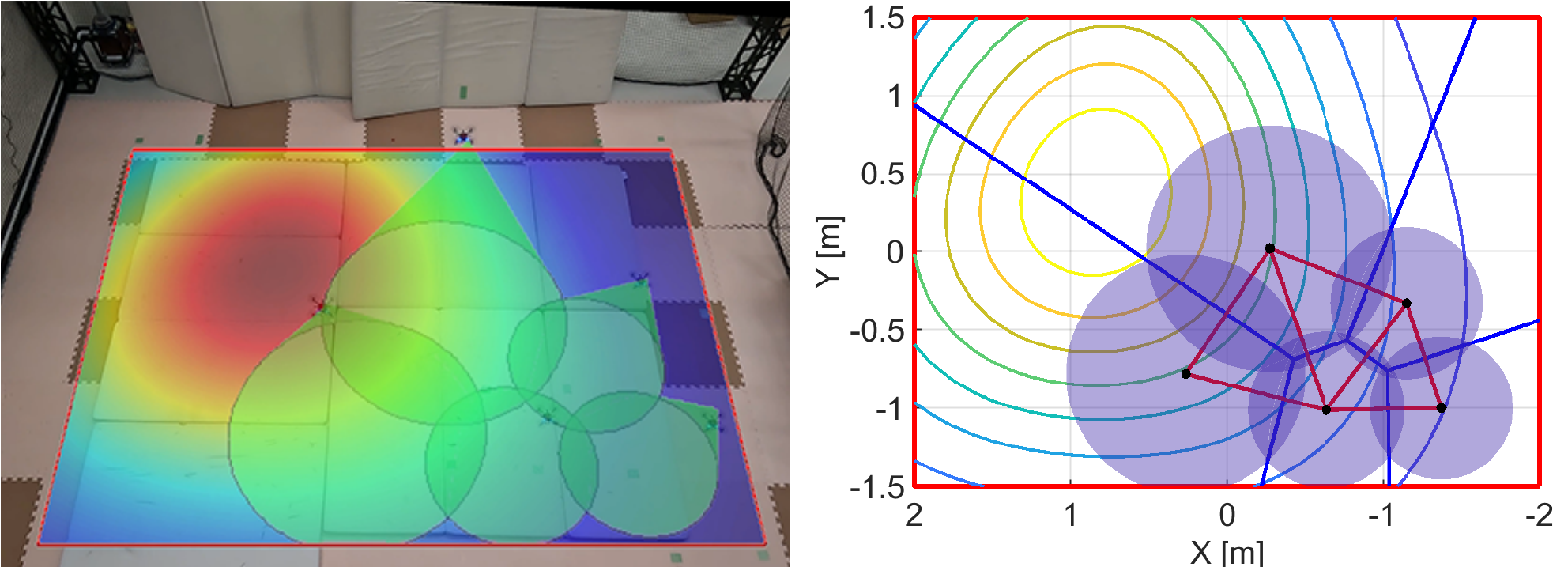}
    \label{fig:exp_HNCBF2}}} \\
    \subfloat[$t= 20$\,s with the proposed algorithm]
    {\makebox[0.98\hsize][c]{\includegraphics[width=0.98\linewidth]{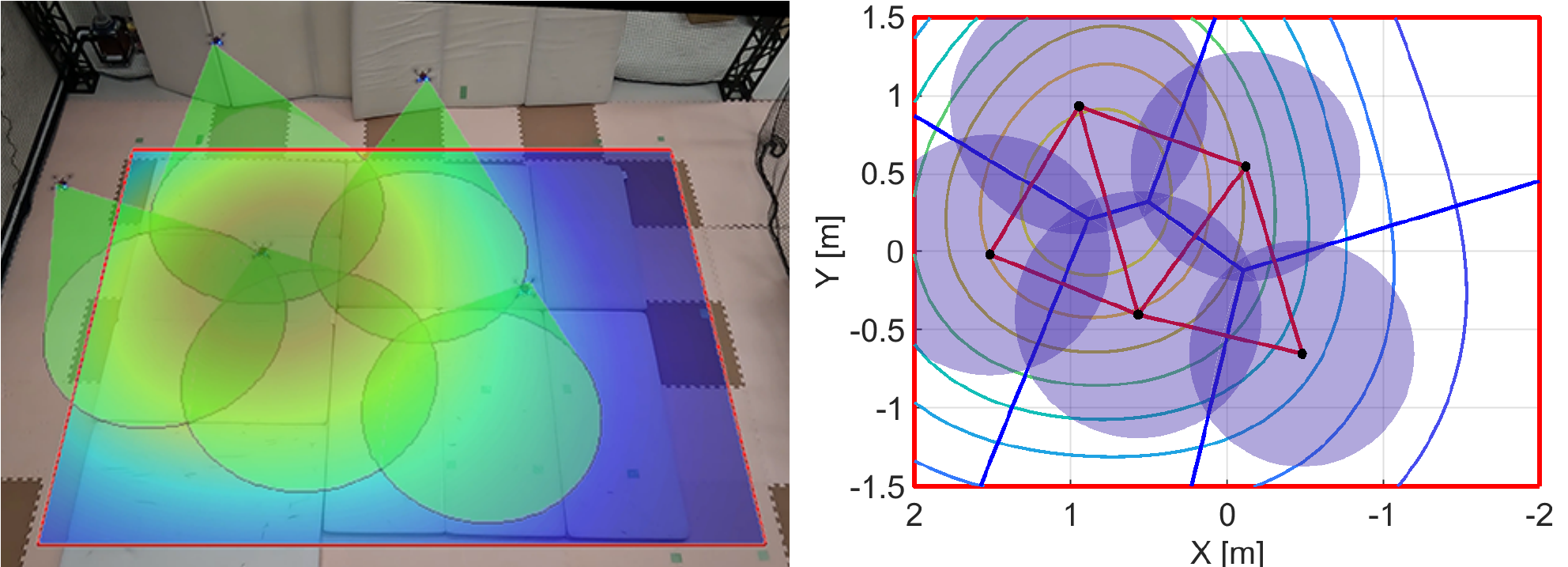}
    \label{fig:exp_HNCBF3}}} \\
    \subfloat[$t=60$\,s with the proposed algorithm]
    {\makebox[0.98\hsize][c]{\includegraphics[width=0.98\linewidth]{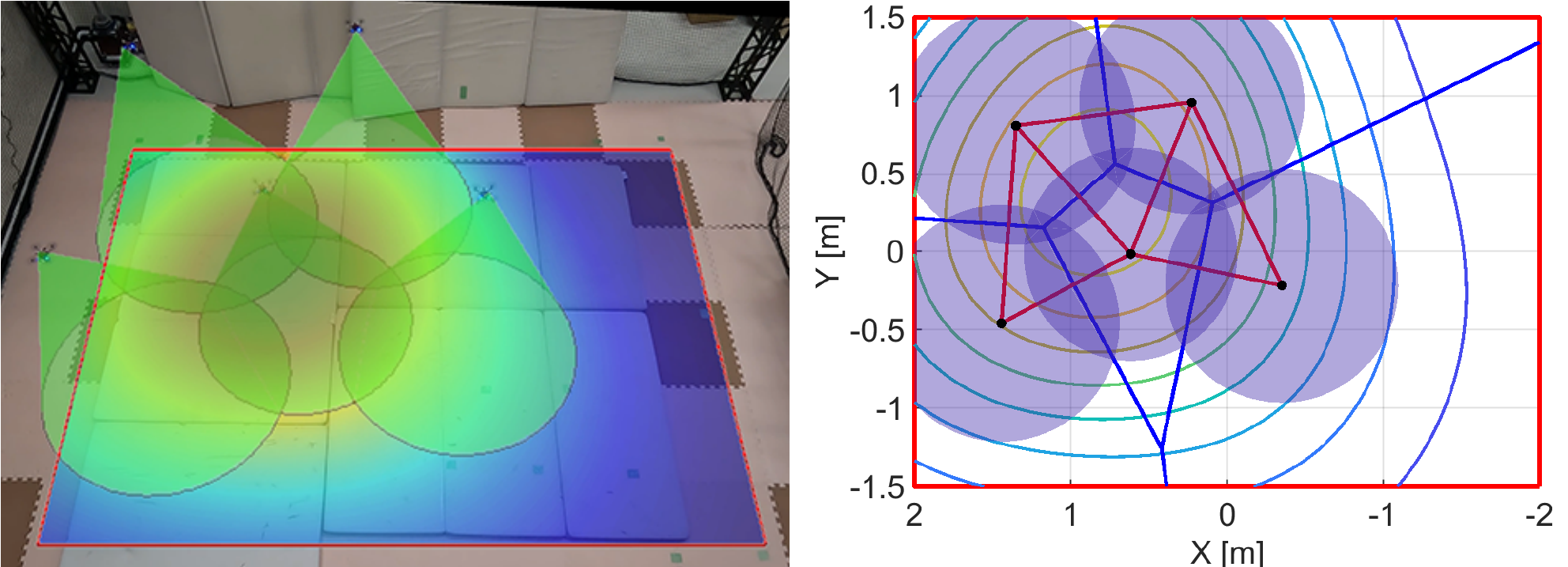}
    \label{fig:exp_HNCBF4}}}\\
    \subfloat[$t=85$\,s with the proposed algorithm]
    {\makebox[0.98\hsize][c]{\includegraphics[width=0.98\linewidth]{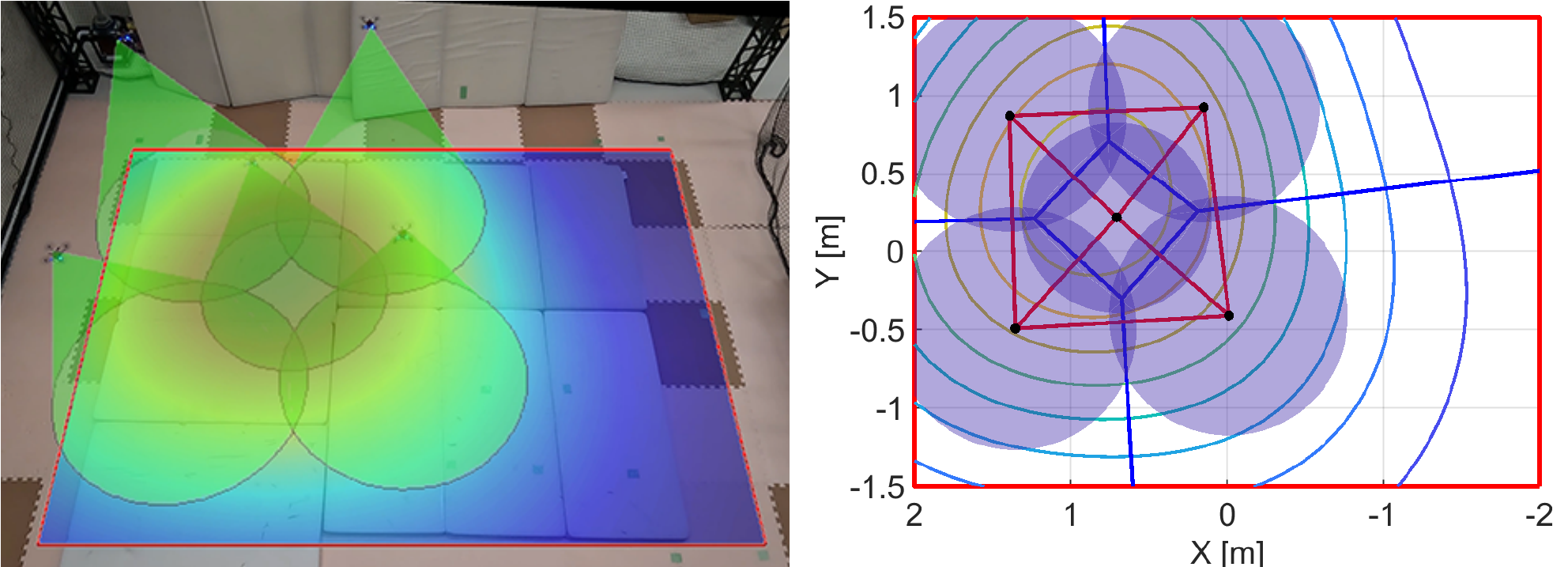}
    \label{fig:exp_HNCBF5}}}
    \caption{Snapshots of the experiments with the proposed algorithm.}
    \label{fig:exp_HNCBF_snap}
\end{figure}

\begin{figure}[t!]%\label{fig:exp_snap}
    \centering
    \subfloat[$t=3$\,s with the coverage control $u_{i, {\rm nom}},~i\in\mc N$ only]
    {\makebox[0.98\hsize][c]{\includegraphics[width=0.98\linewidth]{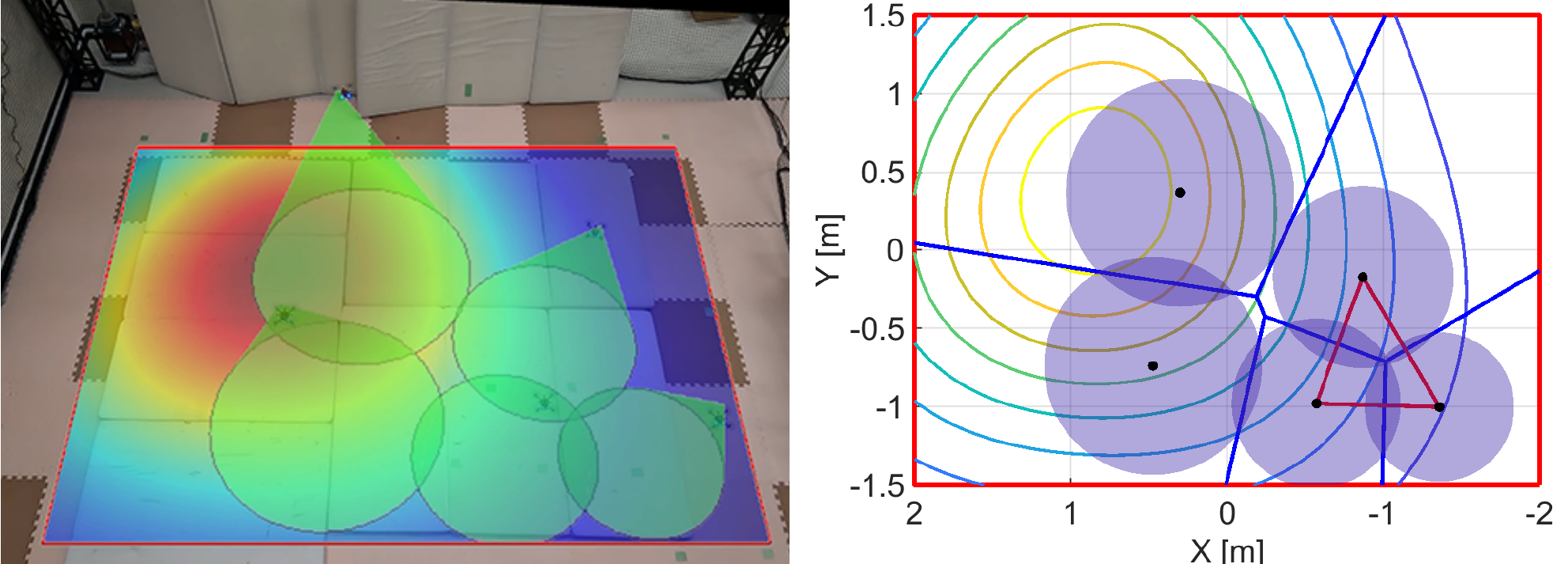}
    \label{fig:exp_nominal1}}} \\
    \subfloat[$t=6$\,s with the coverage control $u_{i, {\rm nom}},~i\in\mc N$ only]
    {\makebox[0.98\hsize][c]{\includegraphics[width=0.98\linewidth]{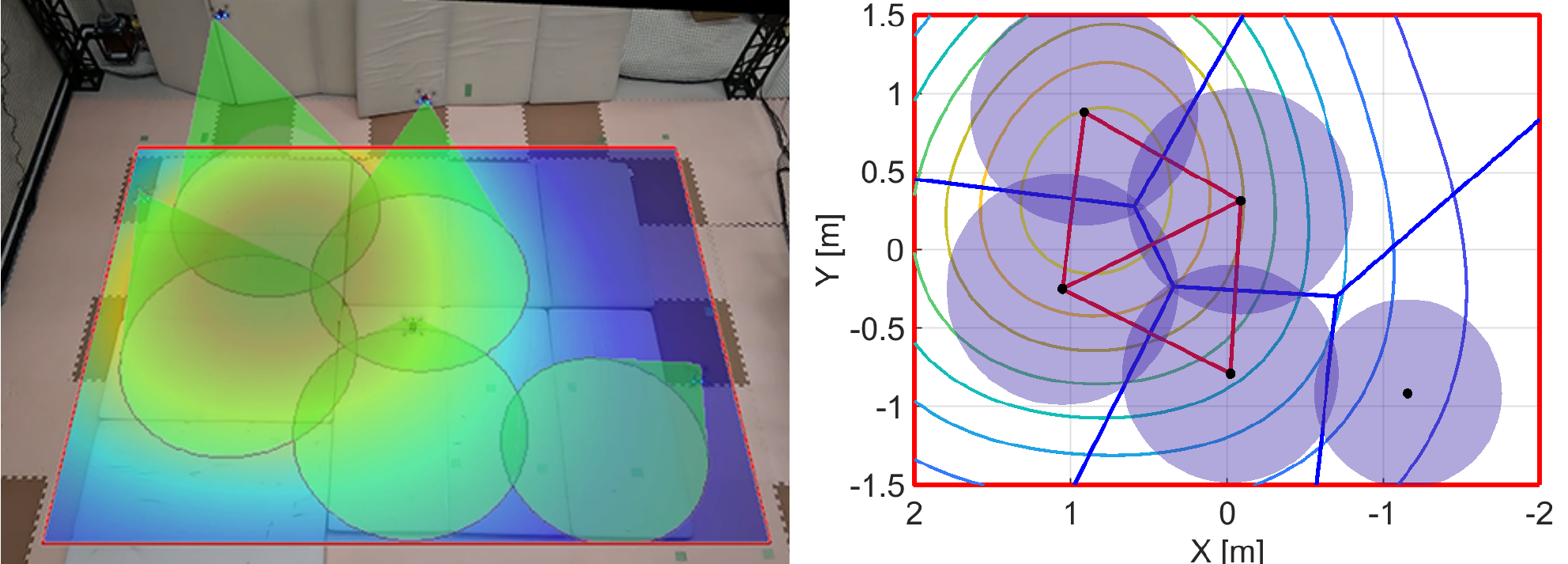}
    \label{fig:exp_nominal2}}} \\
    \subfloat[$t=12$\,s with the coverage control $u_{i, {\rm nom}},~i\in\mc N$ only]
    {\makebox[0.98\hsize][c]{\includegraphics[width=0.98\linewidth]{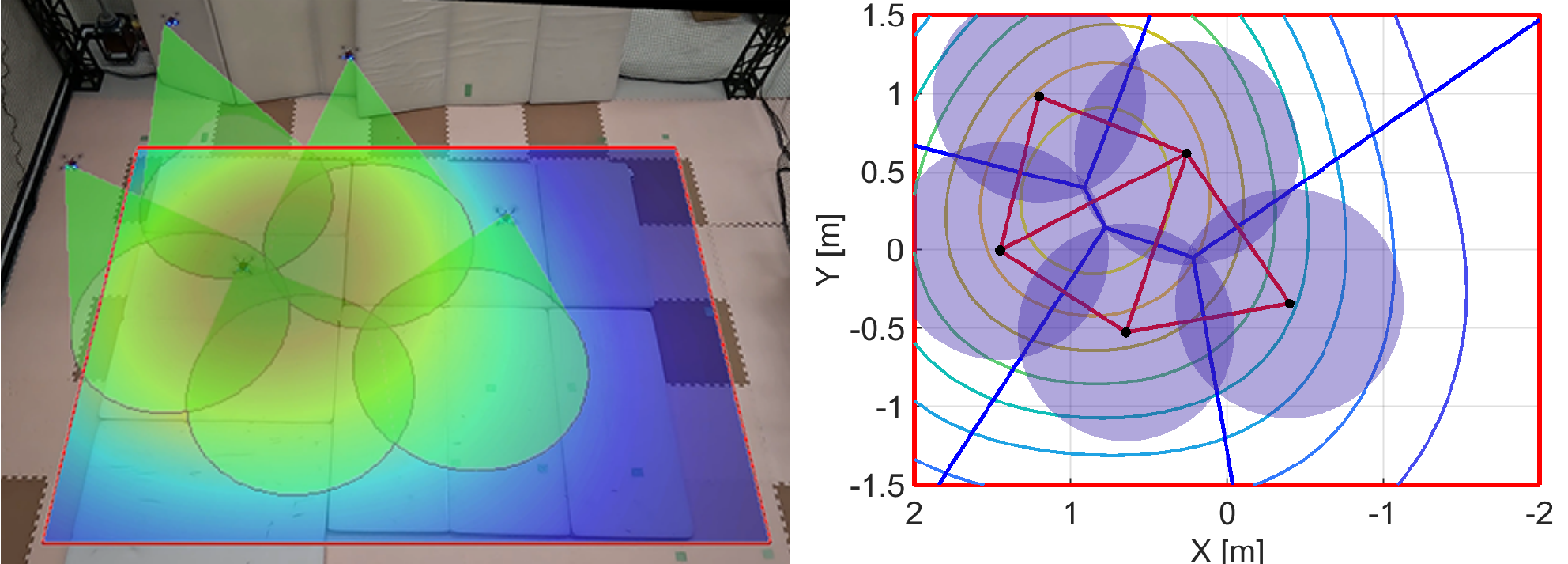}
    \label{fig:exp_nominal3}}} \\
    \subfloat[$t=85$\,s with the coverage control $u_{i, {\rm nom}},~i\in\mc N$ only]
    {\makebox[0.98\hsize][c]{\includegraphics[width=0.98\linewidth]{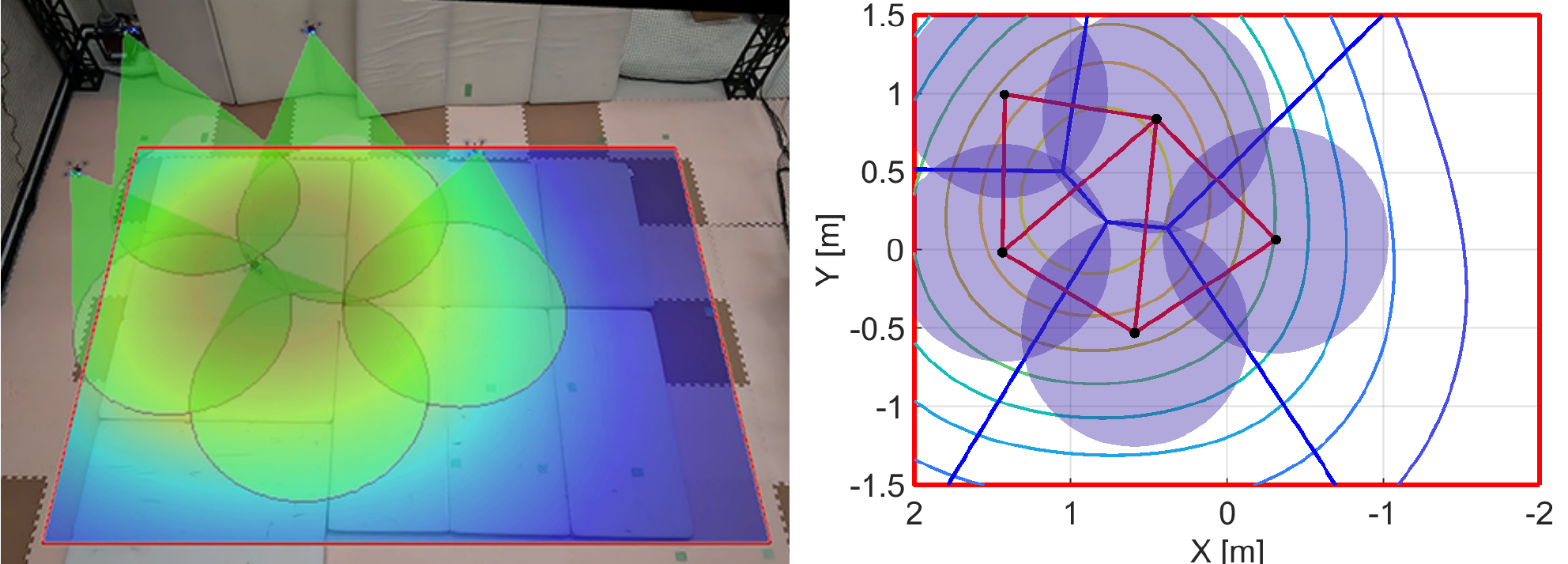}
    \label{fig:exp_nominal4}}}
    \caption{Snapshots of the experiment with the coverage control $u_{i, {\rm nom}}$, $i \in\mc N$.}
    \label{fig:exp_nominal}
\end{figure}

\begin{figure}
    \centering
    \includegraphics[trim = 0cm 0.1cm 0cm 0.1cm, clip=true, width=0.75\linewidth]{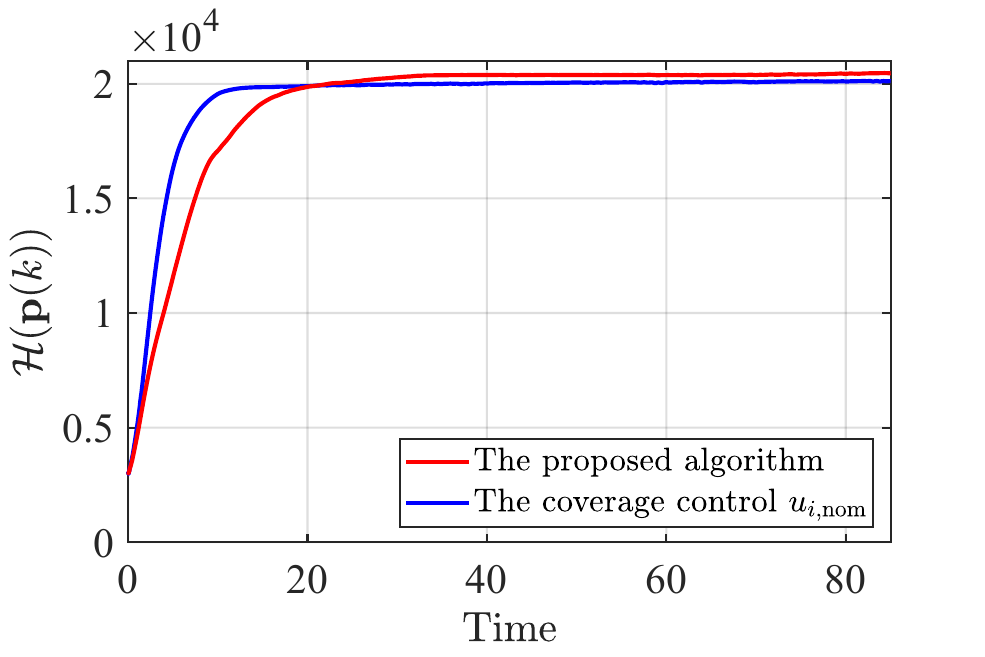}
    \caption{Evolution of the cost function \eqref{eq:locational_cost}. The proposed algorithm achieves a slightly larger coverage quality than that of the coverage control $u_{i, {\rm nom}}$.}
    \label{fig:exp_obj}
\end{figure}

\begin{figure}
    \centering
    \includegraphics[trim = 0cm 0.1cm 0cm 0.6cm, clip=true, width=0.75\linewidth]{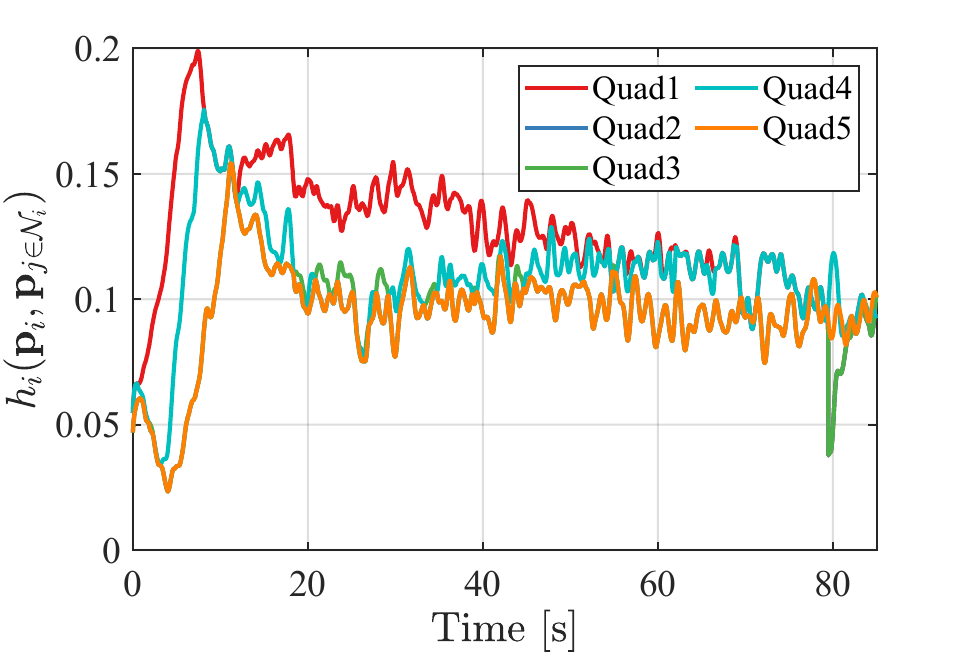}
    \caption{Evolution of the HNCBFs. Note that, to make it easier to understand visually, we only plot the minimum NCBF among triangular subgraphs to which each quadcopter belongs, namely $\min_{\Delta_\iota \in \mc T_i}h_{i,\Delta_\iota}, \forall i \in \mc N$. Since all $\min_{\Delta_\iota \in \mc T_i}h_{i,\Delta_\iota}, \forall i \in \mc N$ take a positive value, we can confirm that Algorithm~\ref{alg:cov_main} can make NCBFs evaluated in \eqref{eq:QP_nohole} positive. %\blue{will be replaced later after fixing up the appearance of the figure}
    }
    \label{fig:exp_HNCBF}
\end{figure}

Fig.~\ref{fig:exp_nominal} shows the results with the nominal input $u_{i,{\rm nom}},~\forall i\in \mc N$ only, where the initial configuration is the same as Fig.~\ref{fig:exp_HNCBF_snap}\subref{fig:exp_HNCBF1}. 
We can confirm from Fig.~\ref{fig:exp_nominal}\subref{fig:exp_nominal1} that a hole appears in the middle of the team at $t=3$\,s. 
Fig.~\ref{fig:exp_nominal}\subref{fig:exp_nominal2} shows the quadcopters located on the right do not preserve the triangular subgraph as opposed to the results with the proposed algorithm. 
The convergence speed is a bit shorter than that of the proposed method, and the quadcopters reach a peak of the density function at $t=12$\,s as shown in Fig.~\ref{fig:exp_nominal}\subref{fig:exp_nominal3}. 
The team keeps the almost same configuration for the rest of the experiment while making small holes. The final configuration is shown in Fig.~\ref{fig:exp_nominal}\subref{fig:exp_nominal4}.

The evolution of the coverage cost \eqref{eq:locational_cost} is shown in Fig.~\ref{fig:exp_obj}. The nominal input allows the team to achieve the maximum at $t=12$\,s, while the proposed algorithm reaches the same value at $t=20$\,s. 
Remarkably, even though the proposed algorithm modifies the nominal controller maximizing the coverage cost to prevent holes, the team attains the same coverage cost as the nominal coverage controller solely aimed at maximizing coverage cost. This result implies that the proposed algorithm does not overly restrict the behavior of the team.
Furthermore, differently from the coverage control only, the proposed method keeps increasing the objective function gradually. 
The proposed method attains a larger maximum because its property to connect each FOV leads to the configuration where one of the drones captures the peak of the density function at the center of its FOV, as illustrated in Fig.~\ref{fig:exp_HNCBF_snap}\subref{fig:exp_HNCBF5}. 
In contrast, a drone with the coverage control in \eqref{eq:pi_gradient_H} repels other agents' FOVs to minimize their overlaps while trying to reach the most important area. 
This property might make a team get stuck into a deadlock before one of them captures the peak of the density function, as shown in Fig.~\ref{fig:exp_nominal}\subref{fig:exp_nominal4}. Fig.~\ref{fig:exp_HNCBF} shows the evolution of HNCBFs in the proposed method, where we observe that they keep positive values throughout the experiment.
%where it can be observed that they remains positive throughout the experiment. 
At $t=80$\,s, there is a jump in the value of HNCBF, which is caused by the addition of the edge in the graph from the configuration in Fig.~\ref{fig:exp_HNCBF_snap}\subref{fig:exp_HNCBF4} to Fig.~\ref{fig:exp_HNCBF_snap}\subref{fig:exp_HNCBF5}.

\section{Conclusion}
In this paper, we proposed a distributed coverage algorithm that prevents the formation of unsurveyed regions in-between FOVs of quadcopters. The necessary and sufficient condition for eliminating holes among trios of a team was introduced from the property of the power diagram. This condition, encoded with a Boolean composition, was then incorporated as NCBFs to design the algorithm guaranteeing the hybrid forward invariance of the set, preventing the appearance of holes. From the symmetric property of the designed NCBFs, the proposed algorithm can be implementable in a distributed fashion. 
We guaranteed that the proposed algorithm can find a control input preventing a hole, %satisfying the constraint induced by the proposed NCBFs, 
except for a few pathological conditions, by analyzing the derivatives of each CBF composing NCBFs.
The proposed algorithm was synthesized with the coverage control law that improves monitoring quality while reducing the overlaps of FOVs. 
The effectiveness of the proposed method was demonstrated via simulations and experiments.

There are several challenges to explore in the future. 
First, the robustness of the proposed method in the presence of uncertainty should be investigated further to be able to deploy the algorithm in more challenging scenarios, including outdoor environments. Foreseeable challenges include uncertainty in the position estimation and in inter-robot communications, with possible remedies entailing considering a conservative monitoring strategy or preparing a redundant system to mitigate possible failures, as discussed in Remarks~\ref{rem:pos_unc} and~\ref{rem:com_unc}. Finally, the extension to the other shapes of FOVs needs to be further investigated.

\appendices

\section{Proof of Lemma~\ref{thm:h_f}} \label{ap:hiF}
\begin{proof}
If $L_g h_{i,\Delta_{\iota},\mc F}^{\mrm k} \neq {\bf 0}_{4\times 1}$ holds, then there exists $u_i \in \R^4$, which satisfies that $L_f h_{i,\Delta_{\iota},\mc F}^{\mrm k} + L_g h_{i,\Delta_{\iota},\mc F}^{\mrm k}u_i + \alpha(h_{i,\Delta_{\iota},\mc F}^{\mrm k}) \geq 0$. %\cite{XU2015}. 
Although the discussion in this appendix presumes the coordinate frame $\Sigma_d$, we will drop the superscript $\circ^d$.

The derivative of $h_{i,\Delta_{\iota},\mc F}^{\mrm k}$ with respect to $\bfp_i$ is given by
\begin{subequations} \label{eqs:def_hiF}
\begin{align}
\frac{\partial h_{i,\Delta_{\iota},\mc F}^{\mrm k}}{\partial x_i} \!&=\! -2\left( 
V_{ijk}^x \left( \frac{\partial v_{ijk}^x}{\partial x_i} \!-\! 1 \right) \!+\!V_{ijk}^y \frac{\partial v_{ijk}^y}{\partial x_i} \right), \\
\frac{\partial h_{i,\Delta_{\iota},\mc F}^{\mrm k}}{\partial y_i} \!&=\! -2\left( 
V_{ijk}^x \frac{\partial v_{ijk}^x}{\partial y_i} 
\!+\!V_{ijk}^y \left( \frac{\partial v_{ijk}^y}{\partial y_i} \!-\! 1 \right) \right), \\
\frac{\partial h_{i,\Delta_{\iota},\mc F}^{\mrm k}}{\partial z_i} \!&=\! 2\left( \frac{r^2 z_i}{\lam_i^2}
\!-\! \left( V_{ijk}^x \frac{\partial v_{ijk}^x}{\partial z_i} \!+\! V_{ijk}^y \frac{\partial v_{ijk}^y}{\partial z_i} \right) \right),\\
\frac{\partial h_{i,\Delta_{\iota},\mc F}^{\mrm k}}{\partial \lam_i} &\!=\! -2 \left( \frac{\left(rz_i\right)^2}{\lam_i^3} \!+\! 
\left( V_{ijk}^x \frac{\partial v_{ijk}^x}{\partial \lam_i} \!+\! V_{ijk}^y \frac{\partial v_{ijk}^y}{\partial \lam_i} \right) \!\right),
\end{align}
\end{subequations}
with $V_{ijk}^x = v_{ijk}^x - x_i$ and $V_{ijk}^y = v_{ijk}^y - y_i$. 
%Since the radical center $v_{ijk}$ is restricted on the radical axis $L_{jk}$, namely on the $Y$-axis, in the introduced coordinate system $\Sigma_d$, $v_{ijk}^x = 0$ and $\partial v_{ijk}^x / \partial x_i = 0$ always hold.
Since the equations \eqref{eq:sig_d_vx} and \eqref{eq:sig_d_vy} hold in the introduced coordinate frame $\Sigma_d$, we can simplify \eqref{eqs:def_hiF} by substituting these conditions as  
%By substituting these two conditions to \eqref{eqs:def_hiF}, we obtain
\begin{subequations} \label{eqs:def_hiF_simp}
\begin{align}
\frac{\partial h_{i,\Delta_{\iota},\mc F}^{\mrm k}}{\partial x_i} &= -2x_i - 2\left(v_{ijk}^y - y_i\right)\frac{\partial v_{ijk}^y}{\partial x_i},\\
\frac{\partial h_{i,\Delta_{\iota},\mc F}^{\mrm k}}{\partial y_i} &= -2\left(v_{ijk}^y - y_i\right)\left(\frac{\partial v_{ijk}^y}{\partial y_i} - 1\right), \label{eq:hify2}\\
\frac{\partial h_{i,\Delta_{\iota},\mc F}^{\mrm k}}{\partial z_i} &= 2r^2\frac{z_i}{\lambda_i^2} - 2\left(v_{ijk}^y - y_i\right)\frac{\partial v_{ijk}^y}{\partial z_i}, \label{eq:hifz2}\\
\frac{\partial h_{i,\Delta_{\iota},\mc F}^{\mrm k}}{\partial \lambda_i} &= -2r^2\frac{z_i^2}{\lambda_i^3} - 2\left(v_{ijk}^y - y_i\right)\frac{\partial v_{ijk}^y}{\partial \lambda_i}.
\end{align}
\end{subequations}
The partial derivatives of $v_{ijk}^y$ contained in \eqref{eqs:def_hiF_simp} can be derived from $\eqref{eq:radical_cent_y}$ as
%From \eqref{eqs:def_hiF_simp}, the partial derivatives of $v_{ijk}^y$ in \eqref{eqs:def_hiF_simp} are calculated as
\begin{subequations} \label{eqs:def_vy}
\begin{align}
    \frac{\partial v_{ijk}^y}{\partial x_i} &= \frac{x_i}{y_i}, \label{eq:dvx}\\
    \frac{\partial v_{ijk}^y}{\partial y_i} &= \frac{1}{2} -\frac{(x_i^2 - R_i^2) - (x_j^2 - R_j^2)}{2y_i^2}, \label{eq:dvy}\\
    \frac{\partial v_{ijk}^y}{\partial z_i} &= -\frac{r^2z_i}{y_i\lambda_i^2}, \label{eq:dvz}\\
    \frac{\partial v_{ijk}^y}{\partial \lambda_i} &= \frac{r^2z_i^2}{y_i\lambda_i^3},  \label{eq:dvl}
\end{align}
\end{subequations}
where $y_i \neq 0$ from Assumption~\ref{asm:non_parallel}. By combining \eqref{eqs:def_hiF_simp} and \eqref{eqs:def_vy} together, we obtain
\begin{subequations} \label{eqs:def_hiF_result}
\begin{align}
    \frac{\partial h_{i,\Delta_{\iota},\mc F}^{\mrm k}}{\partial x_i} &= -2x_i\frac{v_{ijk}^y}{y_i}, \label{eq:dhifx}\\
    \frac{\partial h_{i,\Delta_{\iota},\mc F}^{\mrm k}}{\partial y_i} &= \frac{(x_i^2 - y_i^2 - R_i^2) - (x_j^2 - R_j^2)}{y_i^2}v_{ijk}^y, \label{eq:dhify}\\
    \frac{\partial h_{i,\Delta_{\iota},\mc F}^{\mrm k}}{\partial z_i} &= 2r^2\frac{z_i}{\lambda_i^2}\frac{v_{ijk}^y}{y_i}, \label{eq:dhifz}\\
    \frac{\partial h_{i,\Delta_{\iota},\mc F}^{\mrm k}}{\partial \lambda_i} &=-2r^2\frac{z_i^2}{\lambda_i^3}\frac{v_{ijk}^y}{y_i}. \label{eq:dhifl}
\end{align}
\end{subequations}
The equations \eqref{eqs:def_hiF_result} reveal that $\partial h_{i,\Delta_{\iota},\mc F}^{\mrm k} / \partial \bfp_i = {\bf 0}_{4 \times 1}$ never happens except for $v_{ijk}^y=0$ in $\Sigma_d$, which means the radical center $v_{ijk}$ lies on the line $JK$.
This completes the proof.
%which is a non-stationary point.
%The equations \eqref{eqs:def_hiF_result} reveal that the condition satisfying $\partial h_{i, \mc F} / \partial \bfp_i = 0_{4 \times 1}$ is limited only when $v_{ijk}^y=0$ is satisfied under Assumption~\ref{asm:non_parallel}, $z_i>0$, and $\lam_i > 0$ hold.
\end{proof}

\section{Proof of Lemma~\ref{thm:h_JKI}} \label{ap:h_JKI}
\begin{proof}
%Similar with the definition of $h_{IJK}$ in \eqref{eq:h_IJK}, $h_{JKI} = A_{JKv_{ijk}} / A_{IJK}$.
We will drop the superscript $\circ^d$ in this appendix though the discussion is provided in $\Sigma_d$.
Similar with $h_{i,\Delta_{\iota},IJK}^{\mrm k}$ in \eqref{eq:h_IJK}, $h_{i,\Delta_{\iota},JKI}^{\mrm k}$ takes positive value if and only if $y_i$ and $v_{ijk}$ exist in the same half-plane divided by the line $JK$, which is regarded as $x$-axis in $\Sigma_d$. 
Noticing that $\triangle{JKv_{ijk}}$ and $\triangle{IJK}$ share the side $JK$ as in Fig.~\ref{fig:coord_tri}\subref{fig:JKv}, the following equations hold with denoting the area of $\triangle{IJK}$ as $A_{IJK}$.
\begin{align}
    h_{i,\Delta_{\iota},JKI}^{\mrm k} &= \frac{ \be_z^\top \left(\overrightarrow{IJ} \times \overrightarrow{Iv_{ijk}}\right)}
{\be_z^\top \left( \overrightarrow{IJ} \times \overrightarrow{IK} \right)}, \nonumber\\
&= \sgn(y_i v_{ijk}^y) \frac{A_{JKv_{ijk}}}{A_{IJK}}, \nonumber \\
&= \frac{v_{ijk}^y}{y_i}, \nonumber\\
&= \frac{(x_i^2 + y_i^2 - R_i^2) - (x_j^2 - R_j^2)}{2y_i^2}, \label{eq:hIJK_simp}
\end{align}
where we substitute \eqref{eq:radical_cent_y} to derive \eqref{eq:hIJK_simp}.
Note that $y_i\neq 0$ under Assumption~\ref{asm:non_parallel}. %The partial derivatives of $h_{i,\Delta_{\iota},JKI}^{\mrm k}$ are then calculated as 
Then, we obtain
\begin{subequations} \label{eqs:def_hJKI_result}
\begin{align}
	\frac{\partial h_{i,\Delta_{\iota},JKI}^{\mrm k}}{\partial x_i} &= \frac{x_i}{y_i^2}, \label{eq:hJKIx0}\\
	\frac{\partial h_{i,\Delta_{\iota},JKI}^{\mrm k}}{\partial y_i} &= -\frac{(x_i^2 - R_i^2) - (x_j^2 - R_j^2)}{y_i^2}, \label{eq:hJKIy0}\\
	\frac{\partial h_{i,\Delta_{\iota},JKI}^{\mrm k}}{\partial z_i} &= -\frac{r^2z_i}{y_i^2\lambda_i^2}, \label{eq:hJKIz0}\\
	\frac{\partial h_{i,\Delta_{\iota},JKI}^{\mrm k}}{\partial \lambda_i} &= \frac{r^2z_i^2}{y_i^2\lambda_i^2}.  \label{eq:hJKIl0}
\end{align}
\end{subequations}
From \eqref{eqs:def_hJKI_result}, $\partial h_{i,\Delta_{\iota},JKI}^{\mrm k} / \partial \bfp_i = {\bf 0}_{4 \times 1}$ never occurs if $z_i>0$. This completes the proof.
\end{proof}

\begin{figure}[t!]%\label{fig:exp_snap}
    \centering
    \subfloat[]
    {\makebox[0.48\hsize][c]{\includegraphics[width=0.48\linewidth]{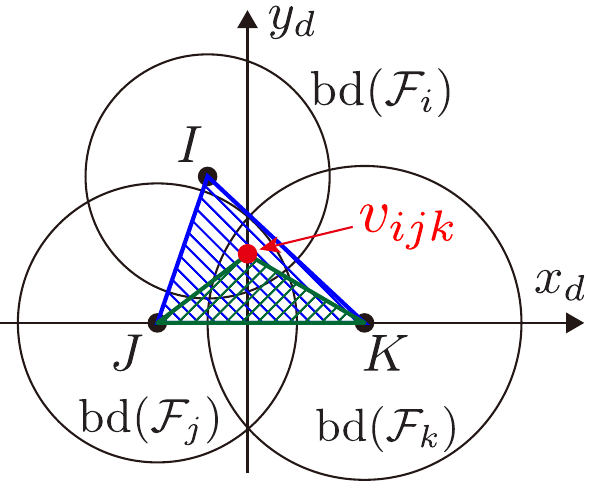}
    \label{fig:JKv}}} \quad 
    \subfloat[]
    {\makebox[0.48\hsize][c]{\includegraphics[width=0.48\linewidth]{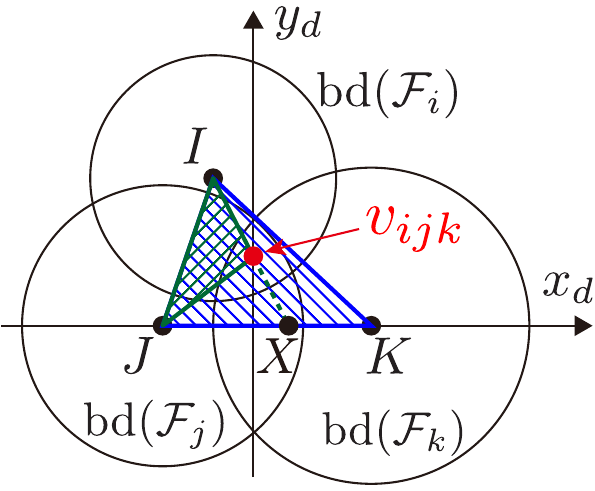}
    \label{fig:IJv}}} \quad
    \caption{The description of area ratio incorporated in (a) $h_{i,\Delta_{\iota},JKI}^{\mrm k}$ and (b) $h_{i,\Delta_{\iota},IJK}^{\mrm k}$, respectively. (a) illustrates the area ratio between $\triangle{IJK}$ and $\triangle{JKv_{ijk}}$, while (b) depicts the area ratio between $\triangle{IJK}$ and $\triangle{IJv_{ijk}}$.}
    \label{fig:coord_tri} 
\end{figure}

\section{Proof of Lemma~\ref{thm:h_IJK}} \label{ap:h_IJK}
\begin{proof}
We will drop the superscript $\circ^d$ in this appendix.
With the introduced coordinated system $\Sigma_d$, let us denote the $x$-intercept of the line $Iv_{ijk}$ as $X$, as illustrated in Fig.~\ref{fig:coord_tri}\subref{fig:IJv}.
Then, from the fact $\|h_{i,\Delta_{\iota},IJK}^{\mrm k}\| = A_{IJv_{ijk}} / A_{IJK}$,
\begin{align}
    h_{i,\Delta_{\iota},IJK}^{\mrm k} %&= \frac{A_{IJv_{ijk}}}{A_{IJK}} \nonumber \\
    = \frac{y_i - v_{ijk}^y}{y_i} \frac{X - x_j}{x_k - x_j}, \label{eq:hIJK1}
\end{align}
with
\begin{align}
    X = \frac{-v_{ijk}^y}{y_i - v_{ijk}^y}x_i. \label{eq:hIJK_X}
\end{align}
By substituting \eqref{eq:hIJK_X} into \eqref{eq:hIJK1}, we obtain
\begin{align}
    h_{i,\Delta_{\iota},IJK}^{\mrm k} \!=\! \frac{x_i + x_j}{2(x_k \!-\! x_j)} \!+\! \frac{(x_i \!-\! x_j)\left((x_i^2 \!-\! R_i^2) \!-\! (x_j^2 \!-\! R_j^2)\right)}{2(x_k - x_j)y_i^2} \label{eq:hIJK2}.
\end{align}
From \eqref{eq:hIJK2}, 
\begin{subequations} \label{eqs:def_hIJK_result}
\begin{align}
	\frac{\partial h_{i,\Delta_{\iota},IJK}^{\mrm k}}{\partial x_i}
    % &= \frac{1}{2(x_k - x_j)} + \frac{3x_i^2 - 2x_j x_i - r_i^2 - x_j^2 + r_j^2}{2(x_k - x_j)y_i^2} \nonumber\\
	&= \frac{3x_i^2 \!-\! 2x_j x_i \!+\!(y_i^2 \!-\! R_i^2) \!-\! (x_j^2 \!-\! R_j^2)}{2(x_k \!-\! x_j)y_i^2},  \label{eq:hIJKx0} \\
	\frac{\partial h_{i,\Delta_{\iota},IJK}^{\mrm k}}{\partial y_i} &= -\frac{(x_i \!-\! x_j)\left((x_i^2 \!-\! R_i^2) \!-\! (x_j^2 \!-\! R_j^2)\right)}{(x_k \!-\! x_j)y_i^3},  \label{eq:hIJKy0} \\
	\frac{\partial h_{i,\Delta_{\iota},IJK}^{\mrm k}}{\partial z_i} &= -\frac{(x_i \!-\! x_j)r^2z_i}{(x_k \!-\! x_j)y_i^2\lambda_i^2},  \label{eq:hIJKz0} \\
    \frac{\partial h_{i,\Delta_{\iota},IJK}^{\mrm k}}{\partial \lambda_i} &= \frac{(x_i \!-\! x_j)r^2z_i^2}{(x_k \!-\! x_j)y_i^2\lambda_i^3},  \label{eq:hIJKl0}
\end{align}
\end{subequations}
hold.
The equations \eqref{eqs:def_hIJK_result} reveal that $\partial h_{i,\Delta_{\iota},IJK}^{\mrm k} / \partial \bfp_i = {\bf 0}_{4 \times 1}$ never happens except for the following condition is satisfied.
\begin{align}
    \left(x_i^2 + y_i^2 - R_i^2 = x_j^2 - R_j^2\right) \quad \land \quad \left(x_i = x_j\right). \label{eq:condition_hIJK1}
\end{align}
From \eqref{eq:radical_cent_y}, the left condition of \eqref{eq:condition_hIJK1} corresponds with $v_{ijk}^y=0$. This signifies the radical axes $L_{ij}$ and $L_{ki}$ both pass through the origin. Since the right condition of \eqref{eq:condition_hIJK1} makes the line $JK$ and the radical axis $L_{ij}$ parallel, \eqref{eq:condition_hIJK1} is satisfied if and only if $L_{ij}$ lies on the line $JK$. This completes the proof.
%$v_{ijk}^y=0$, which means the radical center $v_{ijk}$ is on the line $JK$.
\end{proof}

\bibliographystyle{sty/IEEEtran}
\bibliography{biblio}

\end{document}